\tikzset{curve/.style={settings={#1},to path={(\tikztostart)
    .. controls ($(\tikztostart)!\pv{pos}!(\tikztotarget)!\pv{height}!270:(\tikztotarget)$)
    and ($(\tikztostart)!1-\pv{pos}!(\tikztotarget)!\pv{height}!270:(\tikztotarget)$)
    .. (\tikztotarget)\tikztonodes}},
    settings/.code={\tikzset{quiver/.cd,#1}
        \def\pv##1{\pgfkeysvalueof{/tikz/quiver/##1}}},
    quiver/.cd,pos/.initial=0.35,height/.initial=0}
\definecolor{goldenrod}{RGB}{218, 165, 32}
\DeclareMathOperator{\ar}{ar}
\DeclareMathOperator{\PCSP}{PCSP}
\DeclareMathOperator{\Pol}{Pol}
\renewcommand{\D}{\ensuremath{\mathbf{D}}}
\newcommand{\nat}{\ensuremath{\mathbb{N}}}
\newcommand{\F}{\ensuremath{\mathbf{F}}}
\newcommand{\NAE}{\ensuremath{\mathbf{NAE}}}
\renewcommand{\M}{\ensuremath{\mathcal{M}}}
\newcommand{\N}{\ensuremath{\mathcal{N}}}
\renewcommand{\A}{\ensuremath{\mathbf{A}}}
\newcommand{\B}{\ensuremath{\mathbf{B}}}
\newcommand{\I}{\ensuremath{\mathbf{I}}}
\newcommand{\LO}{\ensuremath{\mathbf{LO}}}
\newcommand\Crestrict[2]{{
  \left.\kern-\nulldelimiterspace 
  #1 
  \right|_{#2} 
 }}
\newcommand\pmc{\ensuremath{\textsc{PMC}}}
\theoremstyle{plain}
\newtheorem{theorem}{Theorem}
\newtheorem{lemma}[theorem]{Lemma}
\newtheorem*{lemma*}{Lemma}
\newtheorem{proposition}[theorem]{Proposition}
\newtheorem*{proposition*}{Proposition}
\newtheorem{corollary}[theorem]{Corollary}
\newtheorem*{corollary*}{Corollary}
\theoremstyle{definition}
\newtheorem{definition}[theorem]{Definition}
\newtheorem{remark}[theorem]{Remark}
\newtheorem{example}[theorem]{Example}
\begin{document}

\author{Tamio-Vesa Nakajima\\
University of Oxford\\
\texttt{tamio-vesa.nakajima@cs.ox.ac.uk}
\and
Stanislav \v{Z}ivn\'y\\
University of Oxford\\
\texttt{standa.zivny@cs.ox.ac.uk}
}

\title{Linearly ordered colourings of hypergraphs\thanks{An extended abstract of
part of this work (with weaker both tractability and intractability results) appeared in the Proceedings of ICALP 2022~\cite{NZ22:icalp}. This project has received funding from the European Research Council (ERC) under the European Union's Horizon 2020 research and innovation programme (grant agreement No 714532). The paper reflects only the authors' views and not the views of the ERC or the European Commission. The European Union is not liable for any use that may be made of the information contained therein. This work was also supported by UKRI EP/X024431/1 and a Clarendon Fund Scholarship. For the purpose of Open Access, the authors have applied a CC BY public copyright licence to any Author Accepted Manuscript version arising from this submission. All data is provided in full in the results section of this paper.}}

\date{}
\maketitle

\begin{abstract}
A linearly ordered (LO) $k$-colouring of an $r$-uniform hypergraph 
assigns an integer from $\{1, \ldots, k \}$ to every vertex so that, in every edge, the
(multi)set of colours has a unique maximum. Equivalently, for $r=3$, if two vertices in
an edge are assigned the same colour, then the third vertex is assigned a
larger colour (as opposed to a different colour, as in classic
non-monochromatic colouring).
Barto, Battistelli, and Berg~[STACS'21] studied LO
colourings on $3$-uniform hypergraphs in the context of promise constraint
satisfaction problems (PCSPs).
We show two results. 

First, given a 3-uniform hypergraph that admits an LO $2$-colouring, one can find in polynomial time an LO $k$-colouring with 
$k=O(\sqrt[3]{n \log \log n / \log n})$.

Second, given an $r$-uniform hypergraph that admits an LO $2$-colouring, we establish \NP-hardness of finding an LO $k$-colouring for every constant uniformity $r\geq k+2$.
In fact, we determine  
relationships between polymorphism minions for all uniformities $r\geq 3$, which reveals a key difference between $r<k+2$ and $r\geq k+2$ and which may be of independent interest. 
Using the algebraic approach to PCSPs, we actually show a more general result establishing \NP-hardness of finding an LO $k$-colouring for LO $\ell$-colourable $r$-uniform hypergraphs for $2 \leq \ell \leq k$ and $r \geq k - \ell + 4$.
\end{abstract}

\section{Introduction}
\label{sec:intro}

The computational complexity of the \emph{approximate graph colouring}
problem~\cite{GJ76} is an outstanding open problem in theoretical computer
science. Given a $3$-colourable graph $G$ on $n$ vertices, is it possible to
find a $k$-colouring of $G$? On the tractability side, the current best is a
polynomial-time algorithm of Kawarabayashi and Thorup \cite{KT17} that finds a $k$-colouring
with $k=k(n)=n^{0.199}$ colours. On the intractability side, the
state-of-the-art for constant $k$ has only recently been improved from
$k=4$, due to Khanna, Linial, and Safra~\cite{KLS00} and Guruswami and
Khanna~\cite{GK04} to $k=5$, due to Barto, Bul\'{i}n, Krokhin, and Opr\v{s}al~\cite{BBKO21}. The authors of~\cite{BBKO21} introduced a general algebraic
methodology for studying the computational complexity of so-called promise
constraint satisfaction problems (PCSPs). Going beyond the work in~\cite{BBKO21}, for
graphs with a promised higher chromatic number than three, the current best
intractability results for constantly many extra colours is due to
Wrochna and \v{Z}ivn\'{y}~\cite{WZ20}, building on the work of
Huang~\cite{Hua13}.
Hardness of approximate graph colouring has been established under stronger assumptions. In particular, 
Dinur, Mossel, and Regev~\cite{Dinur09:sicomp} showed NP-hardness under a non-standard variant of the UGC Conjecture of Khot~\cite{Khot02stoc} and Guruswami and Sandeep~\cite{GS20:icalp} (building on~\cite{WZ20}) showed NP-hardness under the $d$-to-$1$ conjecture~\cite{Khot02stoc} for any fixed $d$.

The situation is much better understood for the approximate \emph{hypergraph} colouring
problem with the classic notion of a colouring leaving no edge monochromatic.
A celebrated result of Dinur, Regev, and Smyth established
that finding a $k$-colouring of a $3$-uniform hypergraph that is $\ell$-colourable
is \NP-hard for every constant $2\leq \ell\leq k$~\cite{DRS05} (and this also
implies the same result on $r$-uniform hypergraphs for every constant uniformity $r\geq 3$). This was also recently proved, with different techniques, by Wrochna~\cite{Wrochna22}.

Different variants of approximate hypergraph colourings, such as rainbow
colourings, were studied, e.g.
in~\cite{ABP20,BG16-graph,GL18,GS20:rainbow,BG21:talg}, but most complexity
classifications related to these problems are open.
Some intractability results are also known for colourings with a super-constant
number of colours.  For graphs, conditional hardness was established by
Dinur and Shinkar~\cite{Dinur10:approx}. For hypergraphs, intractability results were
obtained by Bhangale~\cite{Bhangale18:icalp} and by Austrin, Bhanghale, and Potukuchi~\cite{ABP19}.

Barto, Battistelli, and Berg have recently studied systematically a certain type of PCSPs on non-Boolean domains and
identified a very natural 
variant of $k$-colourings of $3$-uniform hypergraphs, called \emph{linearly ordered} (LO)
$k$-colourings~\cite{Barto21:stacs}. A $k$-colouring of a $3$-uniform hypergraph
with colours $[k] = \{1,\ldots,k\}$ is an LO colouring if, for every edge, it holds
that, if two vertices are coloured with the same colour, then the third vertex is
coloured with a larger colour. (In the classic non-monochromatic colouring, the
requirement is that the third vertex should be coloured with a \emph{different}
colour, but not necessarily a larger one.) An LO $2$-colouring is thus 
 a ``$1$-in-$3$'' colouring. Barto et al.~asked whether finding an LO
 $k$-colouring of a $3$-uniform hypergraph is \NP-hard for a fixed $k\geq 3$ if the input hypergraph is
 promised to admit an LO $2$-colouring.

\paragraph{Contributions}
While we do not resolve the question raised in~\cite{Barto21:stacs}, we obtain non-trivial results, both positive (algorithmic) and negative (hardness).

First, we present an efficient algorithm for finding an LO $k$-colouring of a 3-uniform hypergraph that admits an LO $2$-colouring with 
$k=O(\sqrt[3]{n \log \log n / \log n})$.
As mentioned above, there are only a few results on hypergraph colourings with
super-constantly many colours, e.g.~\cite{Krivelevich03:ja,Krivelevich01:ja,Chlamtac08:approx} that deal with hypergraph non-monochromatic colourings.

Second, we establish 
\NP-hardness of finding an LO $k$-colouring of an $r$-uniform hypergraph if an LO $2$-colouring is promised for every constant uniformity $r\geq k + 2$. In fact, we prove a more general result that finding an LO $k$-colouring of an $r$-uniform hypergraph admitting an LO $\ell$-colouring is \NP-hard for every constant $2\leq \ell\leq k$ and $r\geq k - \ell + 4$.
This result is based the algebraic approach to PCSPs and in particular on minions~\cite{BBKO21}.
As a matter of fact, we establish 
relationships of the polymorphism minions of
the LO $2$- vs $k$-colourings
on $r$-uniform hypergraphs,
which may be of independent interest. This gives the  advertised intractability result but
also an impossibility result on certain types of polynomial-time reductions (namely pp-constructions~\cite{BBKO21}) between LO 2- vs $k$-colourings on $r$-uniform hypergraphs with uniformity  
$r\geq k+2$ and $r<k+2$, cf. the discussion in Section~\ref{sec:hardness}.

\paragraph{Related work}
A classic non-monochromatic colouring of a hypergraph requires that no edge should be monochromatic. 
Rainbow colourings, introduced in~\cite{AGH17}, require that every colour should occur at least once in every edge, cf. also~\cite{GL18,ABP20,GS20:rainbow}. LO colourings are somewhat in-between non-monochromatic and rainbow colourings, requiring that the largest colour in every edge is used exactly once. LO colourings --
 as well as a related notion of \emph{conflict-free} colourings, which require that at least one colour in each edge should occur  exactly once --
have been studied under the name of \emph{unique maximum} colourings, both for graphs (where the requirement is on paths)~\cite{Cheilaris11:jda}
and hypergraphs~\cite{Cheilaris13:sidma}.
As far as we know, our results are incomparable with existing results from the rich body of literature on unique maximum colourings. We refer the reader to the papers cited above and the references therein.

\section{Preliminaries}
\label{sec:prelims}

An \emph{$r$-uniform hypergraph} $H$ is a pair $(V, E)$ where $V$ is the set of \emph{vertices} of the hypergraph, and $E \subseteq V^r$ is the set of \emph{edges} of the hypergraph. In our context the order of the vertices in each edge is irrelevant. We will allow vertices to appear multiple times in edges; however, we exclude edges of the form $(v, \ldots, v)$ --- such edges would be impossible in the problems we will consider anyway. We say that two distinct vertices $u, v$ are \emph{neighbours} if they both belong to some edge $e \in E$. Let $N(u)$ be the set of neighbours of $u$. Call a set $S$ an \emph{independent set} of a hypergraph $H$ if and only if no two members of $S$ are neighbours. 

A \emph{linearly ordered} (LO)
$k$-colouring of an $r$-uniform hypergraph $H = (V, E)$ is an assignment $c : V
\to [k]$ of colours from $[k] = \{1, \ldots, k\}$ to the
vertices of $H$ such that, for each edge $(v_1, \ldots, v_r) \in E$, the
sequence $c(v_1), \ldots, c(v_r)$ has a unique maximum. We omit the ``$k$-''
if the number of colours is unimportant.

\begin{example}\label{ex:hyper}
Consider the hypergraph $H = (V, E)$, where $V =[4] = \{1,2,3,4\}$ and $E = \{(1,
2, 3), (1, 2, 4)\}$.
The assignment $c = \{ 1 \mapsto 1, 2 \mapsto 1, 3 \mapsto 2, 4 \mapsto 2 \}$
is an LO 2-colouring, and $c'(x) = x$ is an LO 4-colouring. On the other hand, $c''(x) = 3 - c(x)$ is not an LO colouring at all, since both of the edges have two equal maximal elements when mapped through $c''$.
\end{example}

Finding an LO $k$-colouring, for constant $k\geq 3$, of a $3$-uniform hypergraph
that admits an LO $2$-colouring was studied by Barto et al.~\cite{Barto21:stacs}
in the context of promise constraint satisfaction problems (PCSPs), which we
define next.

\paragraph{Promise CSPs}
Promise CSPs have been introduced in the works of Austrin, Guruswami, and H{\aa}stad~\cite{AGH17} and Brakensiek and Guruswami~\cite{BG21:sicomp}. We follow the notation and terminology of Barto, Bul\'in, Krokhin, and
Opr\v{s}al~\cite{BBKO21}, adapted to structures consisting of a single relation.

An \emph{$r$-ary structure} is a pair $\mathbf{D} = (D, R^\D)$, where $R^\D \subseteq D^r$ and $D$ is finite. We call $D$ the \emph{domain} of the structure, and $R^\D$ the \emph{relation} of the structure. For two $r$-ary structures $\A, \B$, a \emph{homomorphism from $\A$ to $\B$} is a function $h : A \to B$ such that, for each $(a_1, \ldots, a_r) \in R^\A$, we have $(h(a_1), \ldots, h(a_r)) \in R^\B$. This is written $h : \A \to \B$. If we wish to assert only the existence of such a homomorphism, we write $\A \to \B$. 

We now define fixed-template \emph{promise constraint satisfaction problems} (PCSPs). Let $\A$ and $\B$ be two $r$-ary structures with $\A\to\B$; we call $\A$ and $\B$ \emph{templates}. In the \emph{search} version of the PCSP with templates $\A$ and $\B$, denoted by $\PCSP(\A,\B)$, the task is: Given an $r$-ary structure $\I$ with the promise that $\I\to \A$, find a homomorphism from $\I$ to $\B$, which  exists by the composition of promised homomorphism from $\I$ to $\A$ and the assumed homomorphism from $\A$ to $\B$. In the \emph{decision} version of $\PCSP(\A,\B)$, the task is: Given an $r$-ary structure $\I$, output \textsc{yes} if $\I \to \A$, and output \textsc{no} if $\I \not \to \B$. Observe that the decision version can be reduced to the search version: to solve the decision version, run an algorithm for the search version, then check if it gives a correct answer. However, it is not known whether the search version is polynomial-time reducible to the decision version in general. In any case, we will use $\PCSP(\A, \B)$ to mean the decision version when proving hardness, and the search version when showing algorithmic results.

LO colourings can be readily seen as PCSPs. First, observe that an $r$-uniform
hypergraph can be seen as an $r$-ary structure. Second, define an $r$-ary
structure $\LO^r_k$ with domain $[k]$, and whose relation contains
a tuple $(c_1, \ldots, c_r)$ if and only if the sequence $c_1, \ldots, c_r$
has a unique maximum. Then, an LO $k$-colouring of an $r$-uniform hypergraph $\mathbf{H}$
is the same as a homomorphism from $\mathbf{H}$ (viewed as an $r$-ary structure) to
$\LO^r_k$. Thus, the problem of finding an LO $k$-colouring of an $r$-uniform
hypergraph that has an LO $2$-colouring is the same as $\PCSP(\LO_2^r, \LO_k^r)$.

\paragraph{Results}
In Section~\ref{sec:alg}, we study the computational complexity of $\PCSP(\LO_2^3, \LO_{k(n)}^3)$, were
$k(n)$ depends on the input size; here $n$ denotes the number of vertices of the input ($3$-uniform) hypergraph. As in
Example~\ref{ex:hyper}, this is obviously possible for $k(n) = n$. As our first
contribution, we will present in Theorem~\ref{thm:mainAlgo} an efficient algorithm with
$k(n)=O(\sqrt[3]{n \log \log n / \log n})$ colours.

In Section~\ref{sec:hardness}, we study the computational complexity of $\PCSP(\LO_\ell^r, \LO_k^r)$ for constant
uniformity $r\geq 3$ and constant $\ell$ and $k$ with $2\leq \ell \leq k$. We establish
\NP-hardness of $\PCSP(\LO_\ell^r,\LO_k^r)$
for every constant $2\leq \ell\leq k$ and constant uniformity $r\geq k - \ell + 4$, cf. Corollary~\ref{cor:colours}. Our hardness results are based on the algebraic theory of
minions~\cite{BBKO21}, briefly introduced in Section~\ref{sec:minions}. In fact,
we establish relationships between polymorphism minions of
$\PCSP(\LO_2^r,\LO_k^r)$ for all $r\geq 3$,  cf.~Theorem~\ref{thm:minions}.

This is the full version of an ICALP 2022 paper~\cite{NZ22:icalp}, which showed weaker
tractability and intractability results.
Firstly, in~\cite{NZ22:icalp} \NP-hardness of only $\PCSP(\LO_k^r,\LO_{k+1}^r)$ was shown for
$k\geq 2$ and $r\geq 5$.
Secondly, in~\cite{NZ22:icalp} we designed an algorithm that, for a given $3$-uniform hypergraph that admits an LO $2$-colouring, finds an LO $k$-colouring with $k=O(\sqrt{n \log \log n}/\log n)$.\footnote{The previous version of this paper~\cite{NZ22:arxivV1} also contains a weaker result for $4$-uniform hypergraphs.}
Some of the techniques used to establish the tractability result in the current paper are different in character to those used for approximate graph colouring, which was the main technique used in the tractability result in~\cite{NZ22:icalp}.
In particular, the techniques in~\cite{NZ22:icalp} (see also the full version~\cite{NZ22:arxivV1}) built on~\cite{Wigderson83:jacm,Berger90:algoritmica}. We note that we do not know how to apply the SDP-based methods from~\cite{Karger98:jacm} directly as they seem specific to approximate graph colouring and less suited for LO colourings.\footnote{Interestingly, the approximability result for the independent set problem from~\cite{Halldarsson00:JGAA}, which we will use as a black-box in Section~\ref{sec:alg}, does build on~\cite{Karger98:jacm}.}

\section{Algorithmic results}
\label{sec:alg}

In this section we will prove the tractability results advertised above. In particular, we prove the following.

\begin{theorem}\label{thm:mainAlgo}
There is a polynomial-time algorithm that, if given an LO 2-colourable 3-uniform hypergraph with $n$ vertices, finds an LO $O(\sqrt[3]{n \log \log n / \log n})$-colouring.
\end{theorem}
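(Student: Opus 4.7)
The plan is to design an iterative algorithm. In each round we identify a hypergraph independent set $S$ in the current sub-hypergraph, assign $S$ the current largest unused colour value, and recurse on $V \setminus S$. This is well-defined because LO 2-colourability is preserved under taking induced sub-hypergraphs. The observation that legitimises the scheme is that in any LO colouring of a 3-uniform hypergraph, the top colour class must be a hypergraph independent set (two vertices of one edge receiving the top colour would violate the unique-maximum requirement). Hence assigning \emph{any} hypergraph independent set the current highest colour is safe: every hyperedge meeting $S$ receives its strict maximum at the single vertex of $S$ it contains, while hyperedges inside $V\setminus S$ are handled recursively.

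It therefore suffices to prove the following subroutine: given an LO 2-colourable 3-uniform hypergraph on $n'$ vertices, find in polynomial time a hypergraph independent set of size $\Omega\bigl(n'/\sqrt[3]{n'\log\log n'/\log n'}\bigr)$. A short continuous analysis of the recurrence $n_{i+1} = n_i - \Omega(n_i/k(n_i))$ with $k(n)=\Theta(\sqrt[3]{n\log\log n/\log n})$ integrates to show that iterating this subroutine terminates after $O(\sqrt[3]{n\log\log n/\log n})$ rounds, yielding the claimed colour count.

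To implement the subroutine, the plan is to combine a structural property of LO 2-colourings with Halld\'orsson's polynomial-time independent-set approximation algorithm~\cite{Halldarsson00:JGAA} (which internally builds on the Karger--Motwani--Sudan SDP~\cite{Karger98:jacm}) used as a black box. The structural property: for every vertex $v$ of $H$ and every LO 2-colouring of $H$, either $v$ is coloured~2 (apex), in which case all neighbours of $v$ are forced to be coloured~1 (base); or $v$ is coloured~1 (base), in which case the link graph $L_v$ on $N(v)$, with an edge $\{a,b\}$ for each hyperedge $(v,a,b)$ of $H$, is bipartite, with one part being the apex-neighbours and the other the base-neighbours of $v$. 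In particular, any vertex whose link graph is not bipartite must be an apex in every LO 2-colouring, hence lies in every admissible top colour class. The plan is to translate this structural information, together with degree information in the co-occurrence graph of $H$, into a derived graph on which~\cite{Halldarsson00:JGAA} can then be invoked as a black box to extract the required hypergraph independent set.

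The main obstacle, and the origin of the cube-root exponent in $\sqrt[3]{n\log\log n/\log n}$, is calibrating a three-way trade-off between the maximum vertex degree in the co-occurrence graph of $H$, the approximation guarantee provided by~\cite{Halldarsson00:JGAA} on the derived graph, and the savings obtainable from link-graph bipartiteness. Getting these three ingredients to balance so as to produce precisely the $\log\log n/\log n$ factor, and checking that the per-round density bound $n'/k(n')$ is preserved through the recursion without accumulating extra logarithmic factors, is the most delicate technical step; it is also what distinguishes the present argument from the Wigderson-style approach used in the ICALP version~\cite{NZ22:icalp}, which yielded only the weaker $\sqrt{n\log\log n}/\log n$ bound.
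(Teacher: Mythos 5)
Your proposed reduction does not cover the dense case, and this is where the argument has a genuine gap. You assert that it suffices to repeatedly find a hypergraph independent set of size $\Omega\bigl(n'/\sqrt[3]{n'\log\log n'/\log n'}\bigr)$, colour it with the current top colour, and recurse. But consider a linear LO 2-colourable hypergraph with $|E|=\Theta(n^2)$: by linearity the base colour class $S$ (the vertices coloured $1$) satisfies $|S|^2\geq 2|E|$, so $|S|=\Theta(n)$, and the apex class $T=V\setminus S$ (the canonical hypergraph independent set) may therefore have size $o(n)$, far below the $\Omega(n/k)$ you need. Moreover, the black box from~\cite{Halldarsson00:JGAA} applied to the primal graph has guarantee $\Omega(s\log d/(d\log\log d))$ with $d=\Theta(|E|/n)$; plugging in $|E|=\Theta(n^2)$, $s\leq n$ gives at best $\Theta(\log n/\log\log n)$ vertices, which is nowhere near $n/k$. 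So neither the existential nor the algorithmic half of your subroutine goes through for dense instances, and the structural observations you mention (forced apexes via non-bipartite link graphs, co-occurrence degrees) are not developed enough to rescue this.

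What the paper does differently is precisely to split on density and introduce a second kind of progress that is not an independent set. After linearising, it uses $\Delta$ as a tunable threshold: when $|E|=O(n\Delta)$ it finds an independent set via~\cite{Halldarsson00:JGAA} much as you suggest and colours it with a \emph{large} colour; but when $|E|=\Omega(n\Delta)$ it instead seeks a set of $\Omega(\sqrt{n\Delta})$ vertices meeting every edge in $0$ or $2$ vertices and colours that set with a \emph{small} colour. The existence of such a set follows from the LO $2$-colouring and linearity (the base class, of size $\Omega(\sqrt{n\Delta})$), and it is found efficiently by encoding the parity constraints as linear equations over $\mathbb{Z}_2$ and running a constant-factor MaxOnes approximation. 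Balancing $\sqrt{n/\Delta}$ against $\Delta\log\log n/\log n$ gives $\Delta=n^{1/3}(\log n)^{2/3}(\log\log n)^{-2/3}$ and the advertised bound. Your proposal has the right skeleton for the sparse branch, but the entire dense branch, including the MaxOnes reduction and the idea of assigning a small rather than a large colour, is missing; without it the recursion stalls on dense inputs.
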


Our algorithm will use the notion of \emph{linear hypergraphs}: we call a 3-uniform hypergraph \emph{linear} if no two edges intersect in 2 or more vertices.\footnote{This needs to hold even for a pair of non-distinct vertices. For instance, we forbid the edges $(1, 1, 2)$ and $(1, 1, 3)$ from existing simultaneously in the hypergraph.}

\begin{proposition}\label{prop:linearisation}
There is a polynomial-time algorithm that, if given an LO 2-colourable 3-uniform hypergraph $H$, constructs an LO 2-colourable linear 3-uniform hypergraph $H'$ with no more vertices than $H$ such that, if given an LO $k$-colouring of $H'$, one can compute in polynomial time an LO $k$-colouring of $H$.
\end{proposition}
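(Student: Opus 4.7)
The plan is to iteratively merge pairs of vertices that are forced to agree in every LO 2-colouring of $H$, using the following structural fact about LO 2-colourings of 3-uniform edges: exactly one position of each edge receives colour $2$ (and this position must have multiplicity one in the edge's multiset), while every remaining position receives colour $1$; otherwise the maximum would not be unique.

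The central claim is that if two distinct edges $e_1, e_2$ of $H$ share two vertices in the multiset sense, presented (after reordering) as $e_1 = (u, v, x_1)$ and $e_2 = (u, v, x_2)$ with $u, v$ the shared pair (possibly $u = v$, and possibly $x_i \in \{u, v\}$), then every LO 2-colouring $c$ of $H$ satisfies $c(x_1) = c(x_2)$. This follows from a quick case split: either one of $c(u), c(v)$ equals $2$, in which case both edges independently force $c(x_1) = c(x_2) = 1$; or $c(u) = c(v) = 1$, in which case the unique-maximum condition in each edge forces $c(x_1) = c(x_2) = 2$. Degenerate configurations such as $e_1 = (u, u, v)$ together with $e_2 = (u, v, v)$, which would force both $c(v) > c(u)$ and $c(u) > c(v)$, would already contradict LO 2-colourability of $H$ and hence cannot occur. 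The construction is then: while there exist two distinct edges sharing two vertices, identify the corresponding $x_1$ and $x_2$, and replace the hypergraph by the quotient. Each identification strictly decreases the vertex count, so the procedure terminates within $|V(H)|$ steps and runs in polynomial time. The final $H'$ is linear by construction, has at most $|V(H)|$ vertices, and remains LO 2-colourable because every merged pair is forced to agree under any LO 2-colouring of $H$, and such a colouring therefore descends to $H'$.

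For the decoding direction, given an LO $k$-colouring $c'$ of $H'$, take the pull-back $c(v) := c'([v])$, where $[v]$ denotes the equivalence class of $v$ under the identification. Each edge $(v_1, v_2, v_3)$ of $H$ maps to the edge $([v_1], [v_2], [v_3])$ of $H'$, possibly with repeated entries (which the hypergraph definition permits, since not all three can coincide --- that would contradict LO 2-colourability of $H$), and the unique-maximum condition satisfied by $c'$ on the image edge immediately yields the unique-maximum condition for $c$ on the original edge. The main subtlety throughout lies in the bookkeeping around edges whose multisets have repeated vertices, both those already present in $H$ and those potentially created by merging, but phrasing the central claim in terms of multisets rather than sets handles all of these cases uniformly.
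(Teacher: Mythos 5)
Your proposal is correct and takes essentially the same approach as the paper's proof: repeatedly merge the two vertices that complete a shared pair in two different edges, since these are forced to receive the same colour in any LO 2-colouring, and then undo the merges to decode. Your version is more explicit about the multiset bookkeeping and degenerate cases, but the underlying idea and algorithm are identical.
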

\begin{proof}
Suppose $H$ is not linear already. Then it has edges $(x, y, a)$ and $(x, y, b)$. In all LO 2-colourings of $H$, $a$ and $b$ must be assigned the same colours. Therefore it is safe to merge $a$ and $b$. Repeat this procedure until $H$ is linear. To find an LO $k$-colouring of $H$ given one of $H'$, simply undo the merges.
\end{proof}

In order to more elegantly express the algorithm we now propose, we will adapt the notion of ``progress'', introduced by Blum~\cite{Blum94}, to approximate LO colouring.

\begin{definition}
To make progress towards an LO $O(f(n))$-colouring of a 3-uniform hypergraph consists of one of the following:
\begin{description}
    \item[Type 1] Finding an independent set of size $\Omega(n / f(n))$.
    \item[Type 2] Finding a set of $\Omega(n / f(n))$ vertices that intersects each edge in 0 or 2 vertices.
\end{description}
\end{definition}

That this notion of progress is useful is shown in the following proposition.

\begin{proposition}
Suppose there is a polynomial-time algorithm that, if given a \emph{linear} LO 2-colourable 3-uniform hypergraph, makes progress towards an LO $O(f(n))$-colouring, where $f(n) = \Theta(n^\alpha (\log n)^\beta (\log \log n)^\gamma)$ and $\alpha > 0$. Then, there is a polynomial-time algorithm that finds an LO $O(f(n))$-colouring of an LO 2-colourable 3-uniform hypergraph.
\end{proposition}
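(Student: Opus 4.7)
The plan is to apply Proposition~\ref{prop:linearisation} to reduce to a linear LO 2-colourable 3-uniform hypergraph $H'$ on at most $n$ vertices, and then to build an LO colouring of $H'$ top-down by iteratively invoking the progress-making algorithm. I maintain a shrinking sub-hypergraph $H^\star$ of $H'$ (initially $H'$) and a next-available top colour $C$. At each step I invoke the progress algorithm on $H^\star$ to obtain a set of $\Omega(m/f(m))$ vertices, where $m = |V(H^\star)|$. In the type-1 case this set is an independent set $S$, and I assign every vertex of $S$ the single colour $C$: every hyperedge meeting $S$ contains exactly one $S$-vertex, so its unique maximum is pinned at $C$ (the other two vertices will later receive strictly smaller colours), and hence $S$ together with all incident hyperedges can be deleted from $H^\star$. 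In the type-2 case I assign the returned set a small constant-sized block of consecutive top colours, chosen so that the two vertices of every two-in-set hyperedge end up with different colours, again making its maximum unique; I then delete this set and the affected hyperedges. Since sub-hypergraphs of a linear LO 2-colourable hypergraph remain linear and LO 2-colourable, the hypotheses of the progress algorithm are preserved throughout the process.

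To bound the colour budget, I organise the iterations into phases: a phase begins with $m$ vertices and ends as soon as the vertex count has dropped to $\lceil m/2 \rceil$. Since each iteration removes $\Omega(m/f(m))$ vertices, a standard geometric estimate shows a phase consists of $O(f(m))$ iterations and therefore uses $O(f(m))$ new colours. Summing over phases with starting sizes $n, n/2, n/4, \ldots$,
\[
\sum_{i = 0}^{\lceil \log_2 n \rceil} O\bigl(f(n/2^i)\bigr) \;=\; O(f(n)),
\]
where the geometric decay coming from $\alpha > 0$ dominates the mild variation of the polylog factors over the summation range. The overall running time is polynomial for the same reason, since each iteration is polynomial by hypothesis and the total iteration count is of the same order as the colour count.

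The main obstacle is handling the type-2 case: a type-2 set $T$ is not an independent set of $H^\star$, so naively assigning it a single top colour would violate the unique-maximum condition on any hyperedge with two vertices in $T$. The remedy is to use a constant-sized block of consecutive top colours on $T$ so that the two $T$-vertices of every such hyperedge receive distinct colours. This reduces to properly colouring the auxiliary graph $G_T$ on $T$ whose edges are the two-in-$T$ vertex pairs appearing in hyperedges of $H^\star$; one argues that $G_T$ admits an $O(1)$-colouring by exploiting the LO 2-colouring that $H^\star$ is promised to admit (which partitions $T$ usefully) or, alternatively, by strengthening the progress algorithm to deliver a type-2 set together with such a colouring.
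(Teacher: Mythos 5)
There is a genuine gap in the type-2 case, and it stems from a structural choice that the paper avoids. You colour every progress set from the top down, which forces the two in-$T$ vertices of a $2$-in-$T$ hyperedge to receive \emph{distinct} colours (otherwise the maximum is tied between them, since the third vertex will be coloured with a strictly smaller colour later). You reduce this to an $O(1)$-colouring of the auxiliary graph $G_T$, but you never establish that $G_T$ has bounded chromatic number, and neither of your two fallbacks closes the hole: the promised LO $2$-colouring does \emph{not} give a useful partition of $T$, because for a hyperedge $(u,v,w)$ with $u,v\in T$ and $w\notin T$ it is entirely possible that $w$ gets colour $2$ and both $u,v$ get colour $1$, so the $G_T$-edge $\{u,v\}$ lies inside a single class; and ``strengthening the progress algorithm'' is a non-trivial change that would require re-proving Propositions~\ref{prop:smallDeg} and~\ref{prop:bigDeg}.

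The paper's proof sidesteps all of this by colouring from \emph{both} ends of the palette: a type-1 independent set gets the largest remaining colour, while a type-2 set gets the \emph{smallest} remaining colour. For a $2$-in-$T$ hyperedge, its two $T$-vertices then share the smallest colour and the third vertex, coloured in a later recursive call, receives a strictly larger colour and is the unique maximum. So a type-2 set needs only a single colour and no auxiliary graph colouring is needed at all. Your colour-budget accounting (geometric halving, $O(f(n))$ colours per phase, $\alpha>0$ giving a convergent sum) matches the paper's use of the Master Theorem and is fine; the flaw is purely in the assignment direction for type-2 sets. Switching to the two-sided scheme would repair your argument and bring it in line with the paper's.
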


\begin{proof}
We will describe a recursive procedure that achieves our goal. 
Due to Proposition~\ref{prop:linearisation}, we can assume that our input hypergraph is linear.

Now, at each recursive call, apply our progress-making algorithm. If we make type 1 progress, then colour the set of vertices with a \emph{large colour}; if we make type 2 progress, then colour the set of vertices with a \emph{small colour}. In any case, remove all edges incident to the coloured set.

First, note that this algorithm always produces a valid LO colouring. It suffices to note that for type 1 progress, any edge that was not coloured recursively intersects the maximum colour in one vertex; whereas for type 2 progress, any edge that was not coloured recursively intersects the minimum colour in two vertices. All such edges are thus coloured correctly.

To see why we use only $O(f(n))$ colours, consider the number of iterations needed to reduce the number of uncoloured vertices in the hypergraph by half --- it is not difficult to see, for our choice of $f$, that it takes at most $O(n / (n / f(n))) = O(f(n))$ iterations. Applying the Master Theorem \cite{CLRS}, noting that $f(n) = \Omega(n^{\alpha'})$ for some $\alpha' > 0$, it follows that there are $O(f(n))$ iterations overall. (We can take any $\alpha'$ such that $0 < \alpha' < \alpha$.)
This bound on the number of iterations also immediately gives us polynomial run time.
\end{proof}

With this framework in hand, we are ready to prove our result. We will essentially split into two cases: one for sparse hypergraphs, and one for dense hypergraphs.

\begin{proposition}\label{prop:smallDeg}
Fix $\Delta = \Theta(n^
\alpha (\log n)^\beta (\log \log n)^\gamma)$, with $\alpha > 0$. There is a polynomial-time algorithm that makes progress towards an LO $O(\Delta \log \log n / \log n)$-colouring of an LO 2-colourable 3-uniform linear hypergraph if $|E| = O(n \Delta)$.
\end{proposition}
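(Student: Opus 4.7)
The plan is to run two procedures corresponding to the two complementary cases based on the colour-class sizes of the promised LO 2-colouring $(V_1,V_2)$ of $H$, and output the larger candidate. Since $|V_1|+|V_2|=n$, at least one of $|V_1|,|V_2|$ is at least $n/2$, so one of the procedures must succeed. For the case $|V_2|\geq n/2$ (targeting Type~1 progress), I form the shadow graph $G=(V,E_G)$ of $H$, where $\{u,v\}\in E_G$ whenever $u,v$ are neighbours in $H$; by linearity of $H$, $|E_G|=3|E(H)|=O(n\Delta)$. Discarding the at most $O(n/C)$ vertices of $G$-degree exceeding $C\Delta$ (for a large constant $C$) yields a subgraph $G'$ on $\geq n/2$ vertices of maximum degree $O(\Delta)$; since $V_2$ is independent in $G$, we have $\alpha(G')\geq|V_2|-O(n/C)\geq n/3$. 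Feeding $G'$ into Halld\'orsson's independent-set approximation~\cite{Halldarsson00:JGAA}, which on a graph of maximum degree $d$ returns an independent set of size $\Omega(\alpha\log d/(d\log\log d))$, produces an independent set of size $\Omega(n\log\Delta/(\Delta\log\log\Delta))=\Omega(n\log n/(\Delta\log\log n))$, where the final equality uses $\log\Delta=\Theta(\log n)$ and $\log\log\Delta=\Theta(\log\log n)$ coming from the hypothesis $\Delta=\Theta(n^\alpha(\log n)^\beta(\log\log n)^\gamma)$ with $\alpha>0$.

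For the complementary case $|V_1|\geq n/2$ (targeting Type~2 progress), I would use linear algebra over $\mathbb{F}_2$. Let $A\in\mathbb{F}_2^{E(H)\times V}$ be the hyperedge--vertex incidence matrix of $H$. Since $|e|=3$, a subset $S\subseteq V$ satisfies $|S\cap e|\in\{0,2\}$ for every edge $e$ iff its indicator $x\in\mathbb{F}_2^V$ satisfies $\sum_{v\in e}x_v\equiv 0\pmod 2$ for every $e$, i.e., $x\in\ker A$; in particular $\mathbf{1}_{V_1}\in\ker A$. Let $V_{\mathrm{fix}}=\{v:x_v=0\text{ for every }x\in\ker A\}$; since $\mathbf{1}_{V_1}$ is $1$ on $V_1$, we have $V_{\mathrm{fix}}\subseteq V_2$. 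A uniformly random $x\in\ker A$---which can be sampled in polynomial time after computing a basis of $\ker A$ by Gaussian elimination---satisfies $\Pr[x_v=1]=\tfrac{1}{2}$ for each $v\notin V_{\mathrm{fix}}$, giving $\mathbb{E}[|x|]=|V\setminus V_{\mathrm{fix}}|/2\geq|V_1|/2\geq n/4$. The reverse Markov inequality then gives $\Pr[|x|\geq n/8]\geq 1/7$, so polynomially many independent samples produce a 2-set of size $\Omega(n)$, far more than the required $\Omega(n\log n/(\Delta\log\log n))$.

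The main obstacle is ensuring that Halld\'orsson's approximation achieves the advertised ratio on the filtered shadow graph $G'$; I invoke this as a black box from~\cite{Halldarsson00:JGAA}, exactly as foreshadowed by the footnote on that reference earlier in the paper. The remaining steps---degree filtering, the parity characterisation of 2-sets, and the reverse Markov tail bound---are then routine, and the overall algorithm simply runs both procedures and outputs whichever candidate is larger.
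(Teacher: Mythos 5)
Your proof is correct, but it takes a genuinely different route from the paper's. The paper never case-splits on which colour class of the promised LO $2$-colouring is larger: it always makes Type~1 progress, lower-bounding the independence number of $(V',E')$ (vertices of degree $O(\Delta)$ with induced edges) by $|E'|/\Delta$ via the vertices coloured~$2$, then applies Corollary~\ref{cor:IS} (Halld\'orsson's bound combined with the trivial greedy $\Omega(n/d)$ bound) to the primal graph and case-splits on its average degree $d$. Your approach instead (i) conditions on $|V_1|\geq n/2$ versus $|V_2|\geq n/2$, (ii) in the $|V_2|\geq n/2$ branch filters the shadow graph to maximum degree $O(\Delta)$, which --- crucially using $\alpha>0$ so that $\log\Delta=\Theta(\log n)$ and $\log\log\Delta=\Theta(\log\log n)$ --- lets you invoke Halld\'orsson's bound with a single uniform degree parameter and thereby avoid the paper's case analysis on $d$, and (iii) introduces a brand-new Type~2 branch that the paper's Proposition~\ref{prop:smallDeg} does not use at all, namely sampling a uniformly random vector from $\ker A$ over $\mathbb{F}_2$ and applying reverse Markov. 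That kernel-sampling argument is clean and in fact yields a $2$-set of size $\Omega(n)$, which is stronger than needed here (and is also a somewhat different mechanism from the MaxOnes constant-factor approximation the paper relies on in Proposition~\ref{prop:bigDeg} for its Type~2 progress); note, though, that your Type~2 branch succeeds only under the $|V_1|\geq n/2$ hypothesis, so it does not by itself subsume the dense case, where $|V_1|$ can be as small as $\Theta(\sqrt{n\Delta})$. Two minor remarks: you quote Halld\'orsson's bound in terms of \emph{maximum} degree while the paper's Lemma~\ref{lem:MaxOnes} predecessor (Lemma on~\cite{Halldarsson00:JGAA}) is stated for \emph{average} degree; the two are reconciled because $\log d/(d\log\log d)$ is eventually decreasing and because, when the average degree is a small constant, greedy already gives $\Omega(n)$, but you should say this explicitly. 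Also, ``output the larger candidate'' should be understood as ``output whichever candidate meets the $\Omega(n/f(n))$ threshold, tagged with its progress type,'' since an independent set and a $2$-set are used differently downstream.
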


Fix $H$ to be our linear hypergraph. We will prove Proposition~\ref{prop:smallDeg}  with a series of lemmata.
\begin{lemma}
$H$ has $\Omega(n)$ vertices with degree $O(\Delta)$.
\end{lemma}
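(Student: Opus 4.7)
The plan is to use a simple averaging (Markov-style) argument on the degree sequence. Since $H$ is 3-uniform, the handshake identity gives
\[
\sum_{v \in V} \deg(v) \;=\; 3|E|.
\]
By hypothesis $|E| = O(n\Delta)$, so there exists a constant $c$ with $\sum_v \deg(v) \leq 3cn\Delta$. In particular the average degree is at most $3c\Delta$.

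From here I would pick a threshold, say $T = 6c\Delta$, and let $S = \{v \in V : \deg(v) > T\}$. Then
\[
T \cdot |S| \;<\; \sum_{v \in S} \deg(v) \;\leq\; \sum_{v \in V} \deg(v) \;\leq\; 3cn\Delta,
\]
which rearranges to $|S| < n/2$. Hence at least $n/2 = \Omega(n)$ vertices have degree at most $T = O(\Delta)$, which is exactly the claim.

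I do not anticipate any real obstacle here: the lemma is a one-line Markov inequality, and the linearity hypothesis on $H$ is not even used. The only thing to be a little careful about is unwinding the asymptotic notation $|E| = O(n\Delta)$ into an explicit constant $c$ before applying the inequality, and keeping track that the constant absorbed into the $O(\Delta)$ conclusion depends only on $c$ (and not on $n$ or on the exponents $\alpha, \beta, \gamma$ controlling $\Delta$).
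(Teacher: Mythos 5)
Your proof is correct and is essentially the paper's argument spelled out in full: the paper observes that the average degree is $O(\Delta)$ and then applies Markov's inequality, which is exactly the handshake-plus-threshold computation you carry out. Nothing to add.
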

\begin{proof}
Observe that the average degree of $H$ is $O(\Delta)$. Applying Markov's inequality to a randomly chosen vertex immediately gives us our result.
\end{proof}

Let $V'$ be the set of vertices of $H$ with degree $O(\Delta)$, and let $E'$ be the set of edges induced by $V'$.

\begin{lemma}
An independent set of size $|E'| / \Delta$ exists within $V'$.
\end{lemma}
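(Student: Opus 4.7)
The plan is to exploit the promised LO 2-colouring of $H$ purely existentially, without having to find it. Fix any LO 2-colouring $c\colon V \to \{1,2\}$. The key observation is that the set
\[
S = \{v \in V : c(v) = 2\}
\]
is an independent set of $H$: if two vertices of some edge $e$ were both assigned colour $2$, then the multiset of colours on $e$ would have two maxima, contradicting the LO property. So in any LO $2$-colouring the top-colour class is automatically independent, and we get an independent set ``for free''.

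Next, I would intersect with $V'$, setting $S' := S \cap V'$. As a subset of an independent set, $S'$ is itself independent, and it is contained in $V'$ as required. It remains to lower-bound $|S'|$ by $\Omega(|E'|/\Delta)$. For this I would run a short double-counting argument on the incidence set $P = \{(v,e) : v \in S',\, e \in E',\, v \in e\}$. On one hand, every edge $e \in E'$ has all three endpoints in $V'$, and, being LO-coloured, exactly one of these endpoints has colour $2$; this vertex therefore lies in $S \cap V' = S'$. So each $e \in E'$ contributes exactly one pair, giving $|P| = |E'|$. On the other hand, every $v \in S' \subseteq V'$ has degree $O(\Delta)$ in $H$ by definition of $V'$, hence is contained in $O(\Delta)$ edges of $E'$. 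Comparing the two counts yields $|S'| \cdot O(\Delta) \geq |E'|$, i.e.\ $|S'| = \Omega(|E'|/\Delta)$.

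I do not anticipate any genuine obstacle here: the lemma reduces to the structural fact that the maximum-colour class of any LO colouring is an independent set, together with a one-line incidence count using the degree bound built into $V'$. Note that linearity of $H$ is not actually required at this step; it is needed elsewhere in the framework. The only mildly subtle point is that the statement is purely existential, while the ambient algorithm of Proposition~\ref{prop:smallDeg} needs to \emph{find} a large independent set in $V'$; presumably that will be handled in the next step by invoking an independent-set approximation subroutine for bounded-degree (hyper)graphs, of the kind referenced via~\cite{Halldarsson00:JGAA}.
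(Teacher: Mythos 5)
Your proof is correct and takes essentially the same approach as the paper: the top-colour class of an LO $2$-colouring is automatically independent, every edge of $E'$ contains exactly one such vertex, and the degree bound $O(\Delta)$ built into $V'$ gives the count $|S'| = \Omega(|E'|/\Delta)$. Worth noting: the paper's own one-line proof says ``at least one vertex coloured with $1$'' where it clearly means ``exactly one vertex coloured with $2$'' --- the colour-$1$ class is \emph{not} independent (every edge contains two colour-$1$ vertices) --- so your version is the correct reading, not a deviation.
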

\begin{proof}
Each edge in $E'$ must contain at least one vertex coloured with 1 in an LO 2-colouring; each such vertex is included in at most $\Delta$ edges. Thus there exists an independent set of size $|E'| / \Delta$.
\end{proof}

We will use the following algorithm as a black box.

\begin{lemma}[\cite{Halldarsson00:JGAA}]
There is a polynomial-time algorithm that, if given a graph with an independent set of size $s$ and average degree $d$, finds an independent set of size at least $\Omega(s \log d / d \log \log d)$.
\end{lemma}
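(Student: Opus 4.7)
The plan is to reduce the average-degree hypothesis to a maximum-degree one and then invoke a known bounded-degree independent-set approximation. Specifically, I would first discard vertices of large degree: since $\sum_v \deg(v) = 2|E| = nd$, a Markov-type argument shows that at most $n/c$ vertices have degree exceeding $cd$ for any constant $c$. Letting $V_{\text{low}}$ be the set of vertices of degree at most $cd$, the induced subgraph $G[V_{\text{low}}]$ has maximum degree $O(d)$, and at most $n/c$ vertices of the promised independent set $I$ (of size $s$) are removed when we pass to $V_{\text{low}}$. Taking $c$ a sufficiently large constant---or alternatively iterating the peeling $O(\log n)$ times to handle the case where $s$ is small relative to $n$---yields a subgraph of maximum degree $O(d)$ that still contains $\Omega(s)$ vertices of some independent set.

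Next, I would apply an independent-set algorithm tailored to graphs of bounded maximum degree. The natural candidate is the SDP-based approach of Karger, Motwani, and Sudan, which produces an independent set of size $\Omega(s' \log \Delta / (\Delta \log \log \Delta))$ when the input has maximum degree $\Delta$ and contains an independent set of size $s'$. Plugging in $\Delta = O(d)$ and $s' = \Omega(s)$ gives an independent set of size $\Omega(s \log d / (d \log \log d))$ as required. If avoiding SDPs is preferred, one can instead use the Ramsey-theoretic recursion of Boppana and Halld{\'o}rsson, which attains a similar ratio by iteratively finding either a large clique (from which a single vertex can be safely retained and whose neighbours can be processed recursively) or a large independent set directly.

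The main obstacle is the bounded-degree routine itself: improving the trivial $1/\Delta$ approximation by a $\log \Delta / \log \log \Delta$ factor is genuinely non-trivial, requiring either SDP vector-colouring rounding with a careful probabilistic analysis or an intricate recursive Ramsey argument. By contrast, reducing average degree to maximum degree is largely bookkeeping, although one should be mindful when $s$ is small compared to the threshold $n/c$; in that regime, a greedy argument selecting a vertex of degree $O(d)$ and recursing on its non-neighbourhood suffices, or one can use a multi-bucket partition of the vertices by degree and apply the bounded-degree algorithm to the best bucket.
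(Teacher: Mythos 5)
This lemma is not proved in the paper at all: it is imported verbatim as a black-box result from Halld\'orsson's JGAA~2000 paper, so there is no internal argument to compare yours against. The paper even remarks in passing that Halld\'orsson's algorithm ``does build on'' Karger--Motwani--Sudan, which is consistent with the second stage of your sketch, but the authors never reprove it.

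That said, your reduction from average degree to maximum degree has a genuine gap that is worth flagging. Peeling off vertices of degree above $cd$ removes up to $n/c$ vertices, and if $s \le n/c$ this can destroy the entire promised independent set. You notice this, but the two repairs you offer do not actually close the hole. The greedy/Tur\'an bound gives an independent set of size $\Theta(n/d)$, which only dominates the target $\Omega(s \log d / (d\log\log d))$ when $s \lesssim n\log\log d/\log d$; in the intermediate regime $n\log\log d/\log d \lesssim s \lesssim n/c$ neither greedy nor constant-threshold peeling suffices. The natural fix of choosing $c$ to scale with $d$ (say $c \approx \log d/\log\log d$ so that $n/c \le s/2$) raises the residual maximum degree to $\Delta \approx d\log d/\log\log d$, and feeding that $\Delta$ into the bounded-degree guarantee $\Omega(s'\log\Delta/(\Delta\log\log\Delta))$ yields only $\Omega(s/d)$ --- the $\log d/\log\log d$ gain is exactly eaten by the inflated $\Delta$. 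The degree-bucketing variant you mention in passing has a related issue: a bucket rich in independent-set vertices may sit at degree far above $d$, and one also pays a $\log n$ loss for the number of buckets. Closing this requires the more delicate degree-class decomposition actually carried out by Halld\'orsson, where the loss per class is controlled multiplicatively and balanced against the independence ratio in that class, rather than a single Markov threshold. The bounded-degree engine you invoke (SDP vector-colouring rounding, or the Boppana--Halld\'orsson Ramsey recursion) is indeed where the $\log\Delta/\log\log\Delta$ factor comes from, so that part of your outline is on the right track.
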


\begin{corollary}\label{cor:IS}
There is a polynomial-time algorithm that, if given a graph with an independent set of size $s$ and average degree $d$, finds an independent set of size at least 
  \[\max(\Omega(s \log d / d \log \log d), \Omega(n / d)).\]
\end{corollary}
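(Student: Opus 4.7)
The plan is to obtain the corollary by running two algorithms in parallel and returning the larger of the two independent sets produced. The first bound, $\Omega(s \log d / d \log \log d)$, is delivered verbatim by the black-box algorithm of the previous lemma, so there is nothing new to do for that branch.

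For the second bound $\Omega(n/d)$, I would invoke a classical greedy argument based on Turán's theorem. Concretely, a graph on $n$ vertices with average degree $d$ contains an independent set of size at least $n/(d+1)$: this can be achieved by the standard Caro--Wei type greedy algorithm that repeatedly picks a vertex of minimum residual degree, adds it to the independent set, and deletes its closed neighbourhood. This algorithm runs in polynomial time and does not depend on the promised value $s$, so it yields an independent set of size $\Omega(n/d)$ unconditionally.

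The algorithm claimed by the corollary then simply runs both procedures and outputs whichever independent set is larger; its size is at least the maximum of the two lower bounds. The main (and only) subtlety is the bookkeeping check that both bounds refer to the same parameter $d$ (the average degree of the input graph), which they do by hypothesis, and that the greedy step tolerates arbitrary average degree without any promise on $s$. I do not anticipate an obstacle here, as both ingredients are entirely standard; the corollary is essentially a packaging step that combines the black-box approximation guarantee with the baseline Turán bound, the latter being particularly useful precisely when $s$ is small relative to $n/\log d$.
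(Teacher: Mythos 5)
Your proposal is correct and follows essentially the same route as the paper: run the black-box algorithm of the preceding lemma in parallel with a greedy procedure that guarantees an independent set of size $\Omega(n/d)$, and return the larger of the two. The only (cosmetic) difference is that you invoke the Car\'o--Wei/Tur\'an bound $n/(d+1)$ directly, whereas the paper first uses Markov's inequality to identify $\Omega(n)$ vertices of degree $O(d)$ and then greedily picks within that subset; both are standard and yield the same $\Omega(n/d)$ guarantee.
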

\begin{proof}
As established above for the more general case of hypergraphs, since the input graph has $\Theta(nd)$ vertices, there must be $\Omega(n)$ edges with degree $O(d)$. Thus, by greedily selecting an independent set on these vertices, we get the lower bound of $\Omega(n / d)$. Take the larger of this independent set and the one generated by the previous algorithm to get the desired result.
\end{proof}

\begin{proof}[Proof of Proposition~\ref{prop:smallDeg}]
Apply the algorithm from Corollary~\ref{cor:IS} to the primal graph of $H' = (V', E')$. (The primal graph of a hypergraph replaces each hyperedge with a clique.) This graph has an independent set of size $|E'| / \Delta$ and average degree $d = O(|E'| / |V'|) = O(|E'|/n)$. Therefore, the algorithm will find an independent set of size at least equal to the maximum of $\Omega((|E'| / \Delta) \log d / (|E'| / n) \log \log d) = \Omega(n \log d / \Delta \log \log n)$ and $\Omega(|V'| / d) = \Omega(n / d)$. If $d = o(\Delta \log \log n / \log n)$ then we get the independent set size we want from $\Omega(n / d)$. On the other hand, if $d = \Omega(\Delta \log \log n / \log n) = \Omega(n^{\alpha - \epsilon})$ for some $\epsilon < \alpha$, then we get the independent set size we want from $\Omega(n \log d / \Delta \log \log d)$ --- the assumption on the size of $d$ is necessary to make $\log d / \log \log d = \Omega(\log n / \log \log n)$.
\end{proof}

\begin{proposition}\label{prop:bigDeg}
There is a polynomial-time algorithm that makes progress towards an LO $O(\sqrt{n / \Delta})$-colouring of an LO 2-colourable 3-uniform linear hypergraph if $|E| = \Omega(n \Delta)$.
\end{proposition}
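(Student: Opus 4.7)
My plan is to combine an $\mathbb{F}_2$-linear-algebra tool (yielding type-2 progress) with a 2-SAT tool on the link of a high-degree vertex (yielding type-1 progress), dispatching by the maximum degree $D^* = \max_v d(v)$. The $\mathbb{F}_2$ tool exploits that any LO 2-colouring $V = V_1 \cup V_2$ of $H$ satisfies $|e \cap V_1| = 2$ for every edge $e$, so $\mathbf{1}_{V_1}$ lies in the solution space $W \subseteq \mathbb{F}_2^V$ of the system $\sum_{v \in e} s_v \equiv 0 \pmod{2}$ (one equation per edge). Any $s \in W$ yields a type-2-valid set $S = \{v : s_v = 1\}$, since $|e \cap S|$ being even forces $|e \cap S| \in \{0, 2\}$ in a 3-element edge. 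I compute $W$ via Gaussian elimination; the expected weight of a uniformly random $s \in W$ equals $|\operatorname{supp}(W)|/2 \geq |V_1|/2$ (as $\mathbf{1}_{V_1} \in W$ implies $V_1 \subseteq \operatorname{supp}(W)$), and the method of conditional expectations produces such an $s$ deterministically in polynomial time.

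For the 2-SAT tool, fix a vertex $v$. Because $H$ is linear, $L(v)$ is a matching of size $d(v)$ on $2d(v)$ distinct vertices $\{a_i, b_i\}_{i=1}^{d(v)}$. If $v \in V_1$, then $V_2 \cap N(v)$ picks one vertex from each matching edge with no two picks lying in a common hyperedge of $H$ (since $V_2$ is independent in the primal graph). I encode this as a 2-SAT with variables $x_i$ indexing matching edges, and clauses forbidding any pair of candidate picks $c_i \in \{a_i, b_i\}$, $c_j \in \{a_j, b_j\}$ (with $i \neq j$) that lie together in some hyperedge. Whenever $v \in V_1$, this 2-SAT is satisfiable, and any satisfying assignment gives an independent set of size $d(v)$.

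Combining, I split on $D^*$. If $D^* \leq \sqrt{n\Delta}$, then using $\sum_{v \in V_1} d(v) = 2|E|$ and $|E| = \Omega(n\Delta)$ gives $|V_1| \geq 2|E|/D^* \geq 2\sqrt{n\Delta}$; the $\mathbb{F}_2$ tool thus yields type-2 progress of size at least $\sqrt{n\Delta}$. If $D^* > \sqrt{n\Delta}$, I pick $v$ with $d(v) = D^*$ and solve the 2-SAT: if satisfiable, the resulting independent set has size $D^* > \sqrt{n\Delta}$ (type-1 progress); otherwise $v \notin V_1$, so $v \in V_2$ and $N(v) \subseteq V_1$ gives $|V_1| \geq 2D^* > 2\sqrt{n\Delta}$, and the $\mathbb{F}_2$ tool again yields type-2 progress of size exceeding $\sqrt{n\Delta}$. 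The main technical point to verify is the derandomised weight bound for the $\mathbb{F}_2$ tool: at each step the conditional support of the remaining affine restriction of $W$ can be computed by Gaussian elimination, so the standard method of conditional expectations runs in polynomial time.
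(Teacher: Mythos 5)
Your argument is correct, but it takes a genuinely different route from the paper. The paper's proof first establishes directly that $|V_1| = \Omega(\sqrt{n\Delta})$: each edge contributes a pair of $1$-coloured vertices and, by linearity, these pairs are distinct, so $\binom{|V_1|}{2} \geq |E| = \Omega(n\Delta)$. It then finds a large solution to $\mathcal{E}(H)$ by invoking a black-box constant-factor approximation algorithm for maximising the number of ones in a solution to an $\mathbb{F}_2$-linear system (Lemma~\ref{lem:MaxOnes}), yielding type-2 progress unconditionally. You replace both ingredients. Your $\mathbb{F}_2$ step --- the expected weight of a uniformly random element of the solution space $W$ is $|\mathrm{supp}(W)|/2 \geq |V_1|/2$, derandomised by conditional expectations --- is a pleasant, self-contained alternative to the cited approximation algorithm, at the cost of only a factor of $2$. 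However, you do not prove the direct $\Omega(\sqrt{n\Delta})$ bound on $|V_1|$, and so must dispatch on the maximum degree $D^*$ and introduce a $2$-SAT tool on the link of a high-degree vertex for the case $D^* > \sqrt{n\Delta}$. That extra machinery is correct (linearity makes the link a matching, $V_2$ is independent in the primal graph, and the clauses encode independence of the chosen transversal), but it is superfluous: the one-line pair-counting bound $\binom{|V_1|}{2} \geq |E|$ holds unconditionally, and once you have it your $\mathbb{F}_2$ tool alone gives type-2 progress, with no case split at all. In short, you simplified the algorithmic ingredient of the paper's proof but re-derived the combinatorial one by a longer detour.
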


Fix $H$ to be a 3-uniform linear hypergraph that admits an LO 2-colouring. The following two lemmata prove Proposition~\ref{prop:bigDeg}.

\begin{lemma}
There exists an LO 2-colouring of $H$ with $\Omega(\sqrt{n\Delta})$ vertices set to 1.
\end{lemma}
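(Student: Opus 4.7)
The plan is to exhibit \emph{any} LO 2-colouring $c$ of $H$ and show that its 1-class is already of size $\Omega(\sqrt{n\Delta})$, using the linearity constraint together with the rigidity of LO 2-colourings.

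First, I would record the following structural observation: since an LO 2-colouring uses colours from $\{1,2\}$ and each edge must have a unique maximum, that maximum must be the colour $2$, and the two remaining vertices of each edge must both be coloured $1$. Thus, letting $V_1 \subseteq V$ denote the set of vertices coloured $1$, every edge of $H$ contains exactly two vertices of $V_1$ (and exactly one vertex of $V \setminus V_1$).

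Next, I would combine this with the linearity of $H$. Consider the map $\phi : E \to \binom{V_1}{2}$ sending each edge $e \in E$ to the unique pair of vertices of $e$ that lie in $V_1$. This map is injective: if two edges $e_1 \neq e_2$ satisfied $\phi(e_1) = \phi(e_2) = \{u, v\}$, then $e_1$ and $e_2$ would share both $u$ and $v$, contradicting linearity. Therefore
\[
|E| \;\leq\; \binom{|V_1|}{2} \;\leq\; \frac{|V_1|^2}{2}.
\]
Rearranging and applying the hypothesis $|E| = \Omega(n\Delta)$ yields $|V_1| \geq \sqrt{2|E|} = \Omega(\sqrt{n\Delta})$.

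Since $H$ is LO 2-colourable by assumption, such a colouring exists and its 1-class is the desired witness. I do not anticipate a significant obstacle; the key conceptual step is recognising that linearity upgrades the naive bound $|E| \leq |V_1|\cdot\Delta_{\max}$ (which would depend on the maximum degree of $H$ rather than on the parameter $\Delta$) into the degree-free quadratic bound $|E| \leq \binom{|V_1|}{2}$, from which the $\sqrt{n\Delta}$ bound is immediate.
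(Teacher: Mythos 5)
Your proof is correct and follows essentially the same route as the paper: take any LO 2-colouring, note each edge meets the 1-class in exactly two vertices, use linearity to get an injection from edges into pairs of 1-coloured vertices, and conclude $|V_1| = \Omega(\sqrt{|E|}) = \Omega(\sqrt{n\Delta})$. The paper phrases the injection as building a simple graph on the 1-class with $|E|$ edges, but the underlying argument is identical.
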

\begin{proof}
Consider any LO 2-colouring of $H$ (one must exist by assumption). Consider the set $S$ of vertices set to 1. Each edge intersects $S$ in exactly two vertices; furthermore, every pair of vertices is included in at most one edge by linearity. Thus, by connecting these pairs of vertices, we find that we can construct a simple graph on $S$ with $|E|$ edges. Thus $|S|^2 \geq |E| = \Omega(n\Delta)$, from which it follows that $|S| = \Omega(\sqrt{n \Delta})$.
\end{proof}

\begin{definition}
For any 3-uniform hypergraph $H$, let $\mathcal{E}(H)$ denote a set of linear equations mod~2. These equations have one variable $x_v$ for each vertex $v$ of $H$, and one equation for each edge of $H$. The equation for edge $(a, b, c)$ is $x_{a} + x_b + x_c \equiv 0 \bmod 2$.
\end{definition}

\begin{lemma}[{\cite[Theorem~2.12,~(2)]{DBLP:journals/KSTW00}}]\label{lem:MaxOnes}
The problem of finding a solution to a set of linear equations mod~2 that maximises the number of variables set to 1 can be approximated to within some constant factor in polynomial time.
\end{lemma}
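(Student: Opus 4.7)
The plan is to reduce the problem to maximising the Hamming weight over an affine subspace of $\mathbb{F}_2^n$, argue that a uniformly random element of that subspace already captures, in expectation, at least half of the optimum number of ones, and then derandomise by the method of conditional expectations.

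First, I would use Gaussian elimination over $\mathbb{F}_2$ to detect infeasibility or to produce some particular solution $x_0$ together with a basis $b_1, \ldots, b_k$ of the kernel $V = \ker A$ of the coefficient matrix $A$ (writing the system as $Ax = b$). The set of all solutions is then the affine subspace $x_0 + V$, and we must find $y \in V$ maximising the Hamming weight $w(x_0 + y)$.

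The key observation is that for $y$ chosen uniformly from $V$ and any coordinate $i$, the map $y \mapsto y_i$ is an $\mathbb{F}_2$-linear form on $V$, hence either identically zero (call $i$ \emph{fixed}) or perfectly balanced on $V$ (call $i$ \emph{free}). On fixed coordinates every solution agrees with $x_0$, so the optimum is forced to take value $(x_0)_i$ there. On free coordinates the optimum takes value at most $1$, while a uniformly random solution takes value $1$ with probability exactly $1/2$. Letting $F$ denote the number of free coordinates and $P$ the number of $1$s of $x_0$ among the fixed coordinates, this gives
\[
\mathbb{E}_{y \in V}[w(x_0 + y)] = P + F/2 \quad \text{and} \quad \mathrm{OPT} \leq P + F,
\]
so the expectation is at least $\mathrm{OPT}/2$, yielding a $1/2$-approximation in expectation.

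To derandomise, I would write $y = \sum_j c_j b_j$ for independent uniform $c_j \in \{0,1\}$ and apply the method of conditional expectations, fixing $c_1, \ldots, c_k$ one at a time to preserve the conditional lower bound on the expected weight; each such conditional expectation is an explicit sum of terms equal to $0$, $1$, or $1/2$ and is thus trivially computable. The main subtlety is the dichotomy underlying $\mathrm{OPT} \le P + F$: every coordinate is either globally fixed across all solutions (hence contributing the same to every solution and to the optimum) or fully balanced in $V$, and this is exactly what the $\mathbb{F}_2$-linear structure of $V$ guarantees. Everything else is routine linear algebra.
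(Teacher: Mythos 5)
The paper itself offers no proof of this lemma: it is invoked verbatim as a black box from~\cite{DBLP:journals/KSTW00}, so there is nothing in the text to compare your argument against. Your argument is, however, a correct and self-contained proof of the claim, giving explicit approximation factor $2$. The step you identify as the ``main subtlety'' is indeed the crux: the $i$-th coordinate projection restricted to the kernel $V \le \mathbb{F}_2^n$ is an $\mathbb{F}_2$-linear functional on $V$, so it is either identically zero (coordinate fixed to $(x_0)_i$ on all of $x_0+V$) or has a kernel of codimension one in $V$ and is therefore exactly balanced. This dichotomy is what simultaneously gives the upper bound $\mathrm{OPT} \le P + F$ and the lower bound $\mathbb{E}[w(x_0+y)] = P + F/2$, from which $\mathbb{E}[w] \ge \mathrm{OPT}/2$ follows. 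The conditional-expectation derandomisation is routine and works exactly as you say: after fixing a prefix of the kernel coefficients, each coordinate's conditional expectation is $0$, $1$, or $1/2$ according to whether the remaining free coefficients still influence it, and the greedy choice at each step cannot decrease the running sum. One minor cosmetic remark: the degenerate case $\mathrm{OPT}=0$ (e.g.\ when $V=\{0\}$ and $x_0=0$) poses no issue since any returned solution is then optimal, but it is worth a half-sentence for completeness. Beyond that, your route is more elementary than a direct appeal to the general classification machinery of~\cite{DBLP:journals/KSTW00}, which establishes the result in the broader context of classifying the approximability of all weighted \textsc{Max Ones} problems; for the specific affine case at hand, your direct argument buys a short, transparent proof with an explicit constant.
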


\begin{proof}[Proof of Proposition \ref{prop:bigDeg}]
Consider $\mathcal{E}(H)$. Observe that a set of vertices that intersects each edge in an even number of vertices is equivalent to a solution to $\mathcal{E}(H)$. Furthermore, the size of the set is the same as the number of variables set to one in this solution; it is therefore sufficient to show that we can find a solution to this system of linear equations mod~2 that sets $\Omega(\sqrt{n \Delta})$ variables to one.

Now, consider the LO 2-colouring of $H$ that sets $\Omega(\sqrt{n\Delta})$ variables to 1. By recolouring 2 into 0, this gives a solution to $\mathcal{E}(H)$ that sets at least $\Omega(\sqrt{n \Delta})$ variables to 1. Finally, using the algorithm from Lemma~\ref{lem:MaxOnes}, we can find a solution that is at most a constant factor away from this one --- i.e.~with at least $\Omega(\sqrt{n \Delta})$ variables set to 1. This concludes the proof.
\end{proof}

\begin{proof}[Proof of Theorem~\ref{thm:mainAlgo}]
Set $\Delta = n^{1/3} (\log n)^{2 / 3} (\log \log n)^{-2/3}$ so that $\sqrt{n / \Delta} = \Delta \log \log n / \log n$
and combine Proposition~\ref{prop:smallDeg} and Proposition~\ref{prop:bigDeg}.
Thus, we can always make progress towards an LO $O(\sqrt[3]{n \log \log n / \log n})$-colouring in polynomial time, if given an LO 2-colourable 3-uniform linear hypergraph.
\end{proof}

\section{Algebraic theory of fixed-template promise CSPs}
\label{sec:minions}

We recount the algebraic theory of fixed-template PCSPs developed
in~\cite{BBKO21} and specialised to templates with a single relation (of arity
$r$).

The \emph{$p$-the power} of an $r$-ary template $\A=(A,R^{\A})$ is a template $\A^p = (A^p, R^{\A^p})$ where
\[
R^{\A^p} = \{ ( ( a_1^1, \ldots, a_1^p), \ldots, (a_r^1, \ldots, a_r^p ) ) \mid (a_1^1, \ldots, a_r^1) \in R^\A, \ldots, (a_1^p, \ldots, a_r^p) \in R^\A \}.
\]
In other words, a tuple of $R^{\A^p}$ contains $r$ vectors of $p$ elements of $A$, such that if these are written as a matrix with $r$ rows and $p$ columns, each column is a member of $R^\A$. For two $r$-ary templates $\A, \B$, a $p$-ary polymorphism from $\A$ to $\B$ is a homomorphism $f : \A^p \to \B$. We let $\ar(f)$ denote the arity of a polymorphism.

\begin{example}
Consider the binary template $\A = ([2], R^\A)$, where $R^\A$ is the binary disequality relation $\neq$ (restricted to $[2]^2$). The power $\A^5$ has domain $[2]^5$ and relation $\{ (\mathbf{a}, \mathbf{b}) \mid \mathbf{a}, \mathbf{b} \in [2]^5, a_i \neq b_i, i = 1, \ldots, 5 \}$. This relation is constructed as follows: $(\mathbf{a}, \mathbf{b})$ belongs to the relation if and only if every column of a matrix with 5 columns and 2 rows constructed out of $\mathbf{a}, \mathbf{b}$ satisfies $\neq$. The matrix is the following one:
\[
\begin{pmatrix}
a_1 & a_2 & a_3 & a_4 & a_5 \\
b_1 & b_2 & b_3 & b_4 & b_5 \\
\end{pmatrix}.
\]
Thus, for each column to satisfy $\neq$, we must have $a_i \neq b_i$ for $i = 1 , \ldots, 5$, as indicated above.

Now, consider a quinary polymorphism $f : \A^5 \to \A$. This is a function $f : [2]^5 \to [2]$ that satisfies the following property: if given a matrix with 2 rows and 5 columns, such that each column is a member of $R^\A$, then by applying $f$ to the rows of this matrix we also get a member of $R^\A$. For instance, for the matrix
\[
\begin{pmatrix}
1 & 2 & 2 & 1 & 1 \\
2 & 1 & 1 & 2 & 2
\end{pmatrix},
\]
we deduce that the pair $(f(1, 2, 2, 1, 1), f(2, 1, 1, 2, 2)) \in R^\A$ i.e.~$f(1, 2, 2, 1, 1) \neq f(2, 1, 1, 2, 2)$. One such polymorphism is given by selecting the values of $f$ from $[2]$ such that $f(x_1,\ldots, x_5) \equiv \sum_{i=1}^5 x_i \pmod{2}$.
\end{example}

The real power of this theory comes from \emph{minions}.\footnote{In category-theoretic terms, a minion is a functor from the skeleton of the category of finite sets to the category of sets. The objects of the first category are sets $[p]$ for $p \in \nat$, and the arrows are functions between them. The functor equivalent to a minion $\M$ takes $[p]$ to $\M^{(p)}$, and $\pi : [p] \to [q]$ to $f \mapsto f_\pi$.}
A \emph{minion} $\M$ is a sequence of sets $\M^{(0)}, \M^{(1)}, \ldots$, equipped with (so-called minor) operations, one for each $\pi : [p] \to [q]$; given $f \in \M^{(p)}$ the minor operation yields $f_\pi \in \M^{(q)}$. The operation must satisfy the following conditions:
\begin{itemize}
\item For $f \in \M^{(p)}$, if $\mathit{id} : [p] \to [p]$ is the identity on $[p]$, then $f_\mathit{id} = f$.
\item For $f \in \M^{(p)}$, $\pi : [p] \to [q]$ and $\sigma : [q] \to [t]$, we have $f_{\sigma \circ \pi} = (f_\pi)_\sigma$.
\end{itemize}
An important class of minions is class of \emph{polymorphism minions}. The polymorphism minion $\M = \Pol(\A, \B)$ for two templates $\A, \B$ with the same arity is a minion where $\M^{(p)}$ is the set of $p$-ary polymorphisms from $\A$ to $\B$, and where, for $f : A^p \to B$ and $\pi : [p] \to [q]$, $f_\pi$ is given by
$f_\pi(x_1, \ldots, x_q) = f(x_{\pi(1)}, \ldots, x_{\pi(p)})$.\footnote{Put differently, minors in polymorphism minions permute variables, identify variables, and introduce dummy variables.}
It is not difficult to check that, if $f : \A^p \to \B$ and $\pi : [p] \to [q]$, then $f_\pi : \A^q \to \B$, as required.

In order to be able to relate polymorphism minions with the complexity of PCSPs, we use \emph{minion homomorphisms}.\footnote{In category-theoretic terms, a minion homomorphism is just a natural transformation.}
A \emph{minion homomorphism from $\M$ to $\N$} is a mapping $\xi$ that takes each $\M^{(p)}$ to $\N^{(p)}$ and that satisfies the following condition: for any $\pi : [p] \to [q]$ and $f \in \M^{(p)}$, $\xi(f)_\pi = \xi(f_\pi)$. The following theorem links minion homomorphisms to PCSPs in the sense that minion homomorphisms capture precisely a certain type of polynomial-time reductions, know as primitive-positive  constructions.\footnote{Primitive-positive constructions (or pp-constructions, for short) capture so-called ``gadget reductions'', cf.~\cite[Section~3]{BKW17}.} 

\begin{theorem}[{\cite[Theorems~3.1~and~4.12]{BBKO21}}]\label{thm:gadgets}
For $r$-ary templates $\A, \B$ and $r'$-ary templates $\A', \B'$, a primitive-positive construction-based polynomial-time reduction from $\PCSP(\A', \B')$ to $\PCSP(\A, \B)$ exists if and only if $\Pol(\A, \B) \to \Pol(\A', \B')$.
\end{theorem}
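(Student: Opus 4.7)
The proof splits into two implications. For the forward direction, I would decompose any pp-construction into atomic operations---taking powers, pp-defining new relations, and passing to homomorphically equivalent templates---and verify that each induces a minion homomorphism in the reverse direction on polymorphism minions. Concretely: $\Pol(\A^k, \B^k)$ is naturally isomorphic to $\Pol(\A, \B)$, since a polymorphism of the $k$-th power is a polymorphism of the original acting coordinate-wise; every polymorphism of $(\A, \B)$ extends to a polymorphism of any pair of relations pp-defined from the originals; and replacing a template by a homomorphically equivalent one leaves the polymorphism minion isomorphic, the isomorphism being given by pre- and post-composition with the witnessing homomorphisms. Composing the maps arising from each atomic step yields a minion homomorphism $\Pol(\A, \B) \to \Pol(\A', \B')$.

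For the backward direction I would use the \emph{free structure} construction. Given a minion $\M$ and a structure $\mathbf{X}$ on domain $[n]$, define the $r'$-ary structure $\F_{\M}(\mathbf{X})$ with domain $\M^{(n)}$ whose relation contains a tuple $(g_1, \ldots, g_{r'})$ exactly when there exist $p \in \nat$, $f \in \M^{(p)}$, and tuples $t_1, \ldots, t_p \in R^{\mathbf{X}}$ such that $g_i = f_{\pi_i}$, where $\pi_i(j)$ is the $i$-th entry of $t_j$. The central universal property is a natural bijection between homomorphisms $\F_{\M}(\mathbf{X}) \to \mathbf{Y}$ and minion homomorphisms $\M \to \Pol(\mathbf{X}, \mathbf{Y})$. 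Applying this to $\M = \Pol(\A, \B)$ with $\mathbf{X} = \A'$ and $\mathbf{Y} = \B'$, the hypothesised minion homomorphism $\xi$ yields a homomorphism $\F_{\Pol(\A,\B)}(\A') \to \B'$, while the ``projection'' polymorphisms of $(\A,\B)$ always furnish $\A' \to \F_{\Pol(\A,\B)}(\A')$. Moreover $\F_{\Pol(\A,\B)}(\A')$ is itself pp-definable from a power of $\A$ together with $\B$, since its domain embeds into $A^{|A|^{|A'|}}$ and its relation is cut out by the very conditions defining $\Pol(\A, \B)$. This pp-construction translates into the desired polynomial-time gadget reduction: each vertex of the input instance is replaced by a constant-sized block of variables and each constraint by the gadget witnessing the pp-definition.

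The main obstacle is the backward direction, and within it the universal property of the free structure. The bijection between homomorphisms out of $\F_{\M}(\mathbf{X})$ and minion homomorphisms out of $\M$ hinges on the fact that the minor operations in $\M$ correspond exactly to the different ``ways'' a relation tuple of $\F_{\M}(\mathbf{X})$ can be witnessed by a triple $(p, f, (\pi_i)_i)$. Establishing this requires a careful combinatorial argument matching choices of $f$ and the $\pi_i$ on one side with the commutation condition $\xi(f)_\pi = \xi(f_\pi)$ on the other, and then promoting a pointwise-defined map into a genuine structural homomorphism. Once this is in place the polynomial-time bound is essentially automatic: with the templates fixed the gadget replacing each constraint has bounded size, so the overall reduction has size linear in the input.
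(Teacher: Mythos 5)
This theorem is stated by the paper as a black-box citation to \cite[Theorems~3.1~and~4.12]{BBKO21}; the paper itself gives no proof, so there is no internal argument to compare against. Your sketch reproduces the standard BBKO21 route: decomposing pp-constructions into atomic steps for one direction, and the free-structure adjunction for the other. Both directions are correct in outline and this is indeed the argument the citation points to.

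A few small points worth tightening. In the forward direction you assert that $\Pol(\A^k, \B^k)$ is \emph{naturally isomorphic} to $\Pol(\A, \B)$; what you actually need (and what holds in general) is only a minion homomorphism $\Pol(\A,\B) \to \Pol(\A',\B')$ when $(\A',\B')$ is a pp-power of $(\A,\B)$, obtained by acting coordinatewise --- an isomorphism is neither needed nor true in general. In the backward direction, when you claim the free structure's domain ``embeds into $A^{|A|^{|A'|}}$'', the base should be $B$: elements of $\Pol(\A,\B)^{(|A'|)}$ are functions $A^{|A'|} \to B$, hence tuples in $B^{|A|^{|A'|}}$. Relatedly, pp-constructibility in the PCSP setting is defined for \emph{pairs} of templates, so what should be said is that the pair $(\F_{\Pol(\A,\B)}(\A'),\, \F_{\Pol(\A,\B)}(\A'))$ (or rather the appropriate sandwich $\A' \to \F_{\Pol(\A,\B)}(\A') \to \B'$ with the middle structure realised as a pp-power of $(\A,\B)$) makes $(\A',\B')$ pp-constructible from $(\A,\B)$; stating that a single structure is ``pp-definable from a power of $\A$ together with $\B$'' blurs which side of the pair is being cut out by $R^\A$ versus $R^\B$. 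These are presentational rather than conceptual gaps; the universal property, the use of projection polymorphisms to get $\A' \to \F_{\Pol(\A,\B)}(\A')$, and the polynomial-time bound from fixed-size gadgets are all correct.
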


In particular, a minion homomorphism between polymorphism minions implies a polynomial-time reduction (in the other direction).
Unfortunately, it is usually a complex task to explicitly construct minion homomorphisms. An auxiliary construction called the \emph{free structure} allows us to construct them more easily. For an arbitrary minion $\M$ and an $r$-ary template $\A = (A, R^\A)$, the \emph{free structure} $\F = \F_\M(\A)$ of $\M$ generated by $\A$ is an $r$-ary template whose domain is $F = \M^{(|A|)}$. To construct its relation $R^\F$, first identify $A$ with $[n]$ for $n = |A|$, and then enumerate the tuples of $R^\A$ as vectors $\mathbf{r}^1, \ldots, \mathbf{r}^k$, where $k = |R^\A|$. Construct functions $\pi_1, \ldots, \pi_r : [k] \to [n]$ where $\pi_i(j) = \mathbf{r}^j_i$. (If we were to arrange $\mathbf{r}^1, \ldots, \mathbf{r}^k$ as columns of a matrix with $r$ rows and $k$ columns, then $\pi_i(1), \ldots, \pi_i(k)$ is the $i$-th row of the matrix.) Now, the tuple $(f_1, \ldots, f_r)$, where $f_1, \ldots, f_r \in \M^{(n)}$, belongs to $R^\F$ if and only if for some $f \in \M^{(k)}$ we have $f_i = f_{\pi_i}$.

\newcommand\cx[1]{\colorbox{white}{\strut #1}}
\newcommand\cy[1]{\colorbox{pink}{\strut #1}}
\newcommand\cz[1]{\colorbox{yellow}{\strut #1}}

\renewcommand\cx[1]{#1}
\renewcommand\cy[1]{#1}
\renewcommand\cz[1]{#1}

\begin{example}
Consider some polymorphism minion $\M$ and the ternary template $\LO_3^3$.
We will construct $\F = \F_\M(\LO_3^3)$. The domain is $\M^{(3)}$. To construct the relation of $\F$, we first arrange the 15 tuples of $R^{\LO_3^3}$ into a matrix with 3 rows and 15 columns:
\[
\begin{pmatrix}
\cy{2} & \cx{1} & \cx{1} & \cz{3} & \cx{1} & \cx{1} & \cz{3} & \cy{2} & \cz{3} & \cx{1} & \cy{2} & \cx{1} & \cz{3} & \cy{2} & \cy{2} \\
\cx{1} & \cy{2} & \cx{1} & \cx{1} & \cz{3} & \cx{1} & \cy{2} & \cz{3} & \cx{1} & \cz{3} & \cx{1} & \cy{2} & \cy{2} & \cz{3} & \cy{2} \\
\cx{1} & \cx{1} & \cy{2} & \cx{1} & \cx{1} & \cz{3} & \cx{1} & \cx{1} & \cy{2} & \cy{2} & \cz{3} & \cz{3} & \cy{2} & \cy{2} & \cz{3}
\end{pmatrix}
.
\]
Row $i$ of this matrix can be seen as a function $\pi_i : [15] \to [3]$. Now the relation $R^\F$ contains precisely the tuples $(f_{\pi_1}, f_{\pi_2}, f_{\pi_3})$ for all $f \in \M^{(15)}$. Substituting the definition for $f_{\pi_i}$, we find that these polymorphisms $f_{\pi_1}, f_{\pi_2}, f_{\pi_3}$ are:
\begin{align*}
(x, y, z) \mapsto &f\left(\cy{y}, \cx{x}, \cx{x}, \cz{z}, \cx{x}, \cx{x}, \cz{z}, \cy{y}, \cz{z}, \cx{x}, \cy{y}, \cx{x}, \cz{z}, \cy{y}, \cy{y}\right) \\
(x, y, z) \mapsto &f\left(\cx{x}, \cy{y}, \cx{x}, \cx{x}, \cz{z}, \cx{x}, \cy{y}, \cz{z}, \cx{x}, \cz{z}, \cx{x}, \cy{y}, \cy{y}, \cz{z}, \cy{y}\right) \\
(x, y, z) \mapsto &f\left(\cx{x}, \cx{x}, \cy{y}, \cx{x}, \cx{x}, \cz{z}, \cx{x}, \cx{x}, \cy{y}, \cy{y}, \cz{z}, \cz{z}, \cy{y}, \cy{y}, \cz{z}\right)
\end{align*}
Observe that the matrix and the arguments of $f$ are actually arranged in the same configuration, with 1 replaced by $x$, 2 by $y$ and 3 by $z$.
\end{example}

The following theorem connects minion homomorphisms and free structures.

\begin{theorem}[{\cite[Lemma~4.4]{BBKO21}}]\label{thm:adjunction}
If $\M$ is a minion and $\A, \B$ are $r$-ary templates, the homomorphisms $h : \F_\M(\A) \to \B$ are in a (natural)
1-to-1 correspondence to the minion homomorphisms $\xi : \M \to \Pol(\A, \B)$.\footnote{In category-theoretic terms, $\F_-(\A)$ and $\Pol(\A, -)$ are functors between (in opposite directions) the category of $r$-ary templates and the category of minions, and $\F_-(\A) \dashv \Pol(\A, -)$.}
As a consequence, $\F_\M(\A) \to \B$ if and only if $\M \to \Pol(\A, \B)$.
\end{theorem}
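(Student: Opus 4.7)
The plan is to establish the 1-to-1 correspondence explicitly, then deduce $\F_\M(\A) \to \B \iff \M \to \Pol(\A, \B)$ as an immediate corollary. Identify $A$ with $[n]$, where $n = |A|$. Given a minion homomorphism $\xi : \M \to \Pol(\A, \B)$, define $h_\xi$ on $F = \M^{(n)}$ by
\[
h_\xi(f) \;=\; \xi(f)(1, 2, \ldots, n).
\]
Conversely, given a homomorphism $h : \F_\M(\A) \to \B$, define $\xi_h(f) : A^p \to B$ for $f \in \M^{(p)}$ by $\xi_h(f)(a_1, \ldots, a_p) = h(f_g)$, where $g : [p] \to [n]$ is the map $i \mapsto a_i$. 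Both directions are forced by taking $\pi = \mathit{id}$ and $\pi = g$ respectively in the required compatibility, so these are essentially the only candidates.

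Next I would verify that $h_\xi$ is a homomorphism. Pick $(f_1, \ldots, f_r) \in R^\F$, witnessed by $f \in \M^{(k)}$ with $f_i = f_{\pi_i}$. Then
\[
h_\xi(f_i) \;=\; \xi(f_{\pi_i})(1, \ldots, n) \;=\; \xi(f)_{\pi_i}(1, \ldots, n) \;=\; \xi(f)\bigl(\pi_i(1), \ldots, \pi_i(k)\bigr),
\]
using first that $\xi$ commutes with minors and then the definition of a minor in $\Pol(\A, \B)$. The columns of the matrix $\bigl(\pi_i(j)\bigr)_{i,j}$ are precisely the enumerated tuples $\mathbf{r}^1, \ldots, \mathbf{r}^k \in R^\A$, so because $\xi(f)$ is a $k$-ary polymorphism $\A^k \to \B$, applying it row-wise yields a tuple of $R^\B$, which is exactly $(h_\xi(f_1), \ldots, h_\xi(f_r))$. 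Conversely, for $\xi_h$, given a tuple of $R^{\A^p}$ arranged as an $r \times p$ matrix $M$ with columns in $R^\A$, let $\sigma : [p] \to [k]$ record which $\mathbf{r}^{\sigma(j)}$ equals column $j$. Setting $g_i(j) = M_{i,j}$ gives $g_i = \pi_i \circ \sigma$, so $f_{g_i} = (f_\sigma)_{\pi_i}$ by the associativity minor axiom; hence $(f_{g_1}, \ldots, f_{g_r}) \in R^\F$ via the witness $f_\sigma$, and applying $h$ produces a tuple in $R^\B$. The minor compatibility $\xi_h(f)_\pi = \xi_h(f_\pi)$ collapses to the identity $h(f_{g \circ \pi}) = h(f_{g'})$ for the appropriate maps, which is immediate from the two minor axioms. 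The two assignments are mutually inverse: $h_{\xi_h}(f) = \xi_h(f)(1, \ldots, n) = h(f_{\mathit{id}}) = h(f)$, and the reverse composition unwinds analogously.

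The main obstacle I expect is the bookkeeping between rows, columns, and minor maps --- in particular, seeing how the encoding of $R^\F$ via the row functions $\pi_i$ is precisely dual to the matrix encoding of $R^{\A^p}$ via a witness map $\sigma$, so that the polymorphism property for $\xi_h(f)$ follows from $h$ being a homomorphism. Once the bijection is established, the biconditional $\F_\M(\A) \to \B \iff \M \to \Pol(\A, \B)$ is immediate, since producing $h$ produces $\xi_h$, and producing $\xi$ produces $h_\xi$.
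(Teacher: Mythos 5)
Your proof is correct and gives the standard explicit bijection (it is essentially the proof of Lemma~4.4 in~\cite{BBKO21}); note that the present paper states the result as a citation and does not itself contain a proof. The two assignments $\xi \mapsto h_\xi$ with $h_\xi(f) = \xi(f)(1,\ldots,n)$ and $h \mapsto \xi_h$ with $\xi_h(f)(a_1,\ldots,a_p) = h(f_g)$, together with the verification via $g_i = \pi_i \circ \sigma$, $f_{g_i} = (f_\sigma)_{\pi_i}$, and the inverse computations, are exactly right.
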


\section{Hardness results}
\label{sec:hardness}

In this section we will investigate the hardness of $\PCSP(\LO_\ell^r, \LO_k^r)$. 

First, we will establish that $\PCSP(\LO_2^r,\LO_k^r)$ is \NP-hard for each $k$, for \emph{some} (large but constant) $r$ (cf. Theorem~\ref{thm:sourceOfHardness}). We then use minion homomorphisms to show that $\PCSP(\LO_\ell^r, \LO_k^r)$ is \NP-hard for $r \geq k - \ell + 4$ (cf. Corollary~\ref{cor:colours}).

Second, we will show that $\PCSP(\LO_2^r,\LO_k^r)$ with $r\geq k+2$ \emph{cannot} be reduced to $\PCSP(\LO_2^r,\LO_k^r)$ with $r<k+2$
using primitive-positive constructions (i.e.~gadget reductions~\cite{BBKO21}). Thus in particular, it is not possible to prove hardness of $\PCSP(\LO_2^3,\LO_3^3)$ and $\PCSP(\LO_2^4,\LO_3^4)$ via gadget reductions from $\PCSP(\LO_2^5,\LO_3^5)$. More generally, if $\PCSP(\LO_2^{r'},\LO_k^{r'})$ is proved NP-hard for \emph{some} $r'\geq k+2$ then this will imply NP-hardness of $\PCSP(\LO_2^r,\LO_k^r)$ for \emph{every} $r\geq k+2$ but no gadget reductions would imply NP-hardness of $\PCSP(\LO_2^r,\LO_k^r)$ for $2\leq r<k+2$ (cf. Theorem~\ref{thm:minions}).

\subsection{Source of hardness}

In this subsection we will show that, for each $k \geq 2$, $\PCSP(\LO_2^r, \LO_k^r)$ is \NP-hard for some large $r$.

\begin{theorem}\label{thm:sourceOfHardness}
For any $k \geq 2$, there exists some large $r$ such that $\PCSP(\LO_2^r, \LO_k^r)$ is \NP-hard.
\end{theorem}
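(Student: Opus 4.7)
The plan is to apply the algebraic framework from Section~\ref{sec:minions}. By Theorem~\ref{thm:gadgets}, it suffices to exhibit a minion homomorphism from $\Pol(\LO_2^r, \LO_k^r)$ to the polymorphism minion of some \NP-hard PCSP. A natural target is $\Pol(\LO_2^3, \LO_2^3)$, the polymorphism minion of positive $1$-in-$3$-SAT (which is $\textsc{CSP}(\LO_2^3)$ and is classically \NP-hard); a short direct analysis of $3$-partitions shows that its polymorphisms are exactly the dictators, so this minion is isomorphic to the projection minion. Thus the core task is to build, for $r$ sufficiently large in terms of $k$, a canonical assignment $f \mapsto \kappa(f)\in[p]$ for each $f\in\Pol^{(p)}(\LO_2^r,\LO_k^r)$ that commutes with minor operations.

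The combinatorial setup is as follows. Since every tuple in $R^{\LO_2^r}$ has exactly one entry equal to $2$, a polymorphism $f:\{1,2\}^p\to[k]$ is equivalent to the set function $g_f:2^{[p]}\to[k]$ defined by $g_f(S)=f(\chi_S)$, where $\chi_S(i)=2$ iff $i\in S$. The polymorphism condition becomes: for every partition $S_1\sqcup\cdots\sqcup S_r=[p]$ (with empty parts permitted), the sequence $(g_f(S_1),\ldots,g_f(S_r))$ has a unique maximum in $[k]$. Taking $r$ large compared to $k$ lets us pad small partitions with many empty parts; applying the unique-maximum condition to such partitions yields strong rigidity on $g_f$. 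For example, partitions of the form $([p],\emptyset,\ldots,\emptyset)$ force $g_f([p])>g_f(\emptyset)$; partitions $(T,[p]\setminus T,\emptyset,\ldots,\emptyset)$ force $g_f(T)\neq g_f([p]\setminus T)$ with both dominating $g_f(\emptyset)$ once $r\geq 4$; and $t$-partitions padded with empty parts impose the unique-max property on all $t$-partitions, whenever $r\geq t+2$.

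Leveraging these rigidity constraints together with the bounded range $[k]$ of $g_f$, my plan is to use a pigeonhole or Ramsey-style argument over partitions to isolate, for $r$ sufficiently large in terms of $k$, a distinguished coordinate $\kappa(f)\in[p]$ that canonically witnesses the maximum — intuitively, the coordinate $f$ "really depends on". Once $\kappa$ is defined, verifying naturality $\kappa(f_\pi)=\pi(\kappa(f))$ under any $\pi:[p]\to[q]$ should reduce to tracking how $\chi_S$ pulls back under $\pi$, giving a minion homomorphism into the projection minion; \NP-hardness of $\PCSP(\LO_2^r,\LO_k^r)$ then follows from Theorem~\ref{thm:gadgets}.

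The principal obstacle is the combinatorial core: turning the unique-maximum property over all $r$-partitions into a statement robust enough to single out one essential coordinate, rather than merely a small influential set, and doing so in a way that is canonical with respect to all minor operations. The constraints are rich but intrinsically non-symmetric (the maximum is unique but the other values are unconstrained), so the argument must exploit the gap between $r$ and $k$ being large enough to collapse any "diffuse" polymorphism into a projection-like one.
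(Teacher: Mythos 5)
Your plan hits a genuine wall at the step you yourself flag as the ``principal obstacle'': the canonical coordinate $\kappa(f)$ you are after, namely the index $i$ with $g_f\{i\}$ the unique maximum among singletons, is \emph{not} natural with respect to minor operations, so it does not define a minion homomorphism to the projection minion. A small counterexample: take $k=3$, $p=3$, and the set function $g$ with $g(\emptyset)=1$, $g\{1\}=2$, $g\{2\}=g\{3\}=1$, $g\{1,3\}=2$, $g\{1,2\}=g\{2,3\}=g\{1,2,3\}=3$. One checks directly that $g$ corresponds to a polymorphism of $\Pol(\LO_2^r,\LO_k^r)$ for $r\geq 5$, and the singleton maximum is at coordinate $1$. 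But for $\pi:[3]\to[2]$ with $\pi(1)=1$, $\pi(2)=\pi(3)=2$, the minor $g_\pi$ has $g_\pi\{1\}=g\{1\}=2$ and $g_\pi\{2\}=g\{2,3\}=3$, so its singleton maximum is at coordinate $2\neq\pi(1)$. The culprit is that taking a minor can raise the singleton-maximum \emph{value}, which shifts the witnessing coordinate arbitrarily; no amount of Ramsey or pigeonhole over partitions will repair this, because the failure of naturality is already visible at a single minor, not just at large arities. (This is also why a minion homomorphism to $\Pol(\LO_2^3,\LO_2^3)$ is not the route the paper takes.)

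The paper's proof sidesteps this by invoking a strictly weaker hardness criterion (Theorem~\ref{thm:condition}, adapted from Brandts--Wrochna--\v{Z}ivn\'y): one only needs to assign a small set $I(f)\subseteq[\ar(f)]$ to each polymorphism of bounded arity so that along any bounded-length \emph{chain of minors} $f_0\to\cdots\to f_\ell$ some pair $f_i,f_j$ satisfies $\pi_{i,j}(I(f_i))\cap I(f_j)\neq\emptyset$; this is much weaker than naturality at every single minor. The paper sets $I(f)=\{i\}$ exactly as you do, and the missing idea is the pigeonhole over \emph{values}: in any chain of $k+1$ polymorphisms, two of them have the same singleton-maximum value $c\in[k]$, and Lemma~\ref{lem:partitions} shows that \emph{for such a value-matched pair} the singleton-maximum coordinates are compatible under the connecting minor. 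So naturality holds ``locally'', for some pair in the chain, which is all Theorem~\ref{thm:condition} needs. Your rigidity observations (padding partitions with $\geq 2$ empty parts, whence $r\geq m+2$) are correct and do appear in the paper, but without the chain/pigeonhole device the argument cannot be closed as a minion homomorphism to projections.
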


To do this, we will construct a hardness condition and apply it to this problem. The following hardness condition is almost identical to the one from~\cite{SetSat} (later reformulated in~\cite{babyPCP}). It is somewhat more general, in that the arity of the polymorphisms are bounded --- we do not need to prove something for polymorphisms of arbitrarily large arity.

For the purposes of this section, a \emph{chain of minors} is a sequence of polymorphisms with minors between them: $f_0 \xrightarrow{\pi_{0,1}} \ldots \xrightarrow{\pi_{k-1,k}} f_k$. We let $\pi_{i,j}$ denote the composition of the minors between $f_i$ and $f_j$ i.e.~$\pi_{i,j} = \pi_{j-1,j} \circ \ldots \circ \pi_{i,i+1}$. Thus $f_i \xrightarrow{\pi_{i,j}} f_j$.

\begin{theorem}\label{thm:condition}
Suppose $\M = \Pol(\A, \B)$. Fix constants $\ell, k\in \mathbb{N}$. There exists a constant $m = m(\ell,k)$ such that the following holds. Suppose that for each polymorphism $f : \A^n \to \B$ where $n \leq m$ we assign a set $I(f) \subseteq [n]$ of size at most $k$, such that for each chain of minors $f_0 \xrightarrow{\pi_{0,1}} \ldots \xrightarrow{\pi_{\ell-1,\ell}} f_\ell$ containing polymorphisms of arity at most $m$, there exist $i, j$ such that $\pi_{i,j}(I(f_i)) \cap I(f_j) \neq \emptyset$. Then $\pmc_\M(m)$ is \NP-hard, and furthermore $\PCSP(\A, \B)$ is \NP-hard.
\end{theorem}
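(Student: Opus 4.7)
My plan is to reduce from a suitably parameterised gap version of Layered Label Cover to $\pmc_\M(m)$, following the template of~\cite{SetSat} (in the reformulated style of~\cite{babyPCP}), and then transfer hardness to $\PCSP(\A,\B)$ via Theorem~\ref{thm:adjunction}. Concretely, I would take a source instance $\Phi$ of Layered Label Cover with $\ell+1$ layers, projection constraints between consecutive layers, alphabet size $L = L(\ell,k)$, and soundness strictly below $1/k^2$; this is NP-hard for suitably large constant $L$ by the PCP theorem combined with parallel repetition. The constant $m = m(\ell,k)$ is then chosen to dominate $L$ plus the small arity overhead needed by the encoding.

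The reduction introduces, for every vertex $v$ of $\Phi$, a variable to be instantiated by a polymorphism $f_v \in \M^{(n_v)}$ of arity $n_v \leq m$, and for every label-cover edge $(u,v)$ with projection $\sigma$ it imposes a minor constraint $f_v = (f_u)_{\hat\sigma}$ in the PMC instance, where $\hat\sigma$ is the natural lift of $\sigma$ to the encoded arities. Completeness is immediate: from a satisfying labelling $\lambda$ of $\Phi$, instantiate each $f_v$ as a canonical \emph{dictator} polymorphism indexed by $\lambda(v)$, so that all minor constraints hold on the nose.

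Soundness is where the hypothesis of the theorem bites. Given any PMC solution $\{f_v\}$, I would interpret the promised set $I(f_v)$ of size at most $k$ as a list of at most $k$ candidate labels for $v$, and assign to each $v$ a label drawn uniformly at random from $I(f_v)$. For any root-to-leaf path $v_0,\ldots,v_\ell$ through the layers of $\Phi$, the PMC constraints compose to a chain $f_{v_0} \xrightarrow{\pi_{0,1}} \cdots \xrightarrow{\pi_{\ell-1,\ell}} f_{v_\ell}$ of the form required by the hypothesis, so some $i<j$ satisfy $\pi_{i,j}(I(f_{v_i})) \cap I(f_{v_j}) \neq \emptyset$; this witness corresponds to a pair of labels satisfying the composite label-cover constraint between $v_i$ and $v_j$, and hence the random labelling satisfies such an edge with probability at least $1/k^2$. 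Averaging over paths yields a labelling of quality strictly above the promised soundness, the desired contradiction.

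The main obstacle is calibrating $m(\ell,k)$: it must depend only on $\ell$ and $k$, yet be large enough to host both the label-cover alphabet $L(\ell,k)$ coming out of the PCP machinery and the dictator polymorphisms used in the completeness step. A secondary technicality is verifying that every length-$\ell$ chain of minors appearing inside the constructed PMC instance really does correspond to a root-to-leaf path in $\Phi$, so that the chain hypothesis applies along it. The final passage from $\pmc_\M(m)$-hardness to $\PCSP(\A,\B)$-hardness is then a direct application of Theorem~\ref{thm:adjunction} to the free structure $\F_\M(\A)$.
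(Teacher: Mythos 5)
Your reduction from multi-layered label cover to $\pmc_\M(m)$, with dictator polymorphisms for completeness and list-decoding via $I$ for soundness, is precisely the argument the paper's one-sentence proof defers to by citing~\cite[Corollary~4.2]{SetSat} (in the reformulated style of~\cite{babyPCP}); the paper's only added content, which you also capture, is the observation that $I$ is applied only on polymorphisms of arity at most $m$, so the constant $m$ may be chosen as a function of $\ell$ and $k$ alone.

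Two details want fixing, though. First, the closing step misattributes the passage from $\pmc_\M(m)$-hardness to $\PCSP(\A,\B)$-hardness to Theorem~\ref{thm:adjunction}: that result relates homomorphisms out of a free structure to minion homomorphisms and has nothing to do with PMC. The relevant fact is the reduction from $\pmc_{\Pol(\A,\B)}$ to $\PCSP(\A,\B)$ established in~\cite{BBKO21}, which is already packaged into the cited corollary of~\cite{SetSat}. Second, your soundness threshold $1/k^2$ is too optimistic. The hypothesis guarantees only that \emph{some} pair $(i,j)$ among $\Theta(\ell^2)$ many along each chain has agreeing lists, and to pass from ``every path through the layers has a good pair'' to ``a noticeable fraction of label-cover constraints between some pair of layers has agreeing lists'' one needs the weak-density property of the multi-layered instance rather than a bare averaging over paths. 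Only after that does the randomised $1/k^2$ decoding of lists of size at most $k$ enter, so the soundness parameter must decay as a function of both $\ell$ and $k$ (roughly on the order of $1/(\ell^2 k^2)$), which in turn feeds into the choice of alphabet size $L(\ell,k)$ and hence of $m(\ell,k)$.
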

We do not need the definition of the ``promise satisfaction of a bipartite minor condition'' problem from~\cite{BBKO21}, denoted by $\pmc_\M(m)$ in the statement, as we never use it; it is only included to match the result from~\cite{SetSat}.
\begin{proof}
Identical to the proof in \cite[Corollary~4.2]{SetSat}, but noting that $I$ (called $\mathrm{sel}$ in \cite{SetSat}) is only ever applied on polymorphisms with arity at most $m$, and that the selection of $m$ depends only on $\ell$ and $k$.
\end{proof}

We define $\M_{\ell, k}^r =
\Pol(\LO_\ell^r, \LO_k^r)$. In order to apply this condition, we will have to better understand $\M_{\ell, k}^r$ from a combinatorial point of view --- the following remark gives a useful way to see the polymorphisms in $\M_{\ell, k}^r$.

\begin{remark}\label{rem:reinterpret}
Observe that an $(r')$-ary polymorphism $f \in (\M_{2,k}^r)^{(r')}$ is a function from $[2]^{r'}$ to $[k]$; if it is applied to the rows of an $r \times r'$ matrix whose columns are tuples of the relation of $\LO_2^r$ (i.e.~they contain one 2 and otherwise are 1), then the resulting values contain a unique maximum. Similarly to Barto et.~al.~\cite{Barto21:stacs}, we view $f$ as a function from the powerset of $[r']$ to $[k]$ indicating the coordinates of $2$s. (For example, the input tuple $(1, 2, 1, 2)$ is seen as equivalent to the input set $\{ 2, 4 \}$.) Under this view, $f$ is a polymorphism if and only if, for any partition $A_1, \ldots, A_r$ of $[r']$, the sequence $f(A_1), \ldots, f(A_r)$ has a unique maximum element. (Observe that each part $A_i$ corresponds to a row in the matrix mentioned earlier. Put differently, part $A_i$ corresponds to columns in which the $i$-th row contains a (unique in its column) 2.)
\end{remark}

\begin{lemma}\label{lem:partitions}
Suppose $f : 2^{[n]} \to [k]$ is a function such that, for each partition $A_1, \ldots, A_p$, $p \in [n]$, of $[n]$, $f(A_1), \ldots, f(A_p)$ has a unique maximum.

Fix a partition $A_1, \ldots, A_p$ of $[n]$. Suppose that the unique maximum of $f(A_1), \ldots, f(A_p)$ is equal to the unique maximum of $f \{1\}, \ldots, f\{n\}$. Then, if the unique maximum of the first is $f(A_i)$ and the unique maximum of the second is $f\{j\}$, it follows that $j \in A_i$.
\end{lemma}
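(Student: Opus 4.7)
The plan is to argue by contradiction using a single well-chosen hybrid partition. Write $M$ for the common value of the two unique maxima, so that $f(A_i) = M$ and $f\{j\} = M$. Suppose for contradiction that $j \notin A_i$. The degenerate case $|A_i| = n$ forces $A_i = [n]$, making $j \in A_i$ automatic, so we may assume $|A_i| < n$ and hence $j \in [n] \setminus A_i$ is a genuine choice.

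The key construction is the \emph{hybrid partition} obtained by keeping $A_i$ as one block and splitting everything else into singletons:
\[
\mathcal{P} \;=\; \{A_i\} \cup \bigl\{\{v\} : v \in [n] \setminus A_i\bigr\}.
\]
This is a legitimate partition of $[n]$ into $1 + (n - |A_i|)$ parts, a quantity that lies in $[n]$ precisely because $1 \leq |A_i| \leq n-1$, so the hypothesis of the lemma applies and $f$ evaluated on the blocks of $\mathcal{P}$ must have a unique maximum.

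Now I would read off the $f$-values on $\mathcal{P}$: one value is $f(A_i) = M$, and the remaining values are the singleton values $f\{v\}$ for $v \in [n] \setminus A_i$. Because $M = f\{j\}$ is the \emph{unique} maximum of $f\{1\},\dots,f\{n\}$, every singleton value is at most $M$, with equality only at $v = j$. Since $j \in [n] \setminus A_i$ by our contradiction hypothesis, the singleton $\{j\}$ really does appear in $\mathcal{P}$, and so $M$ is attained on $\mathcal{P}$ at (at least) the two distinct blocks $A_i$ and $\{j\}$, while no block attains a value exceeding $M$. This contradicts the uniqueness of the maximum on $\mathcal{P}$, completing the proof.

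There is no real obstacle here; the only thing to be careful about is verifying that $\mathcal{P}$ is a valid partition with a legal number of parts (handled by the $|A_i| < n$ reduction) and that strictness $f\{v\} < M$ for $v \neq j$ follows from the \emph{unique}-maximum clause in the hypothesis rather than merely maximality.
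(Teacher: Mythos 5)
Your proof is correct and uses essentially the same argument as the paper: form the hybrid partition $\{A_i\} \cup \{\{v\} : v \notin A_i\}$ and observe that both $A_i$ and $\{j\}$ attain the value $M$ while no block can exceed it, contradicting the unique-maximum hypothesis. Your explicit handling of the degenerate case $|A_i| = n$ and the strictness remark are just slightly more careful bookkeeping around the identical idea.
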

\begin{proof}
Suppose not. Thus suppose $j \not \in A_i$. Consider the partition $A_i, \{1\}, \ldots,\{j\}, \ldots, \{n\}$, where all singletons included in $A_i$ have been removed. This partition must have a unique maximum. Since $f(A_i) = f \{ j\}$, it must be larger than $f \{j \}$ and $f(A_i)$. But this is impossible, since $f \{j \}$ is the maximum of $f\{1\}, \ldots, f\{n\}$.
\end{proof}

\begin{proof}[Proof of Theorem~\ref{thm:sourceOfHardness}]
Fix $m = m(1, k)$ and $r = m + 2$. Consider any $f \in (\M_{2, k}^r)^{(n)}$ for $n \leq m$. For any partition $A_1, \ldots, A_p$ of $[n]$ where $p \in [n]$, observe that due to the partition $A_1, \ldots, A_p, \emptyset, \ldots, \emptyset$, where $\emptyset$ is added $r - p \geq m + 2 - m = 2$ times, it follows that $f(A_1), \ldots, f(A_p)$ has a unique maximum. This means that we can apply Lemma~\ref{lem:partitions} to any such $f$.

For any $f \in (\M_{2,k}^r)^{(n)}$ for $n \leq m$ it follows that $f\{1\}, \ldots, f\{n\}$ has a unique maximum. If it is given by $f\{i\}$, then set $I(f) = \{i\}$. We will now show that this selection $I$ satisfies the condition of Theorem~\ref{thm:condition}, and thus $\PCSP(\LO_2^r, \LO_k^r)$ is \NP-hard.

Consider any chain of minors $f_0 \xrightarrow{\pi_{0,1}} \ldots \xrightarrow{\pi_{k-1, k}} f_k$. By the pigeonhole principle applied to $f_0(I(f_0)), \ldots, f_k(I(f_k))$, for some $f = f_i \xrightarrow{\pi = \pi_{i,j}} f_j = g$ we have $f(I(f)) = g(I(g)) = c \in [k]$. Supposing that $p = \ar(f), q = \ar(g)$, $p, q \leq m$, it follows that the unique maximum value in $f \{1\}, \ldots, f\{p\}$ and in $g\{1\}, \ldots, g\{q\}$ are both $c$. Suppose the first is given by $f\{i\}$ and the second is given by $g\{j\}$. Thus $I(f) = \{i\}$ and $I(g) = \{j\}$.

Observe that, by definition, $g\{x\} = f(\pi^{-1}(x))$. It follows that the unique maximum of 
  \[f(\pi^{-1}(1)), \ldots, f(\pi^{-1}(q))\]
  is given by $f(\pi^{-1}(j))$. Since additionally $f\{i\} = f(\pi^{-1}(j))$ and $f\{i\}$ is the unique maximum of $ f\{1\}, \ldots, f\{i\}$, by Lemma~\ref{lem:partitions} it follows that $i \in \pi^{-1}(j)$ or equivalently $\pi(i) = j$. This implies that $\pi(I(f)) \cap I(g) = \{j\} \neq \emptyset$. Thus by Theorem~\ref{thm:condition} it follows that $\PCSP(\LO_2^r, \LO_k^r)$ is \NP-hard.
\end{proof}

\subsection{Minion homomorphisms}

How can we now leverage this basic hardness result to other values of $r$? We
will use chains of minion homomorphisms to do this. Our main result in this section is the following.

\begin{theorem}\label{thm:minions}
For each $k \geq 3$, we have that $\M_{2,k}^{k+2} \leftrightarrows \M_{2,k}^{k+3} \leftrightarrows \ldots$. Furthermore for each $2 \leq r < k+2$, $\M_{2, k}^{k+2} \to \M_{2, k}^r$, yet $\M_{2, k}^r \not \to \M_{2, k}^{k+2}$.
\end{theorem}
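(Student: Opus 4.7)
The theorem has three components: (i) bidirectional equivalences $\M_{2,k}^{r} \leftrightarrows \M_{2,k}^{r+1}$ for $r \geq k+2$; (ii) one-way reductions $\M_{2,k}^{k+2} \to \M_{2,k}^r$ for $2 \leq r < k+2$; and (iii) the non-existence $\M_{2,k}^r \not\to \M_{2,k}^{k+2}$ in the same range.

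My plan for (i) and (ii) is to show that the identity map is a minion homomorphism in each asserted direction, reducing both to set-inclusions between the underlying polymorphism sets. By Remark~\ref{rem:reinterpret}, polymorphisms in $\M_{2,k}^r$ are exactly set functions $f : 2^{[n]} \to [k]$ such that on every ordered $r$-partition (with possibly empty parts) the values have a unique maximum. The ``easy'' inclusion $\M_{2,k}^r \subseteq \M_{2,k}^{r'}$ for $r \geq r' + 2$ is via empty-part padding: extending an $r'$-partition to an $r$-partition by adding $r - r' \geq 2$ empty parts makes $f(\emptyset)$ appear at least twice in the extension, ruling it out as the unique maximum; the $r$-condition then transfers back to the original $r'$-partition. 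This immediately gives $\M_{2,k}^{k+2} \to \M_{2,k}^r$ for $r \leq k$ in (ii) and half of the equivalence chain in (i). The adjacent case ($r' = r - 1$), needed to close the chain in (i) and to handle $r = k+1$ in (ii), requires a more delicate pigeonhole argument: given a hypothetical $f \in \M_{2,k}^{r+1}$ violating the $r$-condition on a partition $(B_1, \ldots, B_r)$ with all parts non-empty, the bound $r \geq k + 2 > k$ forces repetitions among the $f(B_i)$, and then applying the $(r+1)$-condition to strategically merged and split variants of the partition produces too many pairwise-distinct $f$-values in $\{f(\emptyset) + 1, \ldots, k\}$, exceeding the available $k$ colours.

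My plan for (iii) is to rule out any minion homomorphism $\xi : \M_{2,k}^r \to \M_{2,k}^{k+2}$ by exploiting the fact that membership in $\M_{2,k}^{k+2}$ forces strict inequalities and distinctnesses among $f$-values that can be violated in $\M_{2,k}^r$ for $r < k+2$. For $r = 2$, exhibit an arity-2 polymorphism $f \in \M_{2,k}^2$ with $f\{1\} = f(\emptyset)$ (permitted because the $r = 2$ arity-2 conditions only demand $f\{1\} \neq f\{2\}$ and $f\{1, 2\} \neq f(\emptyset)$), and apply the duplication minor $\pi : [2] \to [2]$ with $\pi(1) = \pi(2) = 1$; unwinding $\xi(f_\pi) = \xi(f)_\pi$ yields $\xi(f_\pi)\{2\} = \xi(f)(\emptyset) = \xi(f_\pi)(\emptyset)$, contradicting the strict inequality $\xi(f_\pi)\{2\} > \xi(f_\pi)(\emptyset)$ that $\M_{2,k}^{k+2}$ arity-2 membership enforces (via the partition $(\{1\}, \{2\}, \emptyset, \ldots, \emptyset)$ with $k \geq 2$ empty parts). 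For $3 \leq r \leq k+1$, construct a polymorphism $f \in \M_{2,k}^r$ with suitable (partial or full) symmetry; the minor identity $\xi(f_\sigma) = \xi(f)_\sigma$ transfers this symmetry to $\xi(f)$, while the all-singletons partition of appropriate arity (with enough extra empty parts) in $\M_{2,k}^{k+2}$ demands a unique maximum among singleton values, clashing with the symmetry.

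The main obstacle will be the adjacent-case inclusion underlying (i) and the $r = k+1$ case of (ii): the pigeonhole analysis must propagate inequalities among $f$ on singletons, pairs, and merged subsets, and use the threshold $r \geq k + 2$ precisely to exhaust the $k$-colour budget. In (iii), the delicate point is finding a polymorphism in $\M_{2,k}^r$ with the right symmetry type for each value of $r$ in the range $3 \leq r \leq k+1$.
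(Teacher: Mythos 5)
Your plan diverges from the paper's in two substantive ways, and in both cases there are gaps.

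\textbf{Parts (i) and (ii).}
You propose to realise every claimed homomorphism as the identity on polymorphism sets, which would establish a literal set-equality $\M_{2,k}^r = \M_{2,k}^{r+1}$ for $r \geq k+2$. This is strictly stronger than what the paper proves: the paper builds minion homomorphisms via Lemma~\ref{lem:suff}/Corollary~\ref{cor:suff}, which only requires checking a condition on polymorphisms of arity exactly $r'$ (the target uniformity), and the resulting homomorphism is the non-identity map $f \mapsto (g \mapsto g(1,2))$ coming from the free-structure adjunction. Your approach instead requires verifying, at \emph{every} arity $p$, that the $(r+1)$-condition and the $r$-condition on functions $2^{[p]} \to [k]$ are equivalent. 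The case analysis this entails is substantial. In particular, your description of the key step --- ``produces too many pairwise-distinct $f$-values in $\{f(\emptyset)+1,\ldots,k\}$, exceeding the available $k$ colours'' --- reads like a plain counting pigeonhole, but what the argument actually needs is the graph-theoretic fact that $K_{m+3}$ admits no $(m,2)$-edge co-colouring (Corollary~\ref{cor:sparse2} in the paper): the merged values are not globally pairwise-distinct, only pairwise-distinct on disjoint pairs of parts, which is exactly the edge-co-colouring constraint. Without invoking that lemma your sketch does not close the argument, and there is also a further missed case ($f(\emptyset) > c$ where $c$ is the tied maximum among part values), which requires additional merges and a separate contradiction. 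The paper sidesteps all of this by proving the ``up'' direction only for $r \geq 10k+1$ (Theorem~\ref{thm:minion+1}, using a $(k,3)$-co-colouring bound) and then routing through a chain: up by two via Theorem~\ref{thm:minion+2} until reaching $10k+1$, up by one once, then down by two repeatedly. Your direct identity-inclusion claim may well be true, but it is a stronger statement requiring a complete case analysis that the sketch does not provide.

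\textbf{Part (iii).}
The $r=2$ argument is flawed as written. You claim that arity-$2$ membership in $\M_{2,k}^{k+2}$ enforces $g\{2\} > g(\emptyset)$ via the partition $(\{1\},\{2\},\emptyset,\ldots,\emptyset)$. It does not: that partition only forces the maximum of $g\{1\}, g\{2\}$ to be unique and strictly larger than $g(\emptyset)$; the \emph{smaller} of the two singleton values may equal $g(\emptyset)$. Concretely, $g(\emptyset)=g\{2\}=1$, $g\{1\}=g\{1,2\}=k$ lies in $(\M_{2,k}^{k+2})^{(2)}$, so no contradiction with $\xi(f_\pi)\{2\} = \xi(f_\pi)(\emptyset)$ arises. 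For $3 \leq r \leq k+1$ your plan is a sketch (``construct a polymorphism with suitable symmetry'') without a construction. The paper handles (iii) by exhibiting explicit polymorphisms in $\M_{2,k}^k$ and $\M_{2,k}^{k+1}$ whose binary minors all coincide (Theorems~\ref{thm:impossible1} and~\ref{thm:impossible2}), which via Lemma~\ref{lem:suff} rules out homomorphisms to $\M_{2,k}^{k+1}$ and $\M_{2,k}^k$ respectively; the non-homomorphisms for the remaining $r$ then follow by composing with the ``down by two'' homomorphisms of Theorem~\ref{thm:minion-2} and taking contrapositives. You would need to supply analogous explicit constructions (or a parity argument reducing to these two base cases) to make your part (iii) go through.
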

Theorem~\ref{thm:minions} is illustrated in Figure~\ref{fig:minions0}. 
The more complicated Figure~\ref{fig:minions} illustrates minion relationships
from Theorem~\ref{thm:minions} and those implied by
Theorems~\ref{thm:minion-2},\ref{thm:impossible1}, and~\ref{thm:impossible2}.

\begin{remark}
For $k=3$, we know the precise relationship of all minions in Theorem~\ref{thm:minions}: $\M_{2,3}^3\not\to\M_{2,3}^4$ and $\M_{2,3}^4\not\to\M_{2,3}^3$; i.e, the (in this case) two minions in the ``left fan'' in Figure~\ref{fig:minions} are incomparable. This follows from Theorem~\ref{thm:impossible1} and Theorem~\ref{thm:impossible2}.
\end{remark}
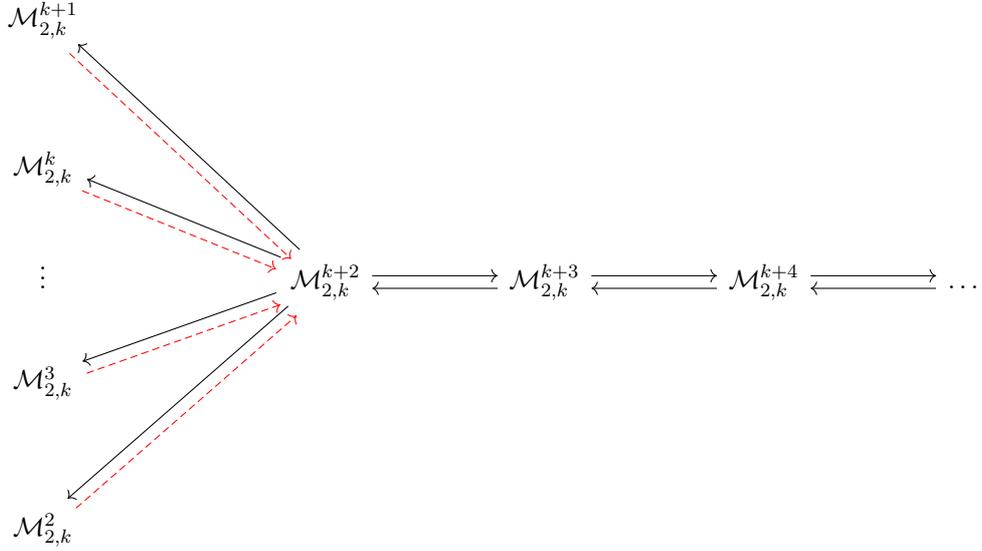
\begin{figure}[htb]
    \centering
\[\adjustbox{scale=0.9,center}{
  \begin{tikzcd}
	{\M_{2,k}^{k+1}} \\
	\\
	{\M_{2,k}^{k}} \\
	\vdots &&& {\M_{2,k}^{k+2}} && {\M_{2,k}^{k+3}} && {\M_{2,k}^{k+4}} && \ldots \\
	{\M_{2,k}^{3}} \\
	\\
	{\M_{2,k}^{2}}
	\arrow[shift right=1, from=4-4, to=1-1]
	\arrow[shift right=1, from=4-4, to=3-1]
	\arrow[shift right=1, from=4-4, to=5-1]
	\arrow[shift left=1, from=4-4, to=4-6]
	\arrow[shift left=1, from=4-6, to=4-8]
	\arrow[shift left=1, from=4-6, to=4-4]
	\arrow[shift left=1, from=4-8, to=4-6]
	\arrow[shift right=1, 
	dashed, color=red, from=1-1, to=4-4]
	\arrow[shift right=1,
	dashed, color=red, from=3-1, to=4-4]
	\arrow[shift right=1, 
	dashed, color=red, from=5-1, to=4-4]
	\arrow[shift right=1, from=4-4, to=7-1]
	\arrow[shift right=1,
	dashed, color=red, from=7-1, to=4-4]
	\arrow[shift left=1, from=4-8, to=4-10]
	\arrow[shift left=1, from=4-10, to=4-8]
\end{tikzcd}}
\]
    \caption{Minions from Theorem~\ref{thm:minions}. Solid black arrows indicate the existence of minion homomorphisms, whereas red dashed arrows indicate the non-existence of minion homomorphisms.}
    \label{fig:minions0}
\end{figure}
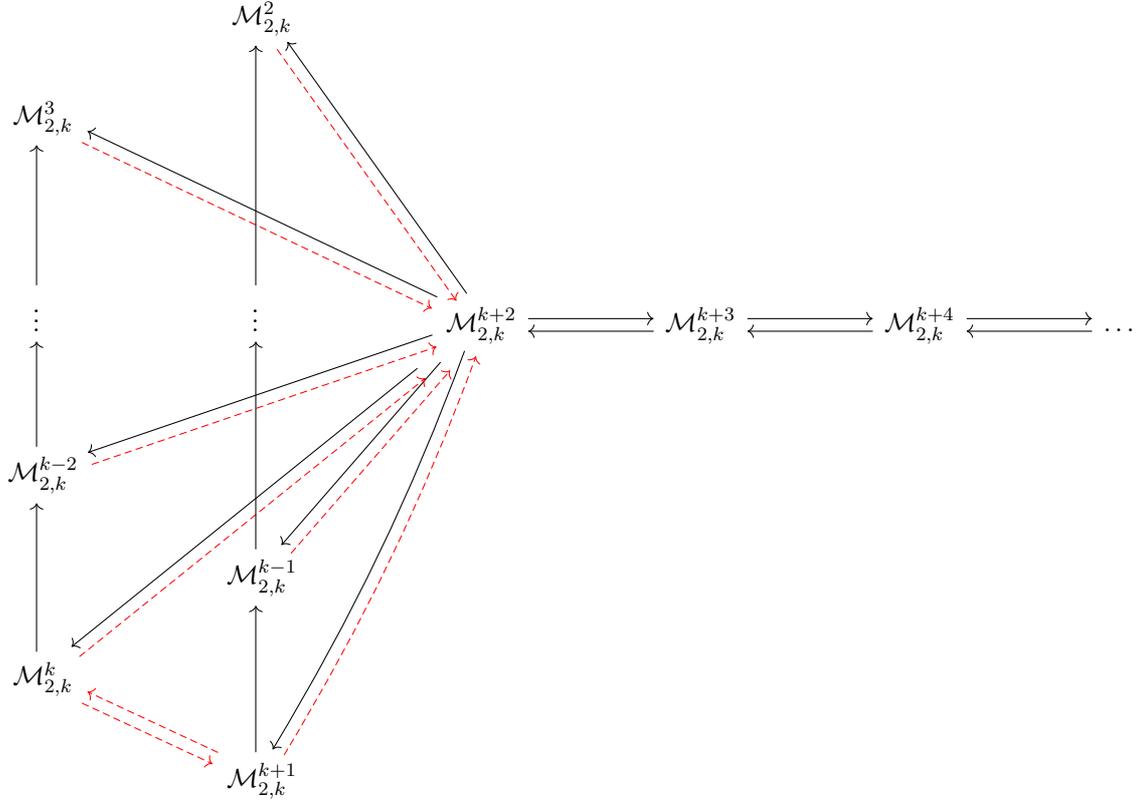
\begin{figure}[hbt]
    \centering
\[\adjustbox{scale=0.9,center}{
\begin{tikzcd}
	&& {\mathcal{M}_{2,k}^{2}} \\
	{\mathcal{M}_{2,k}^{3}} \\
	\\
	\\
	{\vdots\,\,\,} && {\vdots\,\,\,} && {\mathcal{M}_{2,k}^{k+2}} && {\mathcal{M}_{2,k}^{k+3}} && {\mathcal{M}_{2,k}^{k+4}} && \ldots \\
	\\
	{\mathcal{M}_{2,k}^{k-2}} \\
	&& {\mathcal{M}_{2,k}^{k-1}} \\
	{\mathcal{M}_{2, k}^k} \\
	&& {\mathcal{M}_{2,k}^{k+1}}
	\arrow[shift right=1, from=5-5, to=1-3]
	\arrow[shift right=1, from=5-5, to=2-1]
	\arrow[shift right=1, from=5-5, to=7-1]
	\arrow[shift left=1, from=5-5, to=5-7]
	\arrow[shift left=1, from=5-7, to=5-9]
	\arrow[shift left=1, from=5-7, to=5-5]
	\arrow[shift left=1, from=5-9, to=5-7]
	\arrow[shift right=1, dashed, color=red, from=1-3, to=5-5]
	\arrow[shift right=1, dashed, color=red, from=2-1, to=5-5]
	\arrow[shift right=1, dashed, color=red, from=7-1, to=5-5]
	\arrow[shift right=1, shorten <=7pt, from=5-5, to=8-3]
	\arrow[shift right=1, shorten >=7pt, dashed, color=red, from=8-3, to=5-5]
	\arrow[shift left=1, from=5-9, to=5-11]
	\arrow[shift left=1, from=5-11, to=5-9]
	\arrow[shift right=1, curve={height=6pt}, 
	dashed, color=red, from=10-3, to=5-5]
	\arrow[shift right=1, curve={height=-6pt}, 
	from=5-5, to=10-3]
	\arrow[shift left=1, from=10-3, to=8-3]
	\arrow[shift left=1, from=9-1, to=7-1]
	\arrow[shift right=1, shorten <=13pt, from=5-5, to=9-1]
	\arrow[shift right=1, shorten >=13pt, dashed, color=red, from=9-1, to=5-5]
	\arrow[shift left=1, from=5-1, to=2-1]
	\arrow[shift left=1, from=7-1, to=5-1]
	\arrow[shift left=1, from=8-3, to=5-3]
	\arrow[shift left=1, from=5-3, to=1-3]
	\arrow[shift right=1, dashed, color=red, from=9-1, to=10-3]
	\arrow[shift right=1, dashed, color=red, from=10-3, to=9-1]
\end{tikzcd}}\]
    \caption{
    Minions from Theorems~\ref{thm:minions}, \ref{thm:minion-2},
    \ref{thm:impossible1}, and~\ref{thm:impossible2}. Solid black arrows indicate the existence of minion homomorphisms, whereas red dashed arrows indicate the non-existence of minion homomorphisms. We have taken the case when $k$ is odd; if $k$ is even, then $\M_{2,k}^2$ and $\M_{2,k}^3$ are swapped.}
    \label{fig:minions}
\end{figure}

\noindent
Combining Theorems~\ref{thm:gadgets}, \ref{thm:sourceOfHardness}, and~\ref{thm:minions} gives the following.

\begin{corollary}\label{cor:hardness}
  $\PCSP(\LO_2^r,\LO_k^r)$ is \NP-hard for $r\geq k+2$. Moreover, there is no polynomial-time reduction using
  pp-constructions from 
  $\PCSP(\LO_2^{r'}, \LO_k^{r'})$ to
  $\PCSP(\LO_2^r, \LO_k^r)$ for
  $r'\geq k+2$ and $3\leq r<k+2$.
\end{corollary}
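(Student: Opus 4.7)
The plan is to assemble the corollary directly from the three key ingredients that have already been set up in the section: Theorem~\ref{thm:sourceOfHardness} (hardness of $\PCSP(\LO_2^r,\LO_k^r)$ for \emph{some} large $r$), Theorem~\ref{thm:minions} (the precise pattern of minion homomorphisms), and Theorem~\ref{thm:gadgets} (the correspondence between minion homomorphisms and pp-construction-based reductions). Essentially all of the technical work has been discharged already; what remains is careful bookkeeping about the direction of the reductions.

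For the hardness claim, I would proceed as follows. Theorem~\ref{thm:sourceOfHardness} yields some constant $r^\ast$ (and inspecting its proof, $r^\ast = m(1,k)+2$, which is comfortably $\geq k+2$ since $m(1,k)\geq k$ must hold for the selector $I(f)\subseteq[n]$ of size at most $k$ to be meaningful) such that $\PCSP(\LO_2^{r^\ast},\LO_k^{r^\ast})$ is \NP-hard. Now fix any $r\geq k+2$. By Theorem~\ref{thm:minions} we have $\M_{2,k}^{r^\ast}\leftrightarrows\M_{2,k}^{r}$, and in particular a minion homomorphism $\M_{2,k}^r\to\M_{2,k}^{r^\ast}$. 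By Theorem~\ref{thm:gadgets} this yields a polynomial-time (pp-construction-based) reduction from $\PCSP(\LO_2^{r^\ast},\LO_k^{r^\ast})$ to $\PCSP(\LO_2^r,\LO_k^r)$, transferring \NP-hardness to the latter.

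For the impossibility claim, I would argue by contradiction. Suppose $3\leq r<k+2$ and $r'\geq k+2$, and suppose that a pp-construction-based polynomial-time reduction from $\PCSP(\LO_2^{r'},\LO_k^{r'})$ to $\PCSP(\LO_2^r,\LO_k^r)$ did exist. By Theorem~\ref{thm:gadgets} this would be equivalent to a minion homomorphism $\M_{2,k}^r\to\M_{2,k}^{r'}$. By the equivalence part of Theorem~\ref{thm:minions}, we have $\M_{2,k}^{r'}\to\M_{2,k}^{k+2}$; composing with the hypothesised homomorphism yields $\M_{2,k}^r\to\M_{2,k}^{k+2}$, directly contradicting the non-existence assertion $\M_{2,k}^r\not\to\M_{2,k}^{k+2}$ from Theorem~\ref{thm:minions}.

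There is essentially no genuine obstacle in this corollary beyond what has already been done, since it is a pure corollary of the three cited theorems; the only minor care needed is to confirm that the ``large'' $r^\ast$ produced by Theorem~\ref{thm:sourceOfHardness} can be taken to satisfy $r^\ast\geq k+2$ (either by inspection of the proof, as sketched above, or by first choosing $r^\ast\geq k+2$ and then using the equivalences $\M_{2,k}^{k+2}\leftrightarrows\M_{2,k}^{r^\ast}$ to transport hardness up to $r^\ast$ in the first place). All the real content lies in Theorems~\ref{thm:sourceOfHardness} and~\ref{thm:minions}, which have been established separately.
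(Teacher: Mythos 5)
Your proposal is correct and matches the paper's intended argument, which simply states that Corollary~\ref{cor:hardness} follows by combining Theorems~\ref{thm:gadgets}, \ref{thm:sourceOfHardness}, and~\ref{thm:minions}; the directions of all reductions and minion homomorphisms are handled correctly. Your second (parenthetical) way of dealing with the unspecified $r^\ast$ from Theorem~\ref{thm:sourceOfHardness} is the cleaner one, since $\M_{2,k}^{k+2}\to\M_{2,k}^{r^\ast}$ holds for every $r^\ast\geq 2$ by Theorem~\ref{thm:minions}, so hardness can always be pulled back to arity $k+2$ and then pushed to any $r\geq k+2$ without inspecting the proof of Theorem~\ref{thm:sourceOfHardness}.
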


\noindent
The next theorem will allow us to lift NP-hardness of $\PCSP(\LO_2^r,\LO_k^r)$ to $\PCSP(\LO_{2+a}^r,\LO_{k+a}^r)$ for every positive integer $a$.

\begin{theorem}
\label{thm:colours}
For every $r\geq 3$ and $2\leq \ell<k$,  $\M_{\ell+1, k+1}^r \to \M_{\ell, k}^r$.
\end{theorem}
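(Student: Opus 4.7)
The plan is to construct an explicit minion homomorphism $\xi : \M_{\ell+1,k+1}^r \to \M_{\ell,k}^r$ by a \emph{shift-and-cap} trick. For each $n$-ary polymorphism $f : (\LO_{\ell+1}^r)^n \to \LO_{k+1}^r$, I would define
\[
\xi(f)(x_1, \ldots, x_n) \;=\; f(x_1 + 1, \ldots, x_n + 1) - 1.
\]
The intuition is that by shifting inputs from $[\ell]$ into $\{2, \ldots, \ell+1\}$ we never feed $f$ the value $1$; the goal is to show this forces every relevant output of $f$ to also avoid $1$, so that subtracting $1$ lands back in $[k]$ cleanly.

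First I would prove the key lemma: for every $\vec{v} \in \{2, \ldots, \ell+1\}^n$, one has $f(\vec{v}) \geq 2$. The argument is a padding trick: consider the $r \times n$ matrix $M$ whose first row is $\vec{v}$ and whose remaining $r-1$ rows are the all-ones vector $(1, \ldots, 1)$. Each column of $M$ has the form $(v_j, 1, \ldots, 1)$ with $v_j \geq 2$, which is an LO$_{\ell+1}$-tuple with unique maximum in position $1$. Applying $f$ rowwise gives the tuple $(f(\vec{v}), f(\vec{1}), \ldots, f(\vec{1}))$, which must be LO$_{k+1}$. Since $r \geq 3$, there are at least two rows carrying the common value $f(\vec{1})$, so the unique maximum must be row $1$, forcing $f(\vec{v}) > f(\vec{1}) \geq 1$, i.e.\ $f(\vec{v}) \geq 2$.

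Next I would verify that $\xi(f)$ is indeed a polymorphism from $\LO_\ell^r$ to $\LO_k^r$. Given any LO$_\ell$-matrix $M$ with entries in $[\ell]$, the shifted matrix $M+1$ has entries in $\{2, \ldots, \ell+1\}$ and its columns remain LO-tuples (with maxima in the same positions). Applying $f$ rowwise to $M+1$ yields an LO$_{k+1}$-tuple $(y_1, \ldots, y_r)$ with unique maximum; by the key lemma, each $y_i \geq 2$. Subtracting $1$ gives $(y_1-1, \ldots, y_r-1) \in [k]^r$ with the unique maximum in the same position, which is a valid LO$_k$-tuple. Minor-compatibility is then immediate: for any $\pi : [p] \to [q]$,
\[
\xi(f)_\pi(x_1, \ldots, x_q) \;=\; f(x_{\pi(1)} + 1, \ldots, x_{\pi(p)} + 1) - 1 \;=\; \xi(f_\pi)(x_1, \ldots, x_q),
\]
because the coordinatewise shift $x \mapsto x+1$ commutes with any reindexing.

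The only delicate point is the key lemma, and its crux is the use of $r \geq 3$: with two or more padding rows carrying the same value $f(\vec{1})$, uniqueness of the maximum must push $f(\vec{v})$ strictly above $f(\vec{1})$. With $r = 2$ the same padding argument would only yield $f(\vec{v}) \neq f(\vec{1})$, which is not strong enough to conclude $f(\vec{v}) \geq 2$. Everything else in the proof is straightforward bookkeeping.
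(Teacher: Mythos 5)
Your proof is correct and follows essentially the same approach as the paper's. The paper's homomorphism is direct restriction $f \mapsto \Crestrict{f}{[\ell]^p}$, justified by the padding matrix whose first row is $(a_1+1,\ldots,a_p+1)$ and whose other $r-1\geq 2$ rows all equal $(a_1,\ldots,a_p)$, which forces $f(a)<f(a+1)\leq k+1$ and hence $f(a)\in[k]$ on $[\ell]^p$; your shift-and-cap map $\xi(f)(x)=f(x+1)-1$ is the same construction conjugated by the order-isomorphism $[\ell]\cong\{2,\ldots,\ell+1\}$, with an identical padding lemma (using the all-ones rows instead of repeated $a$-rows, another cosmetic variation) and an identical use of $r\geq 3$ to force strictness.
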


\begin{proof}
Consider any $p$-ary polymorphism $f \in (\M_{\ell+1, k+1}^r)^{(p)}$. Consider the value of $f$ for inputs $a_1, \ldots, a_p \in [\ell]$; due to the following matrix with $r \geq 3$ rows
\[
\begin{pmatrix}
a_1+1 & \ldots & a_p +1\\
a_1 & \ldots & a_p\\
\vdots & \ddots & \vdots \\
a_1 & \ldots & a_p\\
\end{pmatrix},
\]
we can deduce that $f(a_1, \ldots, a_p) < f(a_1 + 1, \ldots, a_p+1) \in [k + 1]$. This implies that $f(a_1, \ldots, a_p) \in [k]$. We claim this implies that $f$, restricted to $[\ell]^p$, is a polymorphism of $\M_{\ell, k}^r$. Consider matrix of inputs $a_i^j$ where $i \in [p]$, $j \in [r]$, such that each column $a_i^1, \ldots, a_i^r$ is a tuple of $\LO_\ell^r$. Thus each column is also a tuple of $\LO_{\ell+1}^r$. Since $f$ is a polymorphism of $\PCSP(\LO_{\ell+1}^r, \LO_{k+1}^r)$, we deduce that
\[
(f(a_1^1, \ldots, a_p^1), \ldots, f(a_1^r, \ldots, a_p^r))
\]
is a tuple of $\LO_{k+1}^r$ i.e.~has a unique maximum. But we already know these values belong to $[k]$. Since they have a unique maximum, they are a tuple of $\LO_k^r$. Thus $f$, restricted to $[k]^p$, is a polymorphism of $\M_{\ell, k}^r$.

We now claim that the map $f \mapsto \Crestrict{f}{[k]^p}$
taking a $p$-ary polymorphism to its restriction on $[k]^p$ is a minion homomorphism $\M_{\ell+1, k+1}^r \to \M_{\ell, k}^r$. To see why, consider any polymorphism $f \in(\M_{\ell+1, k+1}^r)^{(p)}$ and a function $\pi : [p] \to [q]$. What we need to prove is that
\[
\Crestrict{(f_\pi)}{[k]^p}= (\Crestrict{f}{{[k]^p}})_\pi.
\]
But note that, for $x_1, \ldots, x_p \in [k]$,
\begin{multline*}
\Crestrict{((f_\pi)}{{[k]^p}})(x_1, \ldots, x_p)
= f_\pi(x_1, \ldots, x_p)
= f(x_{\pi(1)}, \ldots, x_{\pi(p)}) \\
= (\Crestrict{f}{[k]^p})(x_{\pi(1)}, \ldots, x_{\pi(p)})
= (\Crestrict{f}{[k]^p})_\pi (x_1, \ldots, x_p).
\end{multline*}
This concludes the proof.
\end{proof}

Theorems~\ref{thm:gadgets}, \ref{thm:colours}, \ref{thm:sourceOfHardness}, and~\ref{thm:minions} imply the following:

\begin{corollary}\label{cor:colours}
$\PCSP(\LO_\ell^r, \LO_k^r)$ is \NP-hard for $2 \leq \ell \leq k$ and $r \geq k - \ell +4$.
\end{corollary}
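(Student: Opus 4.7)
The plan is to derive Corollary~\ref{cor:colours} by chaining the already-established NP-hardness of $\PCSP(\LO_2^{r'}, \LO_{k'}^{r'})$ with a sequence of minion homomorphisms that ``increase both colour counts by one'' as supplied by Theorem~\ref{thm:colours}. The key observation is that Theorem~\ref{thm:colours} yields a minion homomorphism $\M_{\ell+1,k+1}^r \to \M_{\ell,k}^r$, and these compose so that, iterating $\ell-2$ times starting from $(\ell,k)$ and going down to $(2, k-\ell+2)$, we obtain a minion homomorphism
\[
\M_{\ell,k}^r \;\to\; \M_{\ell-1,k-1}^r \;\to\; \cdots \;\to\; \M_{2,k-\ell+2}^r.
\]
Composing minion homomorphisms is routine (the class of minion homomorphisms is closed under composition), so no difficulty arises here.

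Next, by Theorem~\ref{thm:gadgets}, such a minion homomorphism between the polymorphism minions gives a polynomial-time reduction (in the reverse direction) from $\PCSP(\LO_2^r, \LO_{k-\ell+2}^r)$ to $\PCSP(\LO_\ell^r, \LO_k^r)$. Hence it suffices to know that the source problem $\PCSP(\LO_2^r, \LO_{k-\ell+2}^r)$ is itself NP-hard whenever $r \geq k - \ell + 4$. But this is precisely Corollary~\ref{cor:hardness} applied with $k$ replaced by $k-\ell+2$: it says $\PCSP(\LO_2^{r}, \LO_{k'}^{r})$ is NP-hard for every $r \geq k'+2$, and for $k' = k-\ell+2$ the condition $r \geq k'+2$ is exactly $r \geq k - \ell + 4$.

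Putting everything together, under the hypotheses $2 \leq \ell \leq k$ and $r \geq k-\ell+4$, NP-hardness of $\PCSP(\LO_2^r, \LO_{k-\ell+2}^r)$ transfers along the constructed reduction to NP-hardness of $\PCSP(\LO_\ell^r, \LO_k^r)$. The boundary case $\ell = 2$ degenerates to applying Corollary~\ref{cor:hardness} directly with no iteration of Theorem~\ref{thm:colours} needed, and the edge case $\ell = k$ gives $k' = 2$, $r \geq 4$, which is covered by the same framework. There is essentially no obstacle in the argument: every nontrivial step (the source of hardness, the lifting to all sufficiently large uniformities via Theorem~\ref{thm:minions}, the minion homomorphism between $(\ell+1,k+1)$- and $(\ell,k)$-minions, and the conversion of minion homomorphisms into gadget reductions) has already been done, and the corollary is simply their combination.
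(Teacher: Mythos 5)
Your proof is correct and follows exactly the route the paper intends: the paper gives no explicit argument for Corollary~\ref{cor:colours}, merely citing Theorems~\ref{thm:gadgets}, \ref{thm:colours}, \ref{thm:sourceOfHardness}, and~\ref{thm:minions}, and your chain $\M_{\ell,k}^r \to \cdots \to \M_{2,k-\ell+2}^r$ via Theorem~\ref{thm:colours} combined with Theorem~\ref{thm:gadgets} and Corollary~\ref{cor:hardness} (itself derived from the latter three theorems) is precisely the intended combination. One small caveat worth being aware of: the stated hypothesis of Theorem~\ref{thm:colours} is $2\leq\ell<k$, so the boundary case $\ell=k$ you dismiss as ``covered by the same framework'' in fact falls outside that theorem's literal statement (and Corollary~\ref{cor:hardness} rests on Theorem~\ref{thm:minions}, stated for $k\geq 3$), though the proof of Theorem~\ref{thm:colours} goes through unchanged when $\ell=k$.
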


\medskip
The rest of this section is devoted to the proof of Theorem~\ref{thm:minions}.

In order to construct  minion homomorphisms, as the first milestone we  exhibit a simple
necessary and sufficient condition for the existence of a minion homomorphism to
$\M_{2, k}^r$, and a sufficient condition for such a homomorphism to not exist.

\begin{lemma}\label{lem:suff}
Fix $r \geq 2$ and $ k \geq 3$. Consider any polymorphism
minion $\M$. For any element $f \in \M^{(r)}$, let $f_1(x, y) = f(y, x, \ldots, x)$, $f_2(x, y) = f(x, y, x, \ldots, x)$, \ldots, $f_r(x, y) = f(x, \ldots, x, y)$. Now, $\M \to \M_{2, k}^r$ if and only if there exists some $\omega : \M^{(2)} \to [k]$ such that, for all $f \in \M^{(r)}$, there exists a unique maximum value among $\omega(f_1), \ldots, \omega(f_r)$.
\end{lemma}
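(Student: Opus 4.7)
The plan is to prove both directions via explicit constructions that exploit the ``sets-of-coordinates'' view of polymorphisms of $\M_{2,k}^r$ from Remark~\ref{rem:reinterpret}. Under that view, a polymorphism $h \in (\M_{2,k}^r)^{(p)}$ is a map $2^{[p]} \to [k]$, and its minors behave simply: for $\pi : [p] \to [q]$ one has $h_\pi(S) = h(\pi^{-1}(S))$ for all $S \subseteq [q]$. For any $S \subseteq [p]$ it is convenient to write $\chi_S : [p] \to [2]$ for the map with $\chi_S(i) = 2$ if $i \in S$ and $\chi_S(i) = 1$ otherwise; then each $f_i$ in the statement is precisely the minor $f_{\chi_{\{i\}}}$ of $f$.

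For the forward direction, suppose $\xi : \M \to \M_{2,k}^r$ is a minion homomorphism. I define $\omega(g) := \xi(g)(\{2\})$ for each $g \in \M^{(2)}$. Using the minion homomorphism property of $\xi$ together with the preimage formula for minors, one computes $\omega(f_i) = \xi(f_{\chi_{\{i\}}})(\{2\}) = \xi(f)_{\chi_{\{i\}}}(\{2\}) = \xi(f)(\chi_{\{i\}}^{-1}(\{2\})) = \xi(f)(\{i\})$. Since $\{1\}, \ldots, \{r\}$ is a partition of $[r]$ and $\xi(f) \in (\M_{2,k}^r)^{(r)}$, Remark~\ref{rem:reinterpret} guarantees a unique maximum among $\xi(f)(\{1\}), \ldots, \xi(f)(\{r\})$, which is exactly the property required of $\omega$.

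For the reverse direction, given $\omega$ as in the statement, I define $\xi$ on every $g \in \M^{(p)}$ and every $S \subseteq [p]$ by $\xi(g)(S) := \omega(g_{\chi_S})$. Two properties need to be checked. First, $\xi(g) \in (\M_{2,k}^r)^{(p)}$: given any partition $A_1, \ldots, A_r$ of $[p]$, let $\sigma : [p] \to [r]$ be the map with $\sigma(i) = j$ iff $i \in A_j$ and set $f := g_\sigma \in \M^{(r)}$. A direct check shows $\chi_{\{i\}} \circ \sigma = \chi_{A_i}$ (both send $j$ to $2$ precisely when $j \in A_i$), so $f_i = (g_\sigma)_{\chi_{\{i\}}} = g_{\chi_{A_i}}$ and hence $\omega(f_i) = \xi(g)(A_i)$. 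The hypothesis on $\omega$ applied to this $f$ then yields a unique maximum, which by Remark~\ref{rem:reinterpret} certifies $\xi(g)$ as a polymorphism. Second, $\xi$ preserves minors: for $\pi : [p] \to [q]$ and $S \subseteq [q]$, both $\xi(g_\pi)(S) = \omega((g_\pi)_{\chi_S}) = \omega(g_{\chi_S \circ \pi})$ and $\xi(g)_\pi(S) = \xi(g)(\pi^{-1}(S)) = \omega(g_{\chi_{\pi^{-1}(S)}})$ coincide once one observes the identity $\chi_S \circ \pi = \chi_{\pi^{-1}(S)}$, which is immediate since both are the indicator map $[p] \to [2]$ of $\pi^{-1}(S)$.

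The main technical point is really just bookkeeping: one must switch fluidly between the tuple view $[2]^p \to [k]$ and the set view $2^{[p]} \to [k]$ of polymorphisms of $\M_{2,k}^r$, and between minors expressed as pre-composition with a map and minors expressed as preimage on subsets. Once this translation is pinned down via the notation $\chi_S$, both directions reduce to routine calculations, and I do not foresee a substantive obstacle beyond keeping the two viewpoints aligned.
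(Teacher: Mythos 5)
Your proof is correct, and it reaches the same conclusion as the paper, but via a more hands-on route. The paper's proof is essentially a two-line application of the free-structure machinery: it constructs $\F_\M(\LO_2^r)$, observes that this structure has domain $\M^{(2)}$ and relation $\{(f_1,\ldots,f_r) : f \in \M^{(r)}\}$ (by arranging the $r$ tuples of $R^{\LO_2^r}$ into a diagonal matrix), so that the map $\omega$ in the lemma statement is precisely a homomorphism $\F_\M(\LO_2^r) \to \LO_k^r$, and then invokes the adjunction, Theorem~\ref{thm:adjunction}, to convert that to $\M \to \M_{2,k}^r$. You instead build the minion homomorphism $\xi$ directly in both directions and verify its defining properties by explicit minor calculations in the subset language, using the identities $h_\pi(S) = h(\pi^{-1}(S))$, $\chi_S \circ \pi = \chi_{\pi^{-1}(S)}$, and $\chi_{\{i\}} \circ \sigma = \chi_{A_i}$, together with the reinterpretation of Remark~\ref{rem:reinterpret}. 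In effect you have re-derived the relevant instance of the free-structure adjunction from scratch rather than citing it. Your version is more elementary and self-contained, at the cost of additional bookkeeping; the paper's version is shorter because it delegates exactly this verification to the general theorem of~\cite{BBKO21}. The calculations you give are all correct, including the point that empty parts $A_i$ (giving a non-surjective $\sigma$) cause no problem.
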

\begin{proof}
We construct $\F_\M(\LO_2^r)$. The tuples of the relation of $\LO_2^r$ are all the $r$-dimensional vectors containing exactly one $2$, with the other entries equal to $1$. We can arrange these tuples into an $r$-by-$r$ matrix where the diagonal contains 2 and all the other elements are 1. Replacing 1 with $x$ and 2 with $y$, and applying $f$, we get the definitions of $f_1, \ldots, f_r$. Thus the relation of $\F_\M(\LO_2^r)$ contains precisely the tuples of the form $(f_1, \ldots, f_r)$ for $f \in \M^{(r)}$. 

Thus our condition amounts to the existence of a homomorphism $\omega : \F_\M(\LO_2^r) \to \LO_k^r$. By Theorem~\ref{thm:adjunction}, this is equivalent to $\M \to \M_{2, k}^r$.
\end{proof}

\begin{corollary}\label{cor:suff}
Fix $r \geq 2$, $r' \geq 2$ and $k \geq 3$. If $f\{1\}, \ldots, f\{r'\}$ has a unique maximum for any function $f : 2^{[r']} \to [k]$ such that, for any partition $A_1, \ldots, A_r$ of $[r']$, $f(A_1), \ldots, f(A_r)$ has a unique maximum, then $\M_{2, k}^r \to \M_{2, k}^{r'}$.
\end{corollary}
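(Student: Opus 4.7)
The plan is to invoke Lemma~\ref{lem:suff} with $\M = \M_{2,k}^r$ (and with $r$ in the lemma replaced by $r'$), which reduces the task of building a minion homomorphism $\M_{2,k}^r \to \M_{2,k}^{r'}$ to exhibiting a map $\omega \colon (\M_{2,k}^r)^{(2)} \to [k]$ such that for every $f \in (\M_{2,k}^r)^{(r')}$, the sequence $\omega(f_1), \ldots, \omega(f_{r'})$ has a unique maximum, where $f_i(x,y) = f(x,\ldots,x,y,x,\ldots,x)$ with $y$ in position $i$.

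First I would adopt the subset view from Remark~\ref{rem:reinterpret}: any $g \in (\M_{2,k}^r)^{(p)}$ is identified with a function $\tilde g \colon 2^{[p]} \to [k]$ recording where the $2$s sit. The natural candidate is then
\[
\omega(g) \;:=\; g(1,2) \;=\; \tilde g(\{2\}).
\]
Unfolding, for any $f \in (\M_{2,k}^r)^{(r')}$ one has $\omega(f_i) = f_i(1,2) = f(1,\ldots,1,2,1,\ldots,1) = \tilde f(\{i\})$ with the unique $2$ in position~$i$. Hence the condition required by Lemma~\ref{lem:suff} translates to: the values $\tilde f(\{1\}), \ldots, \tilde f(\{r'\})$ have a unique maximum.

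Finally, I would verify that every $\tilde f$ coming from such an $f$ meets the partition hypothesis of the corollary, after which the assumption delivers exactly the required uniqueness. Given any partition $A_1, \ldots, A_r$ of $[r']$, build the $r \times r'$ matrix whose $(j,c)$-th entry is $2$ if $c \in A_j$ and $1$ otherwise; each column lies in $R^{\LO_2^r}$ because exactly one part contains any given $c$, so applying $f$ row by row yields a tuple of $\LO_k^r$, i.e.\ $\tilde f(A_1), \ldots, \tilde f(A_r)$ has a unique maximum. That is precisely the premise of the corollary, which then forces $\tilde f(\{1\}), \ldots, \tilde f(\{r'\})$ to have a unique maximum as well, discharging the obligation from Lemma~\ref{lem:suff}.

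There is essentially no obstacle: once Lemma~\ref{lem:suff} is in hand and the subset dictionary of Remark~\ref{rem:reinterpret} is set up, the corollary's hypothesis is a direct transcription of what must be checked. The only minor bookkeeping concerns empty parts (they produce all-$1$ rows, for which the subset is $\emptyset$), which causes no trouble since each column of the matrix still contains exactly one $2$.
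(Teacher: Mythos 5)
Your proposal is correct and matches the paper's own proof: both apply Lemma~\ref{lem:suff} with $\omega(g) = g(1,2)$, translate through the subset view of Remark~\ref{rem:reinterpret} so that $\omega(f_i) = f\{i\}$, and invoke the corollary's partition hypothesis to conclude. The paper leaves the partition-verification step implicit while you spell it out with the $r \times r'$ matrix, but the argument is the same.
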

\begin{proof}
Apply Lemma~\ref{lem:suff} using $\omega(f) = f(1, 2)$, and reinterpreting polymorphisms as indicated by Remark~\ref{rem:reinterpret}. Observe that the polymorphisms $f \in (\M_{2, k}^r)^{(r')}$ can then be seen as functions from $2^{[r']}$ to $[k]$ such that, for any partition $A_1, \ldots, A_r$ of $[r']$, a unique maximum exists among $f(A_1), \ldots, f(A_r)$. The precondition of our corollary then implies that $f\{1\}, \ldots, f\{r'\}$ have a unique maximum. By definition, $\omega(f_i) = f_i(1, 2) = f\{i\}$, where $f_i$ is defined as in Lemma~\ref{lem:suff}. Thus a unique maximum exists among $\omega(f_1), \ldots, \omega(f_r)$. It follows, by Lemma~\ref{lem:suff}, that $\M_{2,k}^r \to \M_{2,k}^{r'}$.
\end{proof}

As a second milestone, we introduce the notion of $(k,p)$-edge co-colouring and
show in Corollary~\ref{cor:sparse2} and Lemma~\ref{lem:sparse1} that cliques
have no co-colourings with certain parameters. For this, we will need three
technical lemmata, namely Lemmata~\ref{lem:3uniform},~\ref{lem:starCycle}, and~\ref{lem:graphs2}.

\begin{definition}
A $(k,p)$-edge co-colouring
of a graph $G$ is an assignment of $k$ colours to the edges of $G$ such that any $p$ disjoint edges of $G$ are not assigned the same colour. Two edges are considered disjoint if the sets of their endpoints are disjoint.

The ``co-'' prefix is included since, in such a colouring, we colour \emph{disjoint} edges with different colours, as opposed to incident edges as with normal edge colouring.
\end{definition}

\begin{lemma}\label{lem:3uniform}
A graph with $k + 3$ vertices and $3k$ edges has an independent set of size 2.
\end{lemma}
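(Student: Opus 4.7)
The plan is to prove this by a one-line edge-counting argument: an independent set of size $2$ in a graph is the same thing as a pair of non-adjacent vertices, so the claim is equivalent to saying that the graph cannot be complete. I would therefore compare the number of edges against the total number of vertex pairs.

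Concretely, first I would compute $\binom{k+3}{2} = \frac{(k+3)(k+2)}{2}$, and then form the difference
\[
\binom{k+3}{2} - 3k \;=\; \frac{(k+3)(k+2) - 6k}{2} \;=\; \frac{k^2 - k + 6}{2}.
\]
Next I would observe that this quadratic in $k$ has discriminant $1 - 24 < 0$ and positive leading coefficient, so it is strictly positive for every real $k$ and in particular for every non-negative integer $k$. Hence $3k < \binom{k+3}{2}$ always, meaning the graph must omit at least one pair of vertices as an edge; any such omitted pair furnishes the required independent set of size two.

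There is no real obstacle here; the lemma reduces to checking a single elementary quadratic inequality, and I expect it to be invoked later as a convenient pigeonhole-style building block (with the label \texttt{lem:3uniform} referring to its role in a subsequent argument about $3$-uniform structures rather than to any hidden content inside the statement itself).
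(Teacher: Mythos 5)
Your proof is correct and is essentially the same counting argument the paper uses: both compare the number $\binom{k+3}{2}$ of vertex pairs against the $3k$ edges and observe that the strict inequality $\binom{k+3}{2} > 3k$ holds for all real $k$, so some pair remains a non-edge. You simply make the verification of that inequality more explicit by computing the discriminant of $k^2 - k + 6$.
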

\begin{proof}
There are $(k+3)(k+2) / 2$ possible independent sets of size 2. Each edge eliminates one of them. Since $(k+3)(k+2) / 2 > 3k$ for $k \in \mathbb{R}$, it follows that at least one independent set of size 2 remains after adding in all the edges.
\end{proof}

\begin{lemma}\label{lem:starCycle}
A graph with $n$ vertices where all pairs of edges intersect is either a cycle of length 3 or a star graph (ignoring vertices with no neighbours).
\end{lemma}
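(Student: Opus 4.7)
The plan is to pick an arbitrary edge and use the pairwise-intersection condition to force the structure. Let $e_1 = \{u, v\}$ be any edge. Every other edge must intersect $e_1$ and therefore contains $u$ or $v$.

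First I would handle the easy case: if every edge contains some common vertex (say all edges contain $u$), then the subgraph on the non-isolated vertices is exactly a star centred at $u$, and we are done. So assume otherwise, namely that there exist edges $e_2, e_3$ with $u \notin e_2$ and $v \notin e_3$. Since $e_2$ intersects $e_1$ it has the form $\{v, w\}$ for some $w \neq u$, and similarly $e_3 = \{u, x\}$ for some $x \neq v$. Now $e_2 \cap e_3 \neq \emptyset$ forces $w = x$, so $u, v, w$ are distinct and $\{u,v\}, \{v,w\}, \{u,w\}$ are all edges — a triangle.

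The main step is then to rule out any further edge. Suppose $e_4$ is a fourth edge. Then $e_4$ must intersect each of $\{u,v\}$, $\{v,w\}$, $\{u,w\}$; but $e_4$ has only two endpoints, so the two endpoints must cover one element from each of these three pairs. A direct case analysis (or just noting that any vertex outside $\{u,v,w\}$ meets none of the three pairs) shows $e_4 \subseteq \{u, v, w\}$, so $e_4$ is one of the three edges already present. Hence the non-isolated part of $G$ is precisely the triangle on $\{u,v,w\}$, i.e.\ a cycle of length $3$.

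I do not anticipate any real obstacle: the only subtlety is the convention about isolated vertices (handled by the ``ignoring vertices with no neighbours'' clause in the statement) and the implicit assumption that the graph is simple, so that duplicate edges do not arise. Writing out the two cases above cleanly yields the result.
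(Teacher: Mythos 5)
Your proof is correct and follows essentially the same strategy as the paper: fix an edge $\{u,v\}$, split into the case where all edges pass through one of its endpoints (star) versus the case where two edges omit $u$ and $v$ respectively (forcing a triangle), and then show no further edge can coexist with a triangle. The only difference is cosmetic — you rule out a fourth edge by observing that a vertex outside $\{u,v,w\}$ would force its partner to lie in all three triangle pairs, while the paper intersects the candidate edge directly with two of the triangle edges — but both arguments are the same in substance.
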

\begin{proof}
This is immediately true if there are no edges, so suppose at least one edge,
  say $\{x, y\}$, exists. All other edges must intersect $\{x, y\}$, so they
  must be of the form $\{x, a\}$ or $\{y, b\}$ for some $\{a, b\}$. If all the
  edges are of the form $\{x, a\}$ or $\{y, b\}$ respectively, then the graph is a
  star, as required. Otherwise, there exist two edges $\{x, a\}$, $\{y, b\}$. For
  these edges to intersect, it must be the case that $a = b$; thus we have found
  the cycle $x, y, a = b$ in our graph. To show that only this cycle exists
  within our graph, consider any edge $\{u, v\}$ in our graph. It must intersect
  $\{x, y\}$, so we can assume, without loss of generality, that $u = x$. It must
  also intersect $\{y, b\}$, so we deduce that $v \in \{ y, b\}$. Thus our edge must already exist within the cycle.
\end{proof}

\begin{lemma}\label{lem:graphs2}
If $G$ is a graph with $k + 3$ vertices with a $(k,2)$-edge co-colouring, then $G$ has an independent set of size 2.
\end{lemma}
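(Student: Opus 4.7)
The plan is to prove the contrapositive: if $G$ on $k+3$ vertices has no independent set of size~$2$, i.e.\ $G = K_{k+3}$, then $G$ admits no $(k,2)$-edge co-colouring. Supposing for contradiction that $K_{k+3}$ does admit such a co-colouring, each of the at most $k$ colour classes is an intersecting family of edges, so by Lemma~\ref{lem:starCycle} each is either a star or a triangle. Let $s$ be the number of star-classes, $t$ the number of triangle-classes (so $s + t \leq k$), and $C \subseteq V(K_{k+3})$ the set of vertices serving as centre of some star-class. Then $|C| \leq s \leq k$, so the set $V \setminus C$ of non-centres has at least $k + 3 - s \geq 3$ elements.

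The key step will be a local degree count at a non-centre $v \in V \setminus C$. Since $v$ is not a centre of any star-class, every star-class edge incident to $v$ must have the form $\{c,v\}$ with $c \in C$, and each such edge lies in at most one star-class; hence at most $|C| \leq s$ of the $k+2$ edges at $v$ lie in star-classes. On the other hand, every triangle-class containing $v$ contributes exactly two edges incident to $v$, so if $t_v$ denotes the number of triangle-classes containing $v$, then $k+2 \leq s + 2t_v$, giving $t_v \geq (k+2-s)/2$.

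A global double count then closes the argument. Summing over $v \in V \setminus C$ yields $\sum_{v} t_v \geq (k+3-s)(k+2-s)/2$, while the trivial upper bound $\sum_v t_v \leq 3t \leq 3(k-s)$ (each triangle-class has three vertices) gives
\[ (k+3-s)(k+2-s) \leq 6(k-s). \]
Setting $m = k - s \geq 0$, this simplifies to $m^2 - m + 6 \leq 0$, impossible since the discriminant is $-23 < 0$. The main obstacle is choosing the right object to count: once one observes that non-centres are abundant, that each can be ``reached'' by at most $s$ star-classes, and that every triangle-class touching a non-centre is rigidly expensive (contributing exactly two of the $k+2$ incident edges), the quadratic-versus-linear double count is immediate and yields the contradiction.
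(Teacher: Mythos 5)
Your proof is correct, and it takes a genuinely different route from the paper's. The paper argues by induction on $k$: if some colour class is covered by a single vertex, delete that vertex and recurse; otherwise Lemma~\ref{lem:starCycle} forces every colour class to be a triangle, so $G$ has at most $3k$ edges, and Lemma~\ref{lem:3uniform} finishes. You instead go directly: taking the contrapositive ($G=K_{k+3}$), you again invoke Lemma~\ref{lem:starCycle} to split the colour classes into stars and triangles, but then avoid induction entirely in favour of a double count of triangle incidences at non-centre vertices, deriving the contradiction $(k+3-s)(k+2-s)\leq 6(k-s)$, i.e.\ $m^2-m+6\leq 0$ with $m=k-s$, which has no real solutions. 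Both arguments are short; the paper's has the advantage of not needing to track star centres (and therefore no care is needed with degenerate stars with $\leq 1$ edge, for which the ``centre'' is ambiguous -- your argument implicitly handles this by choosing one endpoint as the centre, which is fine but worth stating). Your argument has the advantage of being a one-shot counting proof with no recursion and of dispensing with Lemma~\ref{lem:3uniform}, at the mild cost of a slightly fiddlier local degree estimate. Either way, the dependence on Lemma~\ref{lem:starCycle} is essential in both.
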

\begin{proof}
We prove this fact inductively. The result is immediate for $k = 0$ (in which case the graph has no edges). 

Suppose a colour $c$ exists for which a vertex exists that is adjacent to all edges of colour $c$. Removing this vertex from the graph and applying the inductive hypothesis gives us the required independent set. Thus suppose that no vertex exists that covers all edges of a particular colour. In this case, by Lemma \ref{lem:starCycle}, it follows that, for each colour, the edges with that colour form a cycle of length 3. Thus our graph has $3k$ edges at most. It follows by Lemma \ref{lem:3uniform} that there exists an independent set of size 2.
\end{proof}

\begin{corollary}\label{cor:sparse2}
Thus, for any $k \in \mathbb{N}$, $K_{k+3}$ has no $(k,2)$-edge co-colouring.
\end{corollary}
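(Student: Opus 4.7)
The plan is to apply Lemma~\ref{lem:graphs2} contrapositively. The complete graph $K_{k+3}$ has exactly $k+3$ vertices by definition, matching the hypothesis of Lemma~\ref{lem:graphs2}. However, $K_{k+3}$ has no independent set of size $2$, since every pair of distinct vertices is joined by an edge. If $K_{k+3}$ admitted a $(k,2)$-edge co-colouring, then Lemma~\ref{lem:graphs2} would yield an independent set of size $2$ in $K_{k+3}$, a contradiction.

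Thus the only step is to invoke the lemma and observe the contradiction with the defining property of a complete graph. There is no real obstacle here, since all the combinatorial work has been done in Lemmata~\ref{lem:3uniform},~\ref{lem:starCycle}, and~\ref{lem:graphs2}; the corollary merely specialises the statement to the extremal case where independent sets of size $2$ are forbidden.
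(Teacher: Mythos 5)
Your proof is correct and is exactly the paper's argument: the paper simply states that the corollary is the contrapositive of Lemma~\ref{lem:graphs2}, and you spell out that contrapositive explicitly by noting $K_{k+3}$ has $k+3$ vertices but no independent set of size~$2$.
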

\begin{proof}
This is the contrapositive of Lemma~\ref{lem:graphs2}.
\end{proof}
\begin{remark}
This corollary is tight, in the sense that $K_{k+2}$ admits a $(k,2)$-edge co-colouring, viz.~colour edge $\{ x, y \}$ with $\max\{x, y, 3\}$, assuming that the vertex-set of $K_{k+2}$ is $[k+2]$. This colouring will reappear implicitly in Theorem~\ref{thm:impossible1} and Theorem~\ref{thm:impossible2}.
\end{remark}

\begin{lemma}\label{lem:sparse1}
For any $k \geq 3$, $K_{10k}$ has no $(k,3)$-edge co-colouring.
\end{lemma}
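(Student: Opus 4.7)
The plan is to argue by a simple edge-counting argument. A $(k,3)$-edge co-colouring of $K_{10k}$ partitions its edges into $k$ colour classes, each of which has matching number at most $2$ (since any three pairwise disjoint edges of the same colour are forbidden by definition). I will show that $k$ such classes cannot account for all $\binom{10k}{2}$ edges.

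First I would establish the structural bound: any graph $G$ on $n$ vertices with matching number $\nu(G) \leq 2$ has at most $4n - 10$ edges. Fix a maximum matching $M$. If $|M| \leq 1$, then $G$ is contained in a star or a triangle (plus isolated vertices), so has at most $\max(n-1, 3)$ edges, well below $4n-10$ for $n$ of interest. If $|M| = 2$, say $M = \{\{u_1,v_1\}, \{u_2,v_2\}\}$, then every other edge of $G$ must share an endpoint with one of these, for otherwise we would extend $M$ to a matching of size $3$, contradicting maximality. Hence every edge of $G$ has at least one endpoint in $S = \{u_1,v_1,u_2,v_2\}$. There are at most $\binom{4}{2} = 6$ edges inside $S$ and at most $4(n-4)$ edges with exactly one endpoint in $S$, giving at most $4n - 10$ edges in total.

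Next I would sum this bound over the $k$ colour classes, each a subgraph of $K_{10k}$ on at most $n = 10k$ vertices, obtaining a total of at most $k(4 \cdot 10k - 10) = 40k^2 - 10k$ edges. But $K_{10k}$ has $\binom{10k}{2} = 50k^2 - 5k$ edges, and
\[
(50k^2 - 5k) - (40k^2 - 10k) = 10k^2 + 5k > 0
\]
for all $k \geq 1$. Hence the colour classes cannot cover every edge of $K_{10k}$, a contradiction. Thus no $(k,3)$-edge co-colouring of $K_{10k}$ exists.

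The argument has essentially no obstacle: the only substantive step is the matching-number-$2$ edge bound, which is a direct maximality-of-matching argument, and the slack between $50k^2 - 5k$ and $40k^2 - 10k$ is generous enough that the proof works for every $k \geq 1$ (the constant $10$ in $K_{10k}$ is in fact far from tight, but this is what the later application requires).
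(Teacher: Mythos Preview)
Your proof is correct. The key observation---that each colour class has matching number at most $2$, hence is edge-bounded---is sound, and your $4n-10$ bound (via a vertex cover of size $4$ coming from a maximum matching of size $2$) is valid for $n = 10k \geq 30$; the counting then goes through with room to spare.

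The paper takes a different route: rather than bounding every colour class, it applies pigeonhole to find a single colour class with at least $5(10k-1) \geq 45k$ edges, and then greedily extracts three pairwise disjoint edges from that class by repeatedly picking an edge and deleting the (at most $20k$) edges meeting it. Your argument is more structural---it invokes an Erd\H{o}s--Gallai-type bound and sums globally---while the paper's is a direct extraction argument on the densest class. Your approach has the advantage of making the quantitative slack transparent (one sees immediately that the constant $10$ could be lowered), whereas the paper's argument is slightly more self-contained in that it needs no auxiliary structural lemma.
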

\begin{proof}
Suppose such a colouring exists, and let $c$ be the most frequent colour. Since there are $10k(10k-1)/2$ edges in total, and $k$ colours, it follows that there are at least $5(10k-1) \geq 45k$ edges of colour $c$. Select an edge $e$ from these, removing all edges that share an endpoint with it (of which there are at most $20k$). There now remain at least $25k$ edges. Do the same thing again, selecting an edge $e'$, and leaving at least $5k$ edges. Select $e''$ from among these edges. The edges $e, e', e''$ contradict the fact that $K_{10k}$ is a $(k,3)$-edge co-colouring.
\end{proof}

As the third milestone, using the developed results so far, we will establish,
firstly, minion homomorphisms in
Theorems~\ref{thm:minion+1},~\ref{thm:minion-2}, and~\ref{thm:minion+2}, and,
secondly, the lack of minion homomorphisms in Theorems~\ref{thm:impossible1} and~\ref{thm:impossible2}.

\begin{theorem}\label{thm:minion+1}
For $k \geq 3$, $r \geq 10k+1$, $\M_{2, k}^r \to \M_{2, k}^{r+1}$.
\end{theorem}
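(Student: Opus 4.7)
The plan is to invoke Corollary~\ref{cor:suff} with $r' = r+1$: it suffices to show that every function $f : 2^{[r+1]} \to [k]$ satisfying the polymorphism condition on all partitions of $[r+1]$ into $r$ parts (allowing empty-set padding) also has the property that $f\{1\}, \ldots, f\{r+1\}$ has a unique maximum. I will argue by contradiction: assuming the maximum of the singleton values, call it $M$, is attained by at least two singletons, I set $S = \{i \in [r+1] : f\{i\} = M\}$, so $|S| \geq 2$.

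The first step will be to establish that $f\{u, v\} > M$ for every pair $\{u, v\} \subseteq [r+1] \setminus S$: I apply the polymorphism condition to the partition of $[r+1]$ consisting of $\{u, v\}$ together with the $r-1$ singletons $\{w\}$, $w \in [r+1] \setminus \{u, v\}$, which uses exactly $r$ parts; the value $M$ appears among the singletons $|S| \geq 2$ times, and every other singleton is strictly less than $M$, so uniqueness of the maximum forces $f\{u, v\} > M$. The second step will be to show that no three pairwise disjoint pairs $\{u_i, v_i\}_{i=1,2,3} \subseteq [r+1] \setminus S$ can share an $f$-value $c^* > M$: given such three disjoint pairs, I form the partition consisting of them and the remaining $r-5$ singletons, padded with two empty sets to reach $r$ parts; among the $r - 2$ nonempty parts, $c^*$ appears three times, $M$ appears $|S| \geq 2$ times (from the untouched singletons in $S$), every other singleton value is strictly less than $M$, and $f(\emptyset)$ appears twice in the padded sequence; since $c^* > M$ is the largest candidate and occurs three times, no unique maximum exists, contradicting the polymorphism condition on $f$ (using the observation from the proof of Theorem~\ref{thm:sourceOfHardness} that the unique maximum of a padded partition must be attained at a nonempty part).

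Combining both steps, the complete graph $K_{[r+1] \setminus S}$, with edges coloured by $\{u, v\} \mapsto f\{u, v\}$, becomes a $(k, 3)$-edge co-colouring on $r + 1 - |S|$ vertices. In the extremal case $|S| = 2$, this gives $r + 1 - |S| = r - 1 \geq 10k$, so the subgraph contains $K_{10k}$, directly contradicting Lemma~\ref{lem:sparse1}; when $|S| \geq 3$ there is even more room, since essentially the same argument from Step 1 extends the heavy subgraph to include all pairs inside $[r+1] \setminus \{s\}$ for any fixed $s \in S$. I expect the main obstacle will be uniformly handling the cases $|S| = 2, 3, \geq 4$ in the first step and ensuring the empty-set padding bookkeeping is done correctly; the hypothesis $r \geq 10k + 1$ is tight precisely for the case $|S| = 2$, where the available subgraph is exactly $K_{r-1}$.
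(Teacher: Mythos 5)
Your overall structure (invoke Corollary~\ref{cor:suff}, build a graph whose edges are the ``heavy'' pairs, show it gives a $(k,3)$-edge co-colouring, contradict Lemma~\ref{lem:sparse1}) is the same as the paper's. However, your handling of the case $|S|\geq 3$ has a genuine gap: you take the complete graph on $[r+1]\setminus S$, which has only $r+1-|S|$ vertices, and this can fall strictly below $10k$ as soon as $|S|\geq 3$ (for $|S|=3$ you get $r-2 \geq 10k-1$, not enough). Your proposed rescue --- that ``essentially the same argument from Step 1 extends the heavy subgraph to include all pairs inside $[r+1]\setminus\{s\}$ for any fixed $s\in S$'' --- is false. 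Concretely, if $|S|=3$, say $S=\{1,2,3\}$, $s=1$, and the pair is $\{2,3\}$, then the partition $\{2,3\},\{1\},\{4\},\ldots,\{r+1\}$ has only \emph{one} singleton ($\{1\}$) attaining the value $M$, so uniqueness of the maximum only gives $f\{2,3\}\neq M$; it does \emph{not} force $f\{2,3\}>M$. So the pair $\{2,3\}$ need not be an edge, and the subgraph on $[r+1]\setminus\{s\}$ need not be complete.

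The fix is to remove exactly \emph{two} elements of $S$ rather than all of $S$ or just one: fix $s_1,s_2\in S$ and consider pairs $\{u,v\}\subseteq [r+1]\setminus\{s_1,s_2\}$. In every such partition the two singletons $\{s_1\},\{s_2\}$ both carry value $M$, so the unique maximum is forced onto $f\{u,v\}$ and hence $f\{u,v\}>M$ for \emph{all} such pairs, giving a complete $(k,3)$-edge co-colourable graph on $r-1\geq 10k$ vertices in every case. This is precisely what the paper does by normalising $f\{r\}=f\{r+1\}=c$ and working with all pairs in $[r-1]$; it never needs to name the full set $S$ or split into subcases on $|S|$.
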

\begin{proof}
Consider any function $f : 2^{[r+1]} \to [k]$ such that, for any partition $A_1, \ldots, A_r$ of $[r+1]$, $f(A_1), \ldots, f(A_r)$ has a unique maximum. By Corollary \ref{cor:suff}, it is sufficient to show that $f\{1\}, \ldots, f\{r+1\}$ has a unique maximum.

Consider all pairs $\{ x, y \}$ such that the unique maximum of $f \{x, y \}, f \{1 \}, \ldots, f \{ r + 1 \}$ is given by $f \{x, y \}$. Construct a graph $G$ whose vertices are $[r+1]$, and whose edges are given by all such pairs $\{ x, y \}$. Observe that $f$ is a $(k, 3)$-edge co-colouring of $G$. This is because, for any disjoint edges $e, e', e''$, due to the partition formed out of $e, e', e''$, the empty set twice and all remaining singletons, there must exist a unique maximum cost edge among $e, e', e''$.

Now suppose that no unique maximum exists among $f\{1\}, \ldots, f\{r+1\}$. Without loss of generality, let $f\{r+1\} = f\{r\} = c$, and $f\{i\} \leq c$ for $i \in [r+1]$. Now note that, for any distinct $x, y \in [r-1]$, $f\{x, y\}$ must be the unique maximum of $f\{x, y\}, f\{1\}, \ldots, f\{r+1\}$. Thus the subgraph of $G$ given by restricting to $[r-1]$ is a complete graph --- in fact, as $r - 1 \geq 10k$, the subgraph $G$ given by restricting to $[10k]$ is a complete graph. Due to $f$ it is also $(k,3)$-edge co-colourable. But this contradicts Lemma \ref{lem:sparse1}. Thus we deduce that $f\{1\}, \ldots, f\{r+1\}$ has a unique maximum.
\end{proof}

\begin{theorem}\label{thm:minion-2}
For any $r \geq 2, k \geq 3$, $\M_{2, k}^{r+2} \to \M_{2, k}^r$.
\end{theorem}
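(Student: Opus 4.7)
The plan is to invoke Corollary~\ref{cor:suff} with the roles of $r$ and $r'$ swapped appropriately: to establish $\M_{2,k}^{r+2}\to\M_{2,k}^{r}$, by the corollary it suffices to show that any function $f:2^{[r]}\to[k]$ with the polymorphism property (for every partition $A_1,\ldots,A_{r+2}$ of $[r]$, the list $f(A_1),\ldots,f(A_{r+2})$ has a unique maximum) automatically satisfies that $f\{1\},\ldots,f\{r\}$ has a unique maximum.

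The key observation is that a partition of the $r$-element set $[r]$ into $r+2$ (possibly empty) blocks must contain at least two empty blocks. So I would simply consider the specific partition consisting of the $r$ singletons $\{1\},\ldots,\{r\}$ together with two copies of $\emptyset$. The polymorphism hypothesis applied to this partition tells us that the list
\[
f\{1\},\,f\{2\},\,\ldots,\,f\{r\},\,f(\emptyset),\,f(\emptyset)
\]
has a unique maximum. Since the value $f(\emptyset)$ appears twice, it cannot be this unique maximum; hence the maximum must be attained by some $f\{i\}$, and by uniqueness $f\{i\}>f\{j\}$ for every $j\in[r]\setminus\{i\}$ (and also $f\{i\}>f(\emptyset)$). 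Thus $f\{1\},\ldots,f\{r\}$ indeed has a unique maximum.

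Applying Corollary~\ref{cor:suff} (with the witness $\omega(f)=f(1,2)$ and the set-based reinterpretation of polymorphisms from Remark~\ref{rem:reinterpret}) then yields the desired minion homomorphism $\M_{2,k}^{r+2}\to\M_{2,k}^{r}$. There is essentially no obstacle here beyond spotting the right partition; the proof is a one-line application of the criterion once one realises that padding with empty blocks is legal and immediately forces the maximum off of $f(\emptyset)$.
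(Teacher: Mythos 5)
Your proposal is correct and matches the paper's proof exactly: both reduce to Corollary~\ref{cor:suff} and then apply the polymorphism condition to the partition $\{1\},\ldots,\{r\},\emptyset,\emptyset$, observing that the repeated value $f(\emptyset)$ cannot be the unique maximum.
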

\begin{proof}
Consider any function $f : 2^{[r]} \to [k]$ such that, for any partition $A_1, \ldots, A_{r+2}$ of $[r]$, $f(A_1), \ldots, f(A_{r+2})$ has a unique maximum. By Corollary \ref{cor:suff}, it is sufficient to show that $f\{1\}, \ldots, f\{r\}$ has a unique maximum. But consider the partition $\{1\}, \ldots, \{r\}, \emptyset, \emptyset$, and note that the largest value cannot be $f(\emptyset)$ (since $f(\emptyset)$ appears twice). Thus we deduce that one of $f\{1\}, \ldots, f\{r\}$ is the maximum, and furthermore that this maximum is strictly larger than all the other values in this sequence. Thus, $\M_{2,k}^{r+2} \to \M_{2,k}^r$.
\end{proof}

\begin{theorem}\label{thm:minion+2}
For $k \geq 3, r \geq k + 2$, $\M_{2, k}^r \to \M_{2, k}^{r+2}$.
\end{theorem}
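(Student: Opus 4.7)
The plan is to invoke Corollary~\ref{cor:suff} (with the natural choice $\omega(g) = g\{2\}$), reducing the theorem to the following claim: every function $f : 2^{[r+2]} \to [k]$ for which the values $f(A_1), \ldots, f(A_r)$ admit a unique maximum over every partition $A_1, \ldots, A_r$ of $[r+2]$ into $r$ (possibly empty) parts must itself have $f\{1\}, \ldots, f\{r+2\}$ with a unique maximum. I would argue by contradiction: let $M = \max_i f\{i\}$ and $S = \{i : f\{i\} = M\}$, so $|S| \geq 2$. Since $f$-values lie in $[k]$ and hence are at least $1$, the boundary case $M = 1$ automatically forces $f\{i\} = 1$ for every $i$, so $|S| = r+2$; I will handle this uniformly by allowing disjoint pairs to range over $[r+2]$ rather than $T := [r+2] \setminus S$.

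The heart of the argument is that for any two disjoint pairs $\{x, y\}, \{u, v\} \subseteq T$, the partition $\{x, y\}, \{u, v\}, \{c_1\}, \ldots, \{c_{r-2}\}$ (whose $r - 2$ remaining singletons absorb all of $S$) creates a tie at $M$ among the $S$-singletons, so the forced unique maximum must exceed $M$; consequently at least one of $f\{x, y\}, f\{u, v\}$ is strictly greater than $M$, and if both are, they must differ. Define a graph $G$ on $T$ whose edges are pairs $\{x, y\}$ with $f\{x, y\} > M$, coloured by their $f$-value: this is a $(k - M, 2)$-edge co-colouring of $G$, and the complement of $G$ in the pairs of $T$ contains no two disjoint edges, so by Lemma~\ref{lem:starCycle} is empty, a single edge, a star, or a triangle. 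In the first three cases $G$ contains $K_{|T|-1}$, which for $|S| = 2$ is $K_{r-1}$; since $r \geq k + 2$ and (assuming) $M \geq 2$ we have $r - 1 \geq k + 1 \geq (k-M) + 3$, and Corollary~\ref{cor:sparse2} applied with a palette of $k - M$ colours yields the contradiction. The $M = 1$ case runs in parallel on $[r+2]$ rather than $T$, producing a $K_r$ or $K_{r+1}$ in $G$ and contradicting $(k-1, 2)$-edge co-colourability.

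The main obstacle is the triangle subcase with $M = 2$ and $|S| = 2$, where the clique directly produced inside $G[T]$ has only $r - 2$ vertices --- short by one of the required $K_{(k-M)+3} = K_{k+1}$. To close this gap, I would use additional partitions to bring one vertex of $S$ into the clique. Concretely, for the bad triangle $\{u, v, w\} \subseteq T$ and any $x \in T \setminus \{u, v, w\}$, the partition $\{s_1, x\}, \{u, v\}, \{s_2\}, \{c_1\}, \ldots, \{c_{r-3}\}$ forces $f\{s_1, x\} > M$ whenever the bad edge $f\{u, v\} = M$ (via a new $M$-tie between $f\{u, v\}$ and $f\{s_2\}$); a complementary partition $\{u, x\}, \{v, w\}, \{s_1\}, \{s_2\}, \{c_1\}, \ldots$ handles the alternative where some bad edge has $f < M$, yielding $f\{u, x\} > M$. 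Combining these across the three triangle edges, one finds both a vertex in $S$ and a triangle vertex adjacent in $G$ to all of $T \setminus \{u, v, w\}$, and mutually adjacent to each other via the partition $\{s_1, u\}, \{v, w\}, \{s_2\}, \{c_1\}, \ldots$; this enlarges the clique from $K_{r-3}$ to $K_{r-1}$, which contains $K_{k+1}$, so Corollary~\ref{cor:sparse2} closes the contradiction. The case $|S| \geq 3$ admits more tie-forcing flexibility and follows by the same template.
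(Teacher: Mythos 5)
Your overall framing — reduce via Corollary~\ref{cor:suff} to a statement about set-functions, then build an auxiliary edge-coloured graph and invoke Corollary~\ref{cor:sparse2} — is the right one, and several of your intermediate claims (that the complement of your graph $G$ has no two disjoint edges, that two disjoint edges of $G$ must receive distinct $f$-values) are correct. But the execution diverges from the paper's at a crucial point, and the divergence introduces genuine gaps.

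The paper avoids the entire complement-of-$G$ analysis by a small but decisive trick: after fixing two indices $s_1, s_2$ with $f\{s_1\}=f\{s_2\}=c$, it builds the \emph{complete} graph on $[r+2]\setminus\{s_1,s_2\}$ and colours \emph{every} edge $\{x,y\}$ by $\max(f\{x,y\},c)$. The two-singleton tie at $c$ forces one of any two disjoint edges to carry an $f$-value strictly above $c$; passing through $x\mapsto\max(x,c)$ preserves the resulting inequality, so one gets a $(k-c+1,2)$-edge co-colouring of $K_r$ and a clean contradiction with Corollary~\ref{cor:sparse2} (the $M=1$ case is even simpler). By contrast, you restrict $G$ to edges with $f>M$, which means you must analyse the structure of the complement via Lemma~\ref{lem:starCycle}. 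That is where things break down. In the triangle subcase, your tie-forcing partitions only force $f\{s_1,x\}>M$ when the relevant triangle edge has $f$-value \emph{exactly} $M$: the partition $\{s_1,x\},\{u,v\},\{s_2\},\ldots$ yields a two-way tie at $M$ only if $f\{u,v\}=M$, and your ``complementary'' partition $\{u,x\},\{v,w\},\{s_1\},\{s_2\},\ldots$ merely re-derives $f\{u,x\}>M$, which holds anyway since $\{u,x\}$ is not a triangle edge. If all three triangle edges satisfy $f<M$ strictly, none of your auxiliary partitions produce the tie you need, and the clique extension stalls. The final sentence disposing of $|S|\geq 3$ (``follows by the same template'') is also not a proof: for $|S|=3$ and $M=2$ the star subcase already leaves you with a clique of size $r-2\geq k$ where $K_{k+1}$ is needed, so the same shortfall reappears and the ``more tie-forcing flexibility'' claim would have to be made precise.

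To repair the argument along your lines you would essentially be forced into the paper's device: colour all edges of the complete graph on $[r+2]\setminus\{s_1,s_2\}$ by $\max(f\{\cdot,\cdot\},M)$ rather than discarding the low-value edges, which collapses your star/single-edge/triangle case split into a single application of Corollary~\ref{cor:sparse2}.
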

\begin{proof}
Let $f : 2^{[r+2]} \to [k]$ be a function such that, for any partition $A_1, \ldots, A_r$ of $[r+2]$, $f(A_1), \ldots, f(A_r)$ has a unique maximum. By Corollary \ref{cor:suff}, it is sufficient to show that $f\{1\}, \ldots, f\{r+2\}$ has a unique maximum. Assume for contradiction that it does not. Then we have two cases depending on the maximum cost value among them.

\begin{description}
\item[Maximum is 1.] In this case, construct a complete graph $G$ on $[r+2]$, and assign edge $\{x, y\}$ cost $f\{x, y\}$. Observe that this $G$ is $(k,2)$-edge co-colourable, since, for any disjoint edges $\{x, y\}, \{z, t\}$, due to partition $\{x, y\}, \{z, t\}, \{1\}, \ldots, \{r+2\}$, one of the edges is assigned the unique maximum value by $f$, and thus the edges are assigned distinct costs. This contradicts Corollary~\ref{cor:sparse2}, as $r + 2 \geq k + 3$, and thus a subgraph of $G$ with $k + 3$ vertices is a $(k,2)$-edge co-colourable copy of $K_{k+3}$.

\item[Maximum is not 1.] In this case, suppose the maximum is $c > 1$, and suppose that $f\{r+1\} = f\{r+2\} = c$. Construct a complete graph $G$ on $[r]$, where edge $\{x, y\}$ is assigned cost $\max(f\{x, y\}, c)$. Observe that this $G$ is $(k-1, 2)$-edge co-colourable, since for any disjoint edges $\{x, y\}, \{z, t\}$, due to the partition $\{x, y\}, \{z, t\}, \{1\}, \ldots, \{r+2\}$, these edges are assigned distinct costs, one of which is strictly greater than $c$ (and thus these costs remain distinct after the operation $x \mapsto \max(x, c)$). This contradicts Corollary~\ref{cor:sparse2}, as $r \geq k + 2$,
and thus a subgraph of $G$ with $k + 2$ vertices is a $(k-1, 2)$-edge co-colourable copy of $K_{k+2}$.
\end{description}
Thus, since a contradiction is found in all cases, $\M_{2, k} \to \M_{2,k}^{r+2}$.
\end{proof}

\begin{theorem}\label{thm:impossible1}
For $k \geq 3$, $\M_{2, k}^{k} \not \to \M_{2, k}^{k+1}$.
\end{theorem}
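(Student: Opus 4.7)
The plan is to apply Lemma~\ref{lem:suff}: it suffices to produce a polymorphism $f \in (\M_{2,k}^k)^{(k+1)}$ whose $k+1$ one-variable restrictions $f_1, \ldots, f_{k+1}$ all coincide as a single binary polymorphism $g$. For such an $f$, any candidate $\omega$ yields $\omega(f_1) = \cdots = \omega(f_{k+1})$, which cannot have a unique maximum, so no $\omega$ satisfies the condition of Lemma~\ref{lem:suff} and hence $\M_{2,k}^k \not\to \M_{2,k}^{k+1}$.

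Viewing $f$ as a function $2^{[k+1]} \to [k]$ via Remark~\ref{rem:reinterpret}, I would define $f(S) = 1$ if $|S| \leq 1$, $f(S) = k$ if $|S| \geq k$, and $f(S) = \max(\max S, 3) - 1$ if $2 \leq |S| \leq k-1$. The values $f(\emptyset)$, $f(\{i\})$, $f([k+1] \setminus \{i\})$, and $f([k+1])$ are then independent of $i$, so each $f_i$ agrees with the binary polymorphism $g$ defined by $g(\emptyset) = g(\{2\}) = 1$ and $g(\{1\}) = g(\{1,2\}) = k$. The formula for middle-sized subsets is a direct extension of the $(k,2)$-edge co-colouring of $K_{k+2}$ from the remark after Corollary~\ref{cor:sparse2}, as advertised there.

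The main obstacle is verifying that $f$ is a polymorphism, that is, that for every partition $A_1, \ldots, A_k$ of $[k+1]$ the sequence $(f(A_1), \ldots, f(A_k))$ has a unique maximum. I would split into two cases. First, if some part $A_j$ has size $\geq k$, then $f(A_j) = k$ and every other part is a singleton or empty (hence has $f$-value $1$), giving $k$ as the unique maximum. Second, if every part has size $\leq k-1$, a counting argument forces at least one part to have size $\geq 2$ (else $k+1$ elements could not fit into $k$ singleton or empty parts). The crux in this second case is that any two distinct parts $A, B$ of size $\geq 2$ yield distinct $f$-values: being disjoint forces $\max A \neq \max B$, and a short case analysis on whether these maxima exceed $3$ (using that $[3]$ cannot contain two disjoint subsets of size $\geq 2$) shows $\max(\max A, 3) \neq \max(\max B, 3)$. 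The largest $f$-value among parts of size $\geq 2$ is therefore attained uniquely and strictly exceeds $1$, so it is the unique maximum of the whole sequence.
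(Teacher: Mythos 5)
Your proposal is correct and essentially identical to the paper's argument: you construct the very same function $f$ (your $\max(\max S, 3) - 1$ is just a rewriting of the paper's $\max_{x\in S}\max(2,x-1)$), observe the collapse $f_1 = \cdots = f_{k+1}$ so that Lemma~\ref{lem:suff} rules out any $\omega$, and verify the polymorphism condition via the same two-case analysis on part sizes, with the disjointness-plus-pigeonhole argument on $[3]$ handling the crux.
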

\begin{proof}
Let $v(x) = \max(2, x - 1)$. Let $f : 2^{[k+1]} \to [k]$ be a function that maps the empty set and singletons to 1, sets of size $k$ or $k + 1$ to $k$, and any other set $S$ to $\max_{x \in S} v(x)$. Observe that $f_1 = \ldots = f_{k+1}$ (all of them map $(1, 1)$ and $(1, 2)$ to 1, and $(2, 1)$ and $(2, 2)$ to $k$). Thus if $f$ is a polymorphism, by Lemma~\ref{lem:suff}, and using the interpretation of Remark~\ref{rem:reinterpret}, our conclusion follows.

To see why $f$ is a polymorphism consider any partition $A_1, \ldots, A_k$ of $[k+1]$. If the partition has one part of size $k$ or $k+1$, then it has only one such part, and the other parts are either singletons or empty sets --- thus in this case $f(A_1), \ldots, f(A_k)$ contains one $k$ and $k - 1$ ones. Otherwise, observe that the partition must contain at least one part of size 2 or higher. The cost of this part will be the maximum element in the part minus 1, or 2, whichever is higher. All other parts of size 1 or 0 are assigned 1. The only way in which this partition could lack a unique maximum element is if there are two parts of size 2 or above, both of whose maximum values through $v$ are at most 2. This is impossible by the pigeonhole principle, since there are only 3 elements mapped by $v$ to 1 or 2.
\end{proof}

\begin{theorem}\label{thm:impossible2}
For $k \geq 3$, $\M_{2, k}^{k + 1} \not \to \M_{2, k}^k$.
\end{theorem}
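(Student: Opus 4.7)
My plan is to mirror the proof strategy of Theorem~\ref{thm:impossible1}: invoke Lemma~\ref{lem:suff} with $r = k$ and exhibit some polymorphism $f \in (\M_{2, k}^{k+1})^{(k)}$ whose binary marginals $f_1, \ldots, f_k$ all coincide. Once such an $f$ is in hand, for any choice of $\omega : (\M_{2,k}^{k+1})^{(2)} \to [k]$ the sequence $\omega(f_1), \ldots, \omega(f_k)$ is constant and so lacks a unique maximum, so Lemma~\ref{lem:suff} rules out a minion homomorphism. Identifying $f$ with a function $2^{[k]} \to [k]$ via Remark~\ref{rem:reinterpret}, the marginal-equality condition unfolds to two symmetry requirements: $f(\{i\})$ is constant in $i$ and $f([k] \setminus \{i\})$ is constant in $i$.

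Concretely, I would set $f(\emptyset) = 2$, $f(\{i\}) = 1$, $f([k] \setminus \{i\}) = k$, $f([k]) = k$, and, for the middle subsets $2 \leq |S| \leq k - 2$, $f(S) = \max(\max S,\, 3)$; the middle clause is vacuous when $k = 3$. The marginal-equality condition is then immediate by inspection, and the range is clearly contained in $[k]$. The substantive task is to verify that $f$ is a polymorphism of $\M_{2, k}^{k+1}$, i.e., that for every partition $A_1, \ldots, A_{k+1}$ of $[k]$ the sequence $f(A_1), \ldots, f(A_{k+1})$ has a unique maximum.

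I would organise the verification by case analysis on the largest non-empty part; since $\sum_j |A_j| = k < k+1$ there is always at least one empty part. The easy cases are: if some part equals $[k]$, the shape is forced to $[k], \emptyset, \ldots, \emptyset$ with unique maximum $f([k]) = k$; if some part equals $[k] \setminus \{i\}$, the shape is forced to $[k]\setminus\{i\}, \{i\}, \emptyset, \ldots, \emptyset$ with unique maximum $k$; and if all non-empty parts are singletons the shape is $\{1\}, \ldots, \{k\}, \emptyset$, giving sequence $(1, \ldots, 1, 2)$ with unique maximum $2$. The interesting case is when at least one non-empty part is middle; co-singletons cannot then coexist with middle parts (their sizes would already sum above $k$), so the non-empty parts consist of some middle sets $S_1, \ldots, S_t$ together with some singletons. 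The number of empties is $m = 1 + \sum_i |S_i| - t \geq 1 + t \geq 2$, so $f(\emptyset) = 2$ repeats and cannot be the unique maximum; since each $f(S_i) \geq 3$ strictly exceeds the singleton value $1$ and the empty value $2$, the unique maximum must lie among the $f(S_i)$.

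The heart of the case analysis is showing that the $f(S_i)$ have a unique maximum when $t \geq 2$. Pairwise disjointness of the $S_i$ forces the elements $\max S_i \in [k]$ to be pairwise distinct. The only middle subset with $\max \leq 2$ is $\{1, 2\}$, so at most one $S_i$ triggers the clipping $f(S_i) = 3$; every other $S_j$ lies in $\{3, \ldots, k\}$ with $|S_j| \geq 2$, so $\max S_j \geq 4$ and $f(S_j) = \max S_j$. These values are pairwise-distinct integers in $\{4, \ldots, k\}$, each strictly above the possible clipped value $3$, so a unique $S_j$ realises the maximum. The hardest design decision here is the clipping threshold: without the $\max(\cdot,\, 3)$, the partition $\{1, 2\}, \{3\}, \ldots, \{k\}, \emptyset, \emptyset$ would give $f\{1, 2\} = 2$ colliding with two copies of $f(\emptyset) = 2$, breaking uniqueness; clipping at exactly $3$ is the least disruption that avoids this while still leaving the $f(S_i)$ distinguishable across disjoint middle parts via the disjointness-of-maxima observation above.
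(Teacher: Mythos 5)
Your construction and strategy coincide with the paper's: you use the same witness $f \colon 2^{[k]} \to [k]$ and the same partition case analysis to verify that $f$ is a polymorphism. In fact your definition is the more careful one --- the paper's phrase ``sets of size $k$ or $k{+}1$ to $k$'' is a leftover from the proof of Theorem~\ref{thm:impossible1} (where the ground set is $[k{+}1]$) and should read ``size $k{-}1$ or $k$''; your explicit clause $f([k] \setminus \{i\}) = k$ is exactly what is needed to make $f_1 = \dots = f_k$.

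The one step that does not hold as written is the claim that, once at most one $S_i$ triggers the clipping, ``every other $S_j$ lies in $\{3, \ldots, k\}$''. That containment is forced only if $\{1,2\}$ itself occurs among the middle parts; otherwise a part such as $\{1,3\}$ is not clipped (its value is $3$) yet intersects $\{1,2\}$. The conclusion you want --- that the values $f(S_i)$ are pairwise distinct --- is nevertheless true and follows from the facts you already set up: the maxima $\max S_i$ are pairwise distinct, so two middle parts can share an $f$-value only if that value is the clipped $3$, which forces both to be subsets of $\{1,2,3\}$ of size at least $2$, impossible by disjointness. This is precisely the paper's pigeonhole step (at most $3$ elements of $[k]$ map to $\le 3$ under $v$). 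With that sentence repaired, your proof is complete and matches the paper's.
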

\begin{proof}
Let $v(x) = \max(3, x)$. Let $f : 2^{[k]} \to [k]$ be a function that maps singletons to 1, the empty set to 2, sets of size $k$ or $k + 1$ to $k$, and any other set $S$ to $\max_{x \in S} v(x)$. Observe that $f_1 = \ldots = f_k$ (all of them map $(1, 1)$ to 2, $(1, 2)$ to 1, and $(2, 1)$ and $(2, 2)$ to $k$). Thus if $f$ is a polymorphism, by Lemma~\ref{lem:suff}, and using the interpretation of Remark~\ref{rem:reinterpret}, our conclusion follows.

To see why $f$ is a polymorphism consider any partition $A_1, \ldots, A_{k+1}$ of $[k]$. If the partition has one part of size $k$ or $k+1$, then it has only one such part, and the other parts are either singletons or empty sets --- thus in this case $f(A_1), \ldots, f(A_{k+1})$ contains one $k$ and $k$ ones.
Otherwise, observe that the partition must either contain at least one part of size 2 or higher, or it must contain the empty set and all singletons. In the latter case, the empty set supplies the unique maximum. In the former case, the cost of a non-singleton part will be the maximum element in the part or 3, whichever is higher. All parts of size 1 or 0 are assigned 1 or 2. The only way in which this partition could lack a unique maximum element is if there are two parts of size 2 or above, both of whose maximum values through $v$ are at most 3. This is impossible, by the pigeonhole principle, since there are at most 3 elements mapped by $v$ to 1, 2 or 3.
\end{proof}

Finally, we will now combine Theorems~\ref{thm:minion+1}--\ref{thm:impossible2} to establish Theorem~\ref{thm:minions}.

\begin{proof}[Proof of Theorem~\ref{thm:minions}]
Fix $k \geq 3$. Consider any $r \geq k+2$. Letting $r'$ be any number greater or equal to $10k$ with the same parity as $r$, note that, by applying Theorem~\ref{thm:minion+2}, Theorem~\ref{thm:minion+1} and Theorem~\ref{thm:minion-2}, we have
\[
\M_{2,k}^r \to \M_{2,k}^{r+2} \to \ldots \to
\M_{2,k}^{r'-2}\to\M_{2,k}^{r'} \to \M_{2,k}^{r'+1} \to
\M_{2,k}^{r'-1} \to \ldots \to \M_{2,k}^{r+3} \to \M_{2,k}^{r+1}.
\]
Thus, for $r \geq k+2$, $\M_{2,k}^r \to \M_{2, k}^{r+1}$. Furthermore, by applying Theorem~\ref{thm:minion-2},
\[
\M_{2, k}^r \to \M_{2,k}^{r-2} \to \ldots,
\]
and
\[
\M_{2, k}^r \to \M_{2,k}^{r+1} \to \M_{2,k}^{r-1} \to \M_{2,k}^{r-3} \ldots.
\]
Thus we deduce that, for any $r'$ such that $2 \leq r' \leq r$, $\M_{2,k}^r \to \M_{2,k}^{r'}$. This supplies all the minion homomorphisms required by Theorem~\ref{thm:minions}.

Now, consider any $2 \leq r < k + 2$. Note that by Theorem~\ref{thm:impossible1} and Theorem~\ref{thm:impossible2}, we know that $\M_{2,k}^{k+1} \not \to \M_{2,k}^k$ and $\M_{2,k}^k \not \to \M_{2,k}^{k+1}$. Note that $\M_{2,k}^{k+2} \to \M_{2, k}^k$ and $\M_{2,k}^{k+2} \to \M_{2,k}^{k+1}$. Thus by contrapositive, $\M_{2, k}^k \not \to \M_{2, k}^{k+2}$ and $\M_{2, k}^{k+1} \not \to \M_{2, k}^{k+2}$. Now, depending on the parity of $r$, by Theorem~\ref{thm:minion-2}, $\M_{2, k}^k \to \M_{2,k}^r$ or $\M_{2, k}^{k+1} \to \M_{2, k}^r$. Thus in either case, by contrapositive, $\M_{2, k}^r \not \to \M_{2, k}^{k+1}$. This supplies all the homomorphism nonexistence proofs required by Theorem~\ref{thm:minions}.
\end{proof}

\section{Conclusions}

The question about the complexity of $\PCSP(\LO^3_2,\LO^3_k)$ for constant $k\geq 3$ 
raised in~\cite{Barto21:stacs} stays open. More generally, the complexity of $\PCSP(\LO^r_\ell,\LO^r_k)$ is open except for the hardness results obtained in this paper:
We established \NP-hardness for every constant $2\leq k\leq \ell$ and every constant uniformity $r\geq \ell-k+4$. 

Minion homomorphisms (and lack thereof) between the polymorphism minions $\M_{2, k}^r$ for various values of $r$ have interesting implications for the complexity of PCSPs more broadly, beyond our hardness results. 
In particular, if one were to prove \NP-hardness of LO 2- vs. $k$-colourings on $r$-uniform hypergraphs for $r<k+2$ , then our results imply that this 
\NP-hardness does \emph{not} follow from \NP-hardness of the same problem with uniformity at least $k+2$ via minion homomorphisms and thus in particular cannot be obtained from the latter problem via ``gadget reductions''~\cite{BBKO21}. This is in contrast to the case of (non-promise) CSPs, where it is known~\cite{BOP18} (cf. also~\cite{BKW17})\footnote{\cite{BKW17} uses the terminology of height~$1$ identities.} that all 
\NP-hardness can be shown using minion homomorphisms.\footnote{This would not be the first occurrence of this phenomenon in the context of PCSPs; a recent example of the same phenomenon comes from~\cite{WZ20} for the problem of approximate graph colouring.}

\medskip
Going beyond the realm of fixed-template PCSPs~\cite{BBKO21} (which limits the number of colours by a constant), what is the smallest function $k(n)$ for which $\PCSP(\LO^3_2,\LO^3_{k(n)})$ is solvable efficiently? There is no clear reason to believe that positive result from the present paper with $k(n)=O(\sqrt[3]{n\log\log n/\log n})$ is optimal.

\paragraph{Acknowledgements}
We thank the anonymous reviewers of both the ICALP version~\cite{NZ22:icalp} and this full version for their comments. 
We also thank  D\"om\"ot\"or P\'alv\"olgyi
for informing us that LO colourings have been studied under the name of \emph{unique maximum} colourings.

{\small
\bibliographystyle{plainurl}
\bibliography{nz}
}

\end{document}